\title{Spectral Methods from Tensor Networks}
\renewcommand*{\@fnsymbol}[1]{\ensuremath{\ifcase#1\or *\or \ddagger\or
    \mathsection\or \mathparagraph\or \|\or **\or \dagger\dagger
    \or \ddagger\ddagger \else\@ctrerr\fi}}
\author[1,2]{Ankur Moitra%
\thanks{
 Email: {\tt moitra@mit.edu}. This work was supported in part by NSF CAREER Award CCF-1453261, NSF Large CCF-1565235, a David and Lucile Packard Fellowship, an Alfred P. Sloan Fellowship and an ONR Young Investigator Award.}%
}
\author[3]{Alexander S.\ Wein%
\thanks{Email: {\tt awein@cims.nyu.edu}. This work was supported in part by NSF grant DMS-1712730 and by the Simons Collaboration on Algorithms and Geometry.}%
}
\affil[1]{Department of Mathematics, Massachusetts Institute of Technology}
\affil[2]{Computer Science and Artificial Intelligence Lab, Massachusetts Institute of Technology}
\affil[3]{Department of Mathematics, Courant Institute of Mathematical Sciences, New York University}
\begin{document}

\maketitle

\begin{abstract}
A tensor network is a diagram that specifies a way to ``multiply'' a collection of tensors together to produce another tensor (or matrix). Many existing algorithms for tensor problems (such as tensor decomposition and tensor PCA), although they are not presented this way, can be viewed as spectral methods on matrices built from simple tensor networks. In this work we leverage the full power of this abstraction to design new algorithms for certain continuous tensor decomposition problems. 

An important and challenging family of tensor problems comes from orbit recovery, a class of inference problems involving group actions (inspired by applications such as cryo-electron microscopy). Orbit recovery problems over finite groups can often be solved via standard tensor methods. However, for infinite groups, no general algorithms are known. We give a new spectral algorithm based on tensor networks for one such problem: continuous multi-reference alignment over the infinite group $\mathrm{SO}(2)$. Our algorithm extends to the more general heterogeneous case.
\end{abstract}

\newpage

\section{Introduction}

Algorithms for decomposing low-rank tensors have had a wide range of applications in machine learning and statistics. They can be leveraged to give efficient algorithms for phylogenetic reconstruction \cite{mosselroch}, topic modeling \cite{anandkumartopic}, community detection \cite{anandkumarcommunity}, independent component analysis \cite{moitra-notes} and learning various mixture models \cite{hsukakade, jainoh}. However there are important families of problems where the low-order moment tensors are known to achieve statistically-optimal rates of estimation but there are no known \emph{efficient} algorithms for finding the parameters from the moments. 

The familiar symmetric third-order tensor decomposition problem asks: Given a $p \times p \times p$ low-rank tensor of the form
$$T = \sum_{i =1}^r a_i^{\otimes 3}$$
can we recover the vectors $a_1,\ldots,a_r \in \mathbb{R}^p$? When $r \leq p$ it is called the {\em undercomplete} case and when $r > p$ it is called the {\em overcomplete} case. In the undercomplete case, Jennrich's algorithm (see \cite{moitra-notes}) gives a polynomial time algorithm based on generalized eigendecompositions that works provided that the vectors $a_1,\ldots,a_r$ are linearly independent. In the overcomplete case, a line of work has culminated in a polynomial time algorithm that works when the vectors $a_i$ are random (i.i.d.\ Gaussian) and $r \lesssim p^{3/2}$ \cite{GM-tensor-sos,HSSS-fast-sos,MSS-tensor-sos}. In applications, the vectors $a_1,\ldots,a_r$ represent the parameters of a model we would like to learn and $T$ represents moments of the distribution specified by the model whose entries we can estimate from samples.

However, in some applications the parameters are not uniquely defined, except up to equivalence under some {\em continuous group action}. This leads to a new sort of problem that we call {\em orbit tensor decomposition} in which we want to recover a vector $\theta \in \mathbb{R}^p$ given a tensor of the form
$$T = \int_{A \in \mathcal{A}} (A\theta)^{\otimes 3} dA$$
where $\mathcal{A}$ is a known, possibly infinite, set of $p \times p$ matrices (equipped with a measure over which to integrate). We assume furthermore that $\mathcal{A}$ possesses a particular group symmetry which causes nonuniqueness of the solution: $\theta$ and $A\theta$ are equally-good solutions for any $A \in \mathcal{A}$. There are important real-world applications such as cryo-electron microscopy (cryo-EM) \cite{AdrDubLep84,ss-cryo,nog-cryo} and multi-reference alignment (MRA) \cite{ZwaHeiGel03,PitZurAmo05,mra-sdp,AbbPerSin17,BanRigWee17,PerWeeBan17,mra-bispectrum,spectral-mra} where these sort of tensor decomposition problems arise when using the method of moments. Here $A$ is a random rotation of a two- or three-dimensional signal whose orientation we cannot control when we are measuring it. Despite considerable interest in such problems there are few algorithms with provable guarantees, in large part because working with the symmetries of the group is challenging {\em algorithmically}.

We will focus on the \emph{continuous multi-reference alignment} (continuous MRA) problem which can be described as follows. The goal is to recover a signal $\theta$ which is a real-valued function on the unit circle in $\mathbb{R}^2$. We assume $\theta$ is band-limited so that in the Fourier basis we can think of $\theta$ as a finite-dimensional vector $\theta \in \mathbb{R}^p$. The compact group $G = \mathrm{SO}(2)$ (rotations in the plane) acts on $\theta$ by rotating the signal around the unit circle. For $g \in G$ and $\theta \in \mathbb{R}^p$ we denote the result of the rotation as $g \cdot \theta \in \mathbb{R}^p$. Now we observe many independent samples of the form $y_i = g_i \cdot \theta + \xi_i$ where $g_i$ is a uniformly random element of $\mathrm{SO}(2)$ and $\xi_i$ is i.i.d.\ Gaussian noise. {\em In other words, we observe many copies of the true signal that are both noisy and randomly-rotated.} It is known that for this problem (and a large class of similar problems), optimal sample complexity in the large-noise limit is achieved by the method of moments \cite{AbbPerSin17,BanRigWee17,orbit-recovery,group-channel}. First we use the samples to estimate the third moment $T = \int_{g \in G} (g \cdot \theta)^{\otimes 3} dg$. Recovering $\theta$ (up to equivalence under group action) is now an instance of the orbit tensor decomposition problem from above.

Existing tensor methods fail because (i) $T$ is no longer low-rank. In fact $T$ has an infinite number of components and when $\theta$ is generic would plausibly have essentially full rank. We can no longer hope to decompose $T$ by finding a rank-one term that we can subtract off and lower the rank. Instead, we need to find a continuous collection of rank-one tensors at once! (ii) We can only hope to recover the {\em orbit} of  $\theta$, i.e.\ to recover a vector that (approximately) lies in the orbit $\{g \cdot \theta\;:\;g \in G\}$. This symmetry implies that any finite-rank decomposition of the tensor cannot be unique, which seems to rule out many spectral methods such as Jennrich's algorithm (whose analysis relies on having a unique decomposition).

We remark that for \emph{discrete multi-reference alignment} (discrete MRA) where $G$ is a {\em finite} group of rotations of order $p$, these issues do not arise. In fact, the samples $y_i$ can be thought of as coming from a mixture of $p$ spherical Gaussians where the centers are related (in that they are rotations of each other). By ignoring these interrelationships and learning the distribution as a mixture of spherical Gaussians via tensor decomposition, it is possible to obtain algorithms with provable guarantees \cite{PerWeeBan17}. In contrast, continuous MRA is a continuous mixture model where we crucially must exploit the relationship between the (infinitely-many) centers. The continuous nature of our problem poses a fundamental challenge for applying tensor methods. To overcome this, we will first randomly break the symmetry and then apply a spectral method that resembles a tailor-made variant of the tensor power method.

In this paper, we leverage this methodology to give a polynomial-time algorithm for \emph{list recovery} for the continuous MRA problem and for its so-called \emph{heterogeneous} generalization in which there are multiple true signals $\theta^1,\ldots,\theta^K \in \mathbb{R}^p$ and each sample comes from a random one of them. Here \emph{list recovery} means that we output a list of polynomially-many candidate vectors such that every true signal is well correlated with at least one candidate. To achieve this, we need to delicately exploit symmetries in the orbit of each $\theta^k$, but cope with the fact that the orbits of different components are unrelated. More broadly, our success gives us hope that our methodology for designing tensor spectral methods can be adapted to a wide variety of problems that have thus far resisted attack. As in work on overcomplete tensor decomposition \cite{GM-tensor-sos,HSSS-fast-sos,MSS-tensor-sos}, our analysis assumes that the signals $\theta^k$ are drawn at random (i.i.d.\ Gaussian). To the best of our knowledge, our algorithm provides the first polynomial-time solution to an orbit recovery problem over an infinite group, other than a few special cases that admit \emph{ad hoc} closed-form solutions (see Section~\ref{sec:freq-march}). In particular, we give the first polynomial-time solution to a \emph{heterogeneous} orbit recovery problem over an infinite group.

We now motivate and describe our approach for the continuous MRA problem. Many existing methods for overcomplete tensor decomposition are based on the idea of finding a vector $v \in \mathbb{R}^p$ that maximizes the cubic form $\langle T,v^{\otimes 3} \rangle = \langle \sum_{i=1}^r a_i^{\otimes 3},v \rangle$. If the $a_i$ are random, it can be shown that approximately, the maximizers of $\langle T,v^{\otimes 3} \rangle$ are $a_1,\ldots,a_r$ provided $r \lesssim p^{3/2}$ \cite{GM-tensor-sos}. A popular heuristic for optimizing $\langle T,v^{\otimes 3} \rangle$ over unit vectors is the tensor power method, in which we iteratively update $v \in \mathbb{R}^p$ according to
\begin{equation}\label{eq:power-method}
v_i \leftarrow \sum_{jk} T_{ijk} v_j v_k.
\end{equation}
Similarly to the matrix power method, the intuition here is that by ``multiplying'' the tensor by itself, we are repeatedly amplifying the signal without having the noise build up too much. There are rigorous guarantees for this non-convex method for random overcomplete tensor decomposition, but require a very warm start \cite{tensor-power-1,tensor-power-2}.

Perhaps fortuitously, unlike the matrix case there are many different ways that one can ``multiply'' third-order tensors together to create other ``power methods.'' A \emph{tensor network} is a diagram that specifies a recipe for multiplying a collection of tensors together. This concept has been used in areas such as quantum physics \cite{tensor-net}. Tensor network notation is illustrated in Figure~\ref{fig:net-intro} and will be central to our work. One of our key observations is that, although they were not explained this way, many existing tensor methods in the literature can be re-interpreted as spectral methods on matrices derived from tensor networks. In particular, the spectral method of \cite{HSSS-fast-sos} for random overcomplete tensor decomposition is based on the tensor network shown in Figure~\ref{fig:net-intro}(c); this method is a starting point for our work. In Appendix~\ref{app:tensor-pca} we catalog related results for the \emph{tensor PCA} problem and how they can also be described as coming from certain tensor networks (which are depicted in Figure~\ref{fig:tensor-pca}).

\begin{figure}[!ht]\centering
\begin{minipage}{0.8\textwidth}\centering
\includegraphics[width=0.8\textwidth]{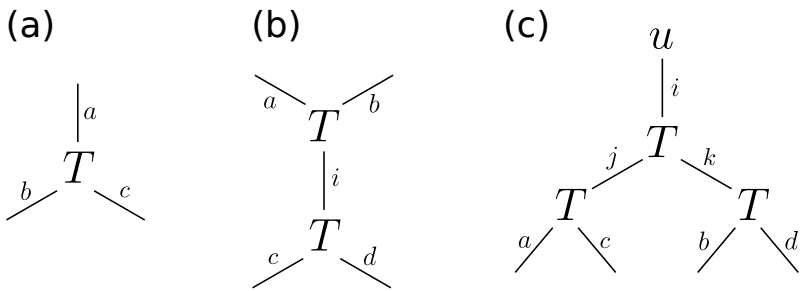}
\caption{An introduction to tensor network notation. {\bf (a)} A single copy of the third-order tensor $T$ (with entries $T_{abc}$) has three \emph{legs}, one for each mode. {\bf (b)} Two copies of $T$ connected by contracting (summing over) the index $i$. The result is the fourth-order tensor $B_{abcd} = \sum_i T_{abi}T_{cdi}$. {\bf (c)} The spectral method in \cite{HSSS-fast-sos} uses the $(\{a,b\},\{c,d\})$-flattening of this tensor network (which is a $p^2 \times p^2$ matrix). We explain this in more detail in Section~\ref{sec:techniques}. Here $u$ is a random vector.}
\label{fig:net-intro}
\end{minipage}
\end{figure}

The tensor network abstraction gives us freedom to explore more complicated tensor networks, which helps us cope with the symmetries of continuous MRA. Ultimately we will use the tensor network in Figure~\ref{fig:net-9}. We will show that with decent probability over a random tensor $u$, the top eigenvector of the associated matrix is close to a vector in the orbit of $\theta$. To accomplish this, we will employ the trace moment method which, in our setting, gives us a way to spectrally bound a certain noise term by counting certain \emph{valid} labelings of the edges of a much larger tensor network that is obtained by stringing together many copies of Figure~\ref{fig:net-9}. The constraints imposed on a valid labeling are dictated by the $\mathrm{SO}(2)$ group structure.

We remark that our tensor $T$ is quite sparse in the Fourier domain. (This is in stark contrast to the situation in random overcomplete tensor decomposition or tensor PCA.) In particular, $T$ is $p \times p \times p$ but only supported on the $\sim p^2$ entries $T_{ijk}$ for which $i+j+k=0$. This comes from the fact that due to integrating over the group action, $T$ is a projection of $\theta^{\otimes 3}$ onto a particular subspace (namely the span of the degree-three invariant polynomials; see \cite{orbit-recovery}). The above sparsity pattern influences the combinatorics of the trace moment method. In particular, our valid labelings (discussed above) require that the three incoming legs to each copy of the tensor sum to zero. This is a rather different sort of combinatorics problem than typically arises in applications of the trace moment method to random matrix theory, and at a high-level, is why we need such a complex tensor network. In Appendix~\ref{app:design}, we discuss in more detail the considerations behind choosing the particular tensor network in Figure~\ref{fig:net-9}.

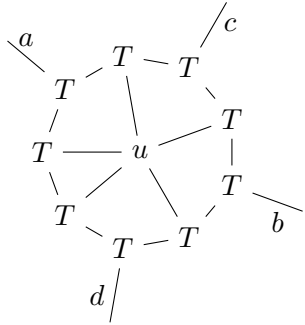
\begin{figure}\centering
\begin{minipage}{0.8\textwidth}\centering
\begin{tikzpicture}
\def\ra{1.3}  
\def\rad{2.3}  
\node (u) at (0,0) {$u$};
\node (T1) at ({-\ra*cos(1*360/9)},{\ra*sin(1*360/9)}) {$T$};
\node (T2) at ({-\ra*cos(2*360/9)},{\ra*sin(2*360/9)}) {$T$};
\node (T3) at ({-\ra*cos(3*360/9)},{\ra*sin(3*360/9)}) {$T$};
\node (T4) at ({-\ra*cos(4*360/9)},{\ra*sin(4*360/9)}) {$T$};
\node (T5) at ({-\ra*cos(5*360/9)},{\ra*sin(5*360/9)}) {$T$};
\node (T6) at ({-\ra*cos(6*360/9)},{\ra*sin(6*360/9)}) {$T$};
\node (T7) at ({-\ra*cos(7*360/9)},{\ra*sin(7*360/9)}) {$T$};
\node (T8) at ({-\ra*cos(8*360/9)},{\ra*sin(8*360/9)}) {$T$};
\node (T9) at ({-\ra*cos(9*360/9)},{\ra*sin(9*360/9)}) {$T$};
\coordinate (a) at ({-\rad*cos(1*360/9)},{\rad*sin(1*360/9)});
\coordinate (c) at ({-\rad*cos(3*360/9)},{\rad*sin(3*360/9)});
\coordinate (b) at ({-\rad*cos(5*360/9)},{\rad*sin(5*360/9)});
\coordinate (d) at ({-\rad*cos(7*360/9)},{\rad*sin(7*360/9)});
\draw (T1) -- (T2);
\draw (T2) -- (T3);
\draw (T3) -- (T4);
\draw (T4) -- (T5);
\draw (T5) -- (T6);
\draw (T6) -- (T7);
\draw (T7) -- (T8);
\draw (T8) -- (T9);
\draw (T9) -- (T1);
\draw (T1) -- (a) node [midway,above] {$a$};
\draw (T2) -- (u);
\draw (T3) -- (c) node [midway,right] {$c$};
\draw (T4) -- (u);
\draw (T5) -- (b) node [midway,below] {$b$};
\draw (T6) -- (u);
\draw (T7) -- (d) node [midway,left] {$d$};
\draw (T8) -- (u);
\draw (T9) -- (u);
\end{tikzpicture}
\caption{In this paper we will analyze a spectral method on the $p^2 \times p^2$ matrix given by the $(\{a,b\},\{c,d\})$-flattening of the tensor network shown here. Here $u$ is a random order-5 tensor.}
\label{fig:net-9}
\end{minipage}
\end{figure}

\section{Orbit recovery problems}

\subsection{Problem statement}
\label{sec:prob-statement}

We now formally define \emph{orbit recovery} problems including continuous MRA. These are a class of problems for which the method of moments gives rise to an \emph{orbit tensor decomposition} problem.

Let $G$ be a \emph{compact group}. We do not formally define the notion of a compact group here, but some examples of interest include: (i) any finite group, such as the symmetric group $S_L$ (permutations of $\{1,\ldots,L\}$) and the cyclic group $\mathbb{Z}/L$, (ii) 2-dimensional rotations $\mathrm{SO}(2)$, and (iii) 3-dimensional rotations $\mathrm{SO}(3)$.

Let $G$ \emph{act linearly} on $\mathbb{R}^p$. A linear action means that each group element $g \in G$ has an associated matrix $\rho(g) \in \mathbb{R}^{p \times p}$ by which it acts on $\mathbb{R}^p$ (via matrix multiplication): for $\theta \in \mathbb{R}^p$ we write $g \cdot \theta = \rho(g) \theta$. The matrices must be consistent with the group structure, i.e.\ $\rho(gh) = \rho(g)\rho(h)$ and $\rho(e) = I$ where $e \in G$ is the identity.

Given a compact group $G$ acting linearly on $\mathbb{R}^p$, we define the associated \emph{orbit recovery} problem \cite{orbit-recovery} as follows. (This has also been called the \emph{group action channel} \cite{group-channel}.) For $i = 1,\ldots,n$ we observe
$$y_i = g_i \cdot \theta + \xi_i$$
where $\theta \in \mathbb{R}^p$ is the unknown signal, $g_i$ is drawn from \emph{Haar measure} (the ``uniform distribution'') on $G$, and $\xi_i \sim \mathcal{N}(0,\sigma^2 I)$. The random variables $g_i, \xi_i$ are all independent. The goal is to estimate $\theta$ up to group action, i.e.\ to output an estimator close to the orbit $\{g \cdot \theta \;:\; g \in G\}$ of $\theta$.

The following are some motivating examples of orbit recovery problems.
\begin{itemize}
    \item {\bf (Discrete) multi-reference alignment (MRA)} \cite{mra-sdp,AbbPerSin17,BanRigWee17,PerWeeBan17,mra-bispectrum,spectral-mra}: This is the case where $G$ is the cyclic group $\mathbb{Z}/p$ acting on $\mathbb{R}^p$ via cyclic permutation. Formally, for $g \in \mathbb{Z}/p$ (integers mod $p$), let $(g \cdot \theta)_i = \theta_{i-g\text{ (mod }p\text{)}}$. This captures the problem where we see many noisy copies of the same discrete signal, each with a different offset. This has applications in signal processing \cite{ZwaHeiGel03,PitZurAmo05} and structural biology \cite{Dia92,TheSte12}. We refer to the above problem as \emph{discrete MRA} in contrast to \emph{continuous MRA} which will be defined later.
    \item {\bf Cryo-electron microscopy (cryo-EM)} \cite{AdrDubLep84,ss-cryo,nog-cryo,orbit-recovery}: Cryo-EM is a popular biological imagining technique used to deduce the 3-dimensional structure of a large molecule such as a protein. This method was awarded the 2017 Nobel Prize in Chemistry. The method produces data in the form of many noisy 2-dimensional images of the 3-dimensional molecule, but in each image the molecule is rotated to an unknown orientation in 3-dimensional space. Here we think of $\theta \in \mathbb{R}^p$ as a representation of the molecule in some fixed basis (see \cite{orbit-recovery} for a precise definition). The group is $G = \mathrm{SO}(3)$ acting by rotating the molecule. This is a generalization of the orbit recovery problem where we observe $y_i = \Pi(g_i \cdot \theta) + \xi_i$ where $\Pi$ is a fixed linear operator, namely the mapping from a 3-dimensional molecule to a 2-dimensional image.
\end{itemize}

We will consider the \emph{heterogeneous} extension of orbit recovery. This is motivated by cryo-EM in situations where there are multiple molecules (or multiple conformations of the same molecule) and each image contains an unknown one of them. Formally, there are $K$ true signals $\theta^1,\ldots,\theta^K \in \mathbb{R}^p$ and each sample takes the form
$$y_i = g_i \cdot \theta^{k_i} + \xi_i$$
where $k_i$ is drawn at random from $[K] = \{1,\ldots,K\}$. In general, one can consider an arbitrary distribution over $[K]$, but we will restrict ourselves to the case where $k_i$ is drawn uniformly from $[K]$. In the heterogeneous problem, the goal is to estimate $\theta^1,\ldots,\theta^K$ up to permutation and group action.

\subsection{Continuous MRA}\label{sec:continuousMRA}

In this paper we will focus on the (heterogeneous) \emph{continuous MRA} problem, as it is a simple example of an orbit recovery problem over an infinite group. Here we take the group to be $G = \mathrm{SO}(2)$, parametrized by angles $g \in [0,2\pi)$. (Haar measure is simply the uniform distribution on angles.) Let $p$ be even. The signal is $\theta \in \mathbb{R}^p$ with entries indexed by the ``frequencies'' $\pm j$ for $j \in [p/2] = \{1,2,\ldots,p/2\}$. We will denote this set of frequencies by $\pm [p/2] = \{-p/2,\ldots,-1,1,\ldots,p/2\}$ (note that $0$ is not included for convenience). The action of $G$ on $\mathbb{R}^p$ is block-diagonal with $2 \times 2$ blocks: $g \in G$ acts on $[\theta_j \; \theta_{-j}]^\top$ (with $j > 0$) via the matrix
$$\left(\begin{array}{cc}
\cos(j g) & -\sin(j g) \\
\sin(j g) & \cos(j g)
\end{array}\right).$$

\noindent It will sometimes be convenient to work in the Fourier basis: for $j > 0$,
\begin{equation}\label{eq:Delta}
\hat\theta_j = \frac{1}{\sqrt 2}(\theta_j + \im\, \theta_{-j}) \quad\text{and}\quad \hat\theta_{-j} = \frac{1}{\sqrt 2}(\theta_j - \im\, \theta_{-j})
\end{equation}
where $\im$ is the imaginary unit. If $\theta \sim \mathcal{N}(0,I/p)$, we have $\hat\theta_j \sim \mathcal{N}(0,1/(2p)) + \im\, \mathcal{N}(0,1/(2p))$ with $\hat\theta_{-j} = \overline{\hat\theta_j}$ (complex conjugate). In the Fourier basis, the action of $G$ is diagonal, with $g$ acting on $\hat \theta_j$ by the scalar $\exp(\im j g)$.

\subsection{Method of moments}

One method for approaching orbit recovery problems is to attempt to learn the unknown group elements $g_i$. This is the well-studied \emph{synchronization approach} \cite{singer-angular,ss-cryo,mra-sdp,nug,afonso-thesis,cryo-orthogonal,pwbm-amp}.

An alternative approach uses the \emph{method of moments}, which seeks to estimate $\theta$ directly from the moments of the samples without attempting to estimate the $g_i$. This was discovered first in the case of MRA \cite{AbbPerSin17,BanRigWee17,PerWeeBan17} and later extended to all groups \cite{orbit-recovery,group-channel}. This method is suited to the case where the noise $\sigma$ on each sample is very large but we get many samples; in this regime we cannot hope to accurately estimate $g_i$ but can still hope to recover $\theta$.

We now describe the method of moments more formally. Consider the heterogeneous problem with signals $\theta^1,\ldots,\theta^K \in \mathbb{R}^p$. In the method of moments we use the samples $y_i$ to estimate the moments
\begin{align*}
T_1(\{\theta^k\}) &= \Ex_{k,g}[g \cdot \theta^k] = \frac{1}{K} \sum_{k=1}^K \Ex_g[g \cdot \theta^k]\\
T_2(\{\theta^k\}) &= \Ex_{k,g}[(g \cdot \theta^k)(g \cdot \theta^k)^\top] = \frac{1}{K} \sum_{k=1}^K \Ex_g[(g \cdot \theta^k)(g \cdot \theta^k)^\top]\\
&\;\vdots\\
T_d(\{\theta^k\}) &= \Ex_{k,g}[(g \cdot \theta^k)^{\otimes d}] = \frac{1}{K} \sum_{k=1}^K \Ex_g[(g \cdot \theta^k)^{\otimes d}].
\end{align*}
Above, the expectation is over $k$ drawn uniformly from $[K]$ and $g$ drawn from Haar measure on $G$. It is possible to accurately estimate the moments $T_1,\ldots,T_d$ given roughly $n \sim \sigma^{2d}$ samples (recall $\sigma$ is the noise level) \cite{orbit-recovery,group-channel}. Thus we are interested in an inversion procedure that recovers $\{\theta^k\}$ (up to permutation and orbit) given $T_1,\ldots,T_d$, for $d$ as small as possible. General algebraic techniques exist for testing how large $d$ needs to be for this to be possible, but this does not necessarily give a polynomial-time algorithm to actually recover the signal from the moments \cite{orbit-recovery}. For many natural problems such as MRA and cryo-EM, it is known that $d = 3$ is sufficient (and necessary) \cite{AbbPerSin17,BanRigWee17,PerWeeBan17,orbit-recovery}.

It is known that the method of moments is statistically optimal in the limit $\sigma \to \infty$ (with the group, group action, and dimension $p$ fixed) in the following sense \cite{BanRigWee17,orbit-recovery,group-channel}. On one hand, $n \sim \sigma^{2d}$ samples are sufficient to estimate the moments $T_1,\ldots,T_d$. On the other hand, if two signals $\theta,\theta'$ (or more generally, two collections of $K$ heterogeneous signals) produce the same $T_1,\ldots,T_{d-1}$ then at least $n \sim \sigma^{2d}$ samples are statistically required in order to distinguish between $\theta$ and $\theta'$. In other words, if the method of moments requires moments up to $d$ then \emph{any} method requires at least $\sigma^{2d}$ samples.

For the case of continuous MRA, it is easiest to work with the moments in the Fourier domain: $\hat T_d(\{\theta^k\}) = \frac{1}{K} \sum_{k=1}^K \Ex_g[(g \cdot \hat\theta)^{\otimes d}]$ where the action of $g$ on $\theta$ is diagonal: identifying $g$ with an angle $g \in [0,2\pi)$ we have $(g \cdot \hat\theta)_j = \exp(\im j g) \hat\theta_j$ (where $\im$ is the imaginary unit). For $j_1,\ldots,j_d \in \pm [p/2]$ we can compute
\begin{align}
\hat T_d(\{\theta^k\})_{j_1,\ldots,j_d} &=
\frac{1}{K} \sum_{k=1}^K \Ex_g[(g \cdot \hat\theta^k)_{j_1} \cdots (g \cdot \hat\theta^k)_{j_d}] \nonumber \\
&= \frac{1}{K} \sum_{k=1}^K \Ex_g[\exp(\im j_1 g)\hat\theta^k_{j_1} \cdots \exp(\im j_d g)\hat\theta^k_{j_d}] \nonumber \\
&= \frac{1}{K} \sum_{k=1}^K \Ex_g[\exp(\im g (j_1 + \cdots + j_d))] \hat\theta^k_{j_1} \cdots \hat\theta^k_{j_d} \nonumber \\
&= \frac{1}{K} \sum_{k=1}^K \mathbbm{1}_{j_1 + \cdots + j_d=0} \,\hat\theta^k_{j_1} \cdots \hat\theta^k_{j_d}.
\label{eq:T}
\end{align}

\subsection{Efficient algorithms}

We have seen above that the optimal statistical procedure is to compute moments $T_i$ and to use these to solve for $\{\theta^k\}$ consistent with these moments. \emph{A priori}, this is a polynomial system of equations which cannot be solved efficiently. In this section we survey known polynomial-time methods for recovering the signal(s) from the moments in special cases.

\subsubsection{Frequency marching}
\label{sec:freq-march}

Both the discrete and continuous MRA problems admit a closed-form solution called \emph{frequency marching} in the \emph{homogeneous} case ($K=1$). These methods are limited in the sense that they rely heavily on the particular structure of MRA and do not seem to extend to other groups or to the heterogeneous case (even for $K=2$).

For discrete MRA, the frequency marching approach is described in \cite{mra-bispectrum}. An essentially-identical method works for continuous MRA, which we describe here.

Consider the homogeneous continuous MRA problem. The goal is to recover $\theta$ from $T_2(\theta)$ and $T_3(\theta)$ under the assumption that all Fourier coefficients of $\theta$ are nonzero. Recall the structure of moments (\ref{eq:T}). From $T_2$ we learn, for every $j \in [p/2]$, the value $\hat\theta_j \hat\theta_{-j} = \hat\theta_j \overline{\hat\theta_j} = |\hat\theta_j|$, i.e.\ we learn the magnitudes of the Fourier coefficients (the \emph{power spectrum}). It suffices to recover the phases. From $T_3$ we learn the value $\hat\theta_{j_1}\hat\theta_{j_2}\hat\theta_{j_3}$ for every $j_1,j_2,j_3 \in \pm[p/2]$ such that $j_1+j_2+j_3=0$ (the \emph{bispectrum}). Provided $\hat\theta_1 \ne 0$, each orbit has a unique representative such that the phase $\phi_1$ of $\hat\theta_1$ is 0. Thus we take $\phi_1 = 0$. Now use $\hat\theta_{-1} \hat\theta_{-1} \hat\theta_2$ to learn $\phi_2$, use $\hat\theta_{-1} \hat\theta_{-2} \hat\theta_3$ to learn $\phi_3$, and so on until we have learned all the phases.

Another problem that admits a similar closed-form solution (in the homogeneous case only) is cryo-ET (cryo-electron tomography), a variant of cryo-EM without the projection step \cite{orbit-recovery}. The cryo-EM problem remains open (even in the homogeneous case): there are no known polynomial-time algorithms with provable guarantees.

\subsubsection{Tensor decomposition}

Note that when $G$ is a finite group, the third moment $T_3$ takes the form
$$T_3(\{\theta^k\}) = \frac{1}{K |G|} \sum_{k=1}^K \sum_{g \in G} (g \cdot \theta^k)^{\otimes 3}$$
which is a low-rank tensor (of rank $K|G|$) and is thus amenable to standard tensor decomposition techniques. For homogeneous discrete MRA, $T_3$ is undercomplete and can be decomposed using Jennrich's algorithm \cite{PerWeeBan17}, thus recovering (all shifts of) the signal. For heterogeneous discrete MRA, $T_3$ is overcomplete (rank exceeds dimension) but Jennrich's algorithm can still be used if we are given a higher order moment tensor. For instance, if $K \le p/2$ then Jennrich's algorithm can be used to decompose $T_5$ \cite{PerWeeBan17}. However, estimating $T_5$ requires suboptimal sample complexity $n \sim \sigma^{10}$. If we assume $\theta^k$ are random (i.i.d.\ Gaussian) and $K \lesssim \sqrt{p}$, we can avoid this by using overcomplete methods to decompose $T_3$ \cite{alex-thesis}. This result is an adaptation of methods for random overcomplete tensor decomposition using the sum-of-squares hierarchy \cite{MSS-tensor-sos}. It is conjectured that $K \lesssim \sqrt{p}$ is optimal for efficient methods that use $T_3$ \cite{mra-het,alex-thesis}.

\begin{remark}
One property of (the analysis of) Jennrich's algorithm is that it is only guaranteed to work in cases where the tensor has a unique decomposition. This is a serious barrier to using Jennrich's algorithm for problems over infinite groups. If $G$ is infinite, we might still hope that $T_3 = \mathbb{E}_g [(g \cdot \theta)^{\otimes 3}]$ (or a higher-order moment) has a low-rank decomposition and that this decomposition tells us something about $\theta$. However, even if this were true, we could not use (the existing analysis of) Jennrich's algorithm to find such a decomposition because the decomposition would not be unique: if $T_3 = \sum_{i=1}^r a_i^{\otimes 3}$ then we also have $T_3 = \sum_{i=1}^r (g \cdot a_i)^{\otimes 3}$ for any $g \in G$. More generally, it seems that any spectral method (which attempts to recover the signal as an eigenvector of some matrix) cannot succeed unless it first breaks the symmetry; otherwise there are infinitely-many solutions but a matrix only has finitely-many eigenvectors. Our method will randomly break the symmetry and then use a spectral method.
\end{remark}

\section{Results and techniques}

\subsection{Notation}

We say an event occurs with \emph{high probability} if it has probability $1-o(1)$ (as $p \to \infty$). We say an event occurs with \emph{overwhelming probability} if it occurs with probability $1-1/\delta(p)$ where $\delta(p)$ grows faster than any polynomial in $p$ (i.e.\ for any $k \in \mathbb{N}$, $\delta(p) \ge \omega(p^k)$). The notation $\tilde O(\cdots)$ hides factors of $\log(p)$.

We write $[p] = \{1,2,\ldots,p\}$ and define $\pm [p/2]$ as in Section~\ref{sec:continuousMRA}. The $p \times p$ identity matrix is denoted $I_p$ or simply $I$. We use $\|\cdot \|$ to denote the spectral norm of a matrix. We use $\|\cdot \|_F$ to denote the Frobenius norm (of either a matrix or tensor). For a tensor $T$, we use e.g.\ $\|T\|_{\{a,b\},\{c,d\}}$ to denote the spectral norm of the $(\{a,b\},\{c,d\})$-flattening of $T$. The $(\{a,b\},\{c,d\})$-flattening of a 4-tensor $T \in (\mathbb{R}^p)^{\otimes 4}$ is the $p^2 \times p^2$ matrix $M_{ab,cd} = T_{abcd}$.

\subsection{Main result}

We now state our main result for continuous MRA.

\begin{theorem}[list recovery for heterogeneous continuous MRA]
\label{thm:mra-main}
Let $\theta^1,\ldots,\theta^K \in \mathbb{R}^p$ be drawn independently from $\mathcal{N}(0,I_p/p)$. Suppose we are given the tensor $\mathcal{T} = T + E \in (\mathbb{R}^p)^{\otimes 3}$ where $\|E\|_{\infty} \le K^{-8} p^{-4}/\mathrm{polylog}(p)$ and
$$T = \sum_{k=1}^K \,\Ex_g \left[(g \cdot \theta^k)^{\otimes 3}\right]$$
with $g$ drawn from Haar (uniform) measure on $\mathrm{SO}(2)$. For any $\varepsilon > 0$, there is an algorithm that runs in time $p^{O(1)/\varepsilon^4}$ and outputs a list of unit vectors $\tau_1,\ldots,\tau_L \in \mathbb{R}^p$ with $L = p^{O(1)/\varepsilon^4}$ that has the following guarantee. Suppose $K \le p^\delta$ for a universal constant $\delta > 0$. With high probability over both $\theta^1,\ldots,\theta^K$ and the algorithm's randomness, for every $k \in [K]$ there exists $i \in [L]$ such that $\langle \tau_i,\theta^k \rangle^2 \ge 1 - \varepsilon - o(1)$.
\end{theorem}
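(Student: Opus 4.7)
The plan is to analyze the following spectral algorithm. Draw a fresh order-$5$ tensor $u$ with i.i.d.\ standard Gaussian entries, form the $p^2 \times p^2$ matrix $M^u(\mathcal{T})$ given by the $(\{a,b\},\{c,d\})$-flattening of the tensor network of Figure~\ref{fig:net-9} after replacing every $T$-node by $\mathcal{T}$, compute its top eigenvector $w \in \mathbb{R}^{p^2}$, reshape $w$ into a $p \times p$ matrix $W$, and output the top left singular vector of $W$ (normalized). Iterating this $L = p^{O(1)/\varepsilon^4}$ times with independent $u$'s produces the list $\tau_1,\ldots,\tau_L$. Using $\mathcal{T}=\sum_{k=1}^{K}T^{(k)}+E$ with $T^{(k)}:=\mathbb{E}_g[(g\cdot\theta^{k})^{\otimes 3}]$, multilinearity of the tensor network yields the signal/noise decomposition $M^u(\mathcal{T}) = \sum_{k=1}^K M_k^u + M^u_{\mathrm{cross}} + M^u_E$, where $M_k^u$ places $T^{(k)}$ at every $T$-node, $M^u_{\mathrm{cross}}$ ranges over the non-constant tuples $(k_1,\ldots,k_9)\in[K]^9$, and $M^u_E$ collects all terms containing at least one copy of $E$.

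For the signal pieces $M_k^u$, the goal is to show that the random $u$ breaks the $\mathrm{SO}(2)$-symmetry of $T^{(k)}$ and leaves $M_k^u$ essentially rank one. Working in the Fourier basis of Section~\ref{sec:continuousMRA}, each $T^{(k)}$-node enforces the sum-to-zero constraint on its three incident frequencies, so $M_k^u$ is a sum over frequency labelings of the eighteen legs (subject to nine such constraints), weighted by products of $\hat\theta^k$ and $\hat u$ coefficients. I would compute the quadratic form $\langle M_k^u, vv^\top\rangle$ with $v = \mathrm{vec}((g\cdot\theta^k)^{\otimes 2})$ explicitly as a trigonometric polynomial in $g\in[0,2\pi)$ whose coefficients are polynomial in $\hat\theta^k$ and $\hat u$, then use Gaussian anti-concentration to show that on a constant-probability event over $u$ this polynomial has a unique maximizer $g_k = g_k(u)$ with constant margin. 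Davis--Kahan then promotes this to a spectral-gap statement for $M_k^u$ whose top eigenvector is close to $\mathrm{vec}((g_k\cdot\theta^k)^{\otimes 2})$, and approximate orthogonality among the $v_k$'s (from the near-orthogonality of independent Gaussians $\theta^k$) ensures that the $K$ pieces roughly decouple in $\sum_k M_k^u$.

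The main technical obstacle is bounding $\|M^u_{\mathrm{noise}}\| := \|M^u_{\mathrm{cross}}+M^u_E\|$, which I would attack with the trace moment method. For an even integer $q = \Theta(\mathrm{poly}(1/\varepsilon))$, compute $\mathbb{E}\,\mathrm{tr}\bigl((M^u_{\mathrm{noise}} (M^u_{\mathrm{noise}})^\top)^q\bigr)$ by expanding into the cyclic product of $2q$ copies of Figure~\ref{fig:net-9}; after Gaussian integration over $\{\theta^k\}$ and $u$, this becomes a sum over \emph{valid} frequency labelings of the resulting meta-graph, where a labeling is valid iff (i) the three frequencies incident to every $T$-node sum to zero, and (ii) the $\hat\theta^k$-edges and the $\hat u$-edges are each matched into pairs of opposite frequencies (so that every Gaussian moment is even). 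The crux is a sharp combinatorial count of these labelings, governed by the trade-off between ``free'' frequencies (each contributing a factor of $p$) and coincidences forced by the sum-to-zero and pairing constraints. Because the $\mathrm{SO}(2)$-invariance of $T$ is an infinite-group constraint, this counting is qualitatively different from the usual random-matrix trace moment computations (as emphasized in the introduction), and the tolerance $\|E\|_\infty \le K^{-8}p^{-4}/\mathrm{polylog}(p)$ should emerge from matching the resulting bound against $\min_k \|M_k^u\|$. Markov's inequality then yields $\|M^u_{\mathrm{noise}}\| \ll \min_k \|M_k^u\|$ with overwhelming probability.

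Finally, combining the signal and noise bounds via Davis--Kahan, the top eigenvector of $M^u(\mathcal{T})$ is $(1-\varepsilon-o(1))^{1/2}$-correlated with $\mathrm{vec}((g_k\cdot\theta^k)^{\otimes 2})$ for the index $k = k(u)$ attaining $\max_k \|M_k^u\|$; reshaping to $W$ and taking the top singular vector recovers (an approximation of) $g_k \cdot \theta^k$, an orbit-representative of $\theta^k$, which is an acceptable output for the list-recovery guarantee. Since the $\theta^k$ are i.i.d., a permutation-symmetry argument shows that over the randomness of $(\{\theta^k\},u)$ each index $k \in [K]$ wins the maximum with probability at least $1/\mathrm{poly}(K)$, so taking $L = p^{O(1)/\varepsilon^4} \gg \mathrm{poly}(K)$ independent repetitions covers every $k$ with high probability. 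The $1/\varepsilon^4$ exponent in both the runtime and the list size is inherited from the moment order $q$ required in the trace bound to convert the constant-sized spectral gap into the final $\varepsilon$-precision claim.
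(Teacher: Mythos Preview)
Your proposal has the right flavor but misses the key structural ideas that make the argument work. The central decomposition in the paper is \emph{linear in $u$}, not multilinear in the nine $T$-nodes: one writes $u = \alpha\,(\theta^k)^{\otimes 5} + \tilde u$ with $\tilde u \perp (\theta^k)^{\otimes 5}$, yielding the signal term $\alpha\, M(T^k,(\theta^k)^{\otimes 5})$ and the noise term $M(T,\tilde u)$. The scalar $\alpha$ is Gaussian, and the event $\alpha \ge C\sqrt{\log p}$ with $C \asymp 1/\varepsilon^2$ is what produces the $p^{-O(1)/\varepsilon^4}$ per-trial success probability---it does \emph{not} come from taking $q = \mathrm{poly}(1/\varepsilon)$ in the trace method (the paper uses $q = \log p$, independent of $\varepsilon$). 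Your $k$-by-$k$ decomposition of the nine nodes, together with the ``whichever $k$ wins the maximum'' mechanism, does not give the right probability accounting and leaves you with signal pieces $M_k^u$ that still depend on the full random $u$ and have no obvious rank-one structure; your proposed anti-concentration argument on the trigonometric polynomial $g \mapsto \langle M_k^u, ((g\cdot\theta^k)^{\otimes 2})((g\cdot\theta^k)^{\otimes 2})^\top\rangle$ controls the quadratic form only along the one-dimensional orbit, not over all unit vectors, so it cannot by itself yield a spectral gap.

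A second essential ingredient you omit is the entrywise correction operator $\mathcal{S}$ applied in the Fourier domain. Without the normalization $S_{abcd} = 1/\mathbb{E}[s_{abcd}]$, the signal matrix $M(T^k,(\theta^k)^{\otimes 5})$ is \emph{not} close to $(\theta^k\otimes\theta^k)(\theta^k\otimes\theta^k)^\top$: different $(a,b,c,d)$ entries carry different combinatorial weights $s_{abcd}$ of order $p^{-9}$ but with nontrivial fluctuation across entries. And without zeroing the entries with $a=-b$ or $c=-d$, the trace-moment count for the noise term fails outright: one can impose the repeat $a^{(\ell)}=-b^{(\ell)}$ at every layer of $\mathcal{G}_q$ at the cost of a single free label, creating $\Omega(q)$ regions for $O(1)$ cost and destroying the bound $f(r)\le f(0)-\delta r$ that you correctly identify as the combinatorial crux. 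Finally, the paper's noise bound does not apply the trace method to $M(T,\tilde u)$ directly; it first peels off $\tilde u$ via a random-contraction inequality and applies the trace method to the $p^2 \times p^7$ flattening $\tilde W$ of the underlying order-three tensor, so that the labeling combinatorics involve only the $\{\theta^k\}$ randomness.
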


For any constant $\varepsilon$, our algorithm runs in polynomial time. To the best of our knowledge, this is the first polynomial-time algorithm for a heterogeneous orbit recovery problem over an infinite group. (A few homogeneous problems have frequency marching solutions; see Section~\ref{sec:freq-march}.) Moreover by Proposition~7.6 of \cite{orbit-recovery}, to compute $\mathcal{T}$ satisfying the above condition on $\|E\|_\infty$, it is sufficient to take $n = \tilde O(\sigma^6 K^{18} p^8)$ samples. This exhibits {\em statistically-optimal} dependence of $\sigma^6$ on the noise level. We do not attempt to optimize the constant $\delta$, but we expect that $K \sim \sqrt{p}$ is optimal; see Appendix~\ref{app:design}. 

Our algorithm produces a list of candidate solutions but we do not analyze how to hypothesis test to select the correct solution(s) from the list. We leave this as an open question for future work. Heuristically, in the homogeneous case, one can evaluate a candidate solution $\tau$ by comparing $T_2(\tau)$ and $T_3(\tau)$ to our estimates for the true moments $T_2(\theta), T_3(\theta)$. In the heterogeneous case, we want to find vectors $\tau_1,\ldots,\tau_K$ from our list such that $T_d(\{\tau_k\}) = \frac{1}{K} \sum_{k=1}^K T_d(\tau_k)$ is close to the true moments $T_d(\{\theta^k\})$ for $d=2,3$. This is a linear system subject to a $K$-sparse constraint, which could perhaps be solved using standard methods such as $\ell_1$-minimization.

\subsection{Summary of techniques}
\label{sec:techniques}

Our approach will draw inspiration from prior work on random overcomplete third-order tensor decomposition. This is the problem of recovering $\{a_1,\ldots,a_r\}$ from
\begin{equation}\label{eq:T-decomp}
T = \sum_{i=1}^r a_i^{\otimes 3}
\end{equation}
where the $a_i \in \mathbb{R}^p$ are drawn independently from $\mathcal{N}(0,I/p)$. The state-of-the-art results for this problem are a close-to-linear-time spectral method that succeeds when $r \lesssim p^{4/3}$ \cite{HSSS-fast-sos} and a polynomial-time sum-of-squares method that succeeds when $r \lesssim p^{3/2}$ \cite{MSS-tensor-sos}. (It seems likely that no efficient algorithm can succeed when $r$ exceeds $p^{3/2}$.)

As a starting point for our techniques, we consider the spectral method of \cite{HSSS-fast-sos} for random overcomplete tensor decomposition. The key step of the algorithm is to construct (from $T$) the $p^2 \times p^2$ matrix
\begin{equation}\label{eq:M-hsss}
M = \sum_{i,j \in [r]} \langle u, \tilde T (a_i \otimes a_j) \rangle \cdot (a_i \otimes a_j)(a_i \otimes a_j)^\top
\end{equation}
where $u \in \mathbb{R}^p$ is drawn randomly from $\mathcal{N}(0,I)$ and $\tilde T$ is obtained by flattening the input tensor to a $p \times p^2$ matrix: $\tilde T = \sum_{i \in [r]} a_i(a_i \otimes a_i)^\top$. The idea is that with some decent probability (inverse polynomial), the random vector $u$ will align reasonably well with some $a_i$, and this causes the top eigenvector of $M$ (after applying a certain ``preconditioner'') to be close to $a_i \otimes a_i$.

We can re-interpret the matrix $M$ in the graphical language of tensor networks (see e.g.\ \cite{tensor-net}), which we now describe. An order-$d$ tensor $T \in (\mathbb{R}^p)^{\otimes d}$ is represented graphically as having $d$ \emph{legs}; the case $d=3$ is shown in Figure~\ref{fig:net-intro}(a). The legs are labeled with the three indices $a,b,c$ that index into $T$. When two tensor legs are connected by a \emph{wire}, this indicates contraction of the corresponding indices. For instance, the tensor network in Figure~\ref{fig:net-intro}(b) represents the tensor $B \in (\mathbb{R}^{p})^{\otimes 4}$ given by $B_{abcd} = \sum_{i \in [p]} T_{abi}T_{cdi}$. The matrix $M$ from (\ref{eq:M-hsss}) is the $(\{a,b\},\{c,d\})$-flattening of the tensor $C \in (\mathbb{R}^p)^{\otimes 4}$ that is represented by Figure~\ref{fig:net-intro}(c). Specifically,
\begin{equation}\label{eq:M-net}
M_{ab,cd} = C_{abcd} = \sum_{i,j,k \in [p]} T_{acj}T_{bdk}T_{ijk}u_i.
\end{equation}
One can check that (\ref{eq:M-net}) is equivalent to (\ref{eq:M-hsss}) when $T$ is given by (\ref{eq:T-decomp}).

Now that we have expressed the matrix $M$ from \cite{HSSS-fast-sos} as a tensor network, this opens the door to exploring a whole class of new spectral methods obtained by building various tensor networks out of the input tensor $T$. For instance, for the continuous MRA problem we will see that the tensor network in Figure~\ref{fig:net-intro}(c) does not work but that a larger one, shown in Figure~\ref{fig:net-9}, does. In Appendix~\ref{app:design} we explain in detail some of the considerations involved in choosing this particular tensor network.

We now describe our algorithm in more detail. Similarly to \cite{HSSS-fast-sos}, our algorithm takes in a random guess $u$ in order to break symmetry. Instead of a vector, $u$ is now an order-5 tensor (with i.i.d.\ $\mathcal{N}(0,1)$ entries). (In Appendix~\ref{app:design} we explain the reason for this.) The hope is that $u$ has better-than-random correlation with $\theta^{\otimes 5}$ for some $\theta$ in the orbit of one of the true signals $\theta^1,\ldots,\theta^K$; if this occurs then we will recover a vector close to $\theta$. Our algorithm takes $u$ and the input tensor $\mathcal{T}$, and constructs a $p^2 \times p^2$ matrix $\tilde M(\mathcal{T},u)$ according to the tensor network in Figure~\ref{fig:net-9}. We would like it to be the case that if we correctly guess $u = \theta^{\otimes 5}$ then $\tilde M(\mathcal{T},\theta^{\otimes 5}) \approx (\theta^{\otimes 2})(\theta^{\otimes 2})^\top$, allowing us to recover $\theta$. Due to the combinatorics of the $\mathrm{SO}(2)$ structure, this is not the case for $\tilde M$; however, luckily it is true after applying a particular simple correction to $\tilde M$, resulting in a matrix $M(\mathcal{T},u)$. (This correction operates entrywise in the Fourier basis.) To extract a candidate solution from $M$, we symmetrize it and compute its top eigenvector $v \in \mathbb{R}^{p^2}$ (which we hope is close to $\theta^{\otimes 2}$). We then re-shape $v$ into a $p \times p$ matrix, symmetrize it, and take the top eigenvector again in order to produce a candidate solution. We then repeat the entire process $L$ times with fresh randomness $u$ on each trial, in order to obtain a list of $L$ candidate vectors.

Roughly speaking, a key step in our analysis is to show a high-probability upper bound on the spectral norm of our matrix $M = M(\mathcal{T},u)$. To do this we use the trace moment method, a general-purpose tool from random matrix theory which relies on computing
\begin{equation}\label{eq:trace}
\mathbb{E}[\mathrm{Tr}((MM^\top)^q)].
\end{equation}
In general, this computation can be quite difficult for complicated random matrices. However, even though $M$ is quite complicated, the fact that it is represented by a tensor network helps us here. As shown in Figure~\ref{fig:trace-method}, a tensor network for the quantity (\ref{eq:trace}) can be obtained by connecting $2q$ copies of $M$ in a circle. Since $M$ is itself a tensor network, we need to connect $2q$ copies of \emph{that} network in a circle, creating an expanded tensor network. As a result, the computation of (\ref{eq:trace}) boils down to a combinatorics question involving counting certain labelings of this expanded tensor network.

\begin{figure}[!ht]\centering
\begin{minipage}{0.8\textwidth}\centering
\includegraphics[width=0.5\textwidth]{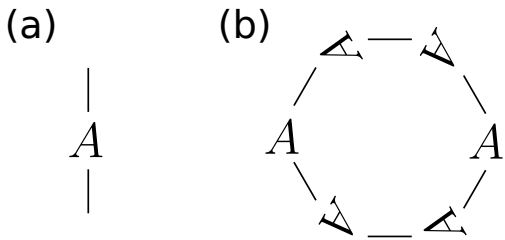}
\caption{{\bf (a)} A real-valued rectangular matrix $A$. {\bf (b)} The tensor network representation of $\mathrm{Tr}[(AA^\top)^q]$ is formed by connecting $2q$ copies of $A$ in a ring (here $q=3$). Since $A$ is asymmetric, the orientation of the ``$A$'' symbols matters.}
\label{fig:trace-method}
\end{minipage}
\end{figure}

\section{Proof for continuous MRA}

\subsection{Preliminaries}

\subsubsection{Concentration}

First we have some basic concentration results for random vectors.

\begin{lemma}
If $\theta \sim \mathcal{N}(0,I_p/p)$ then 
$$\left|\|\theta\|^2 - 1\right| \le \tilde O(1/\sqrt{p})$$
with overwhelming probability.
\end{lemma}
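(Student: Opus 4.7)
The plan is to identify $\|\theta\|^2$ as essentially a chi-squared random variable and apply a standard tail bound. Since $\theta \sim \mathcal{N}(0,I_p/p)$, we can write $\theta_i = p^{-1/2} z_i$ with $z_i$ i.i.d.\ standard normal, so $p\|\theta\|^2 = \sum_{i=1}^p z_i^2 \sim \chi^2_p$. The goal then is to bound the fluctuations of a $\chi^2_p$ random variable about its mean $p$ on the scale $\sqrt{p}\cdot \mathrm{polylog}(p)$.

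The concrete tool I would invoke is a Bernstein-type (or Laurent--Massart) inequality for chi-squared random variables: there is a constant $c > 0$ such that for all $t \ge 0$,
$$\mathbb{P}\bigl(|\chi^2_p - p| \ge t\bigr) \le 2\exp\bigl(-c \min(t^2/p,\, t)\bigr).$$
Setting $t = \sqrt{p}\,\log^2(p)$ gives $\min(t^2/p, t) = \log^4(p)$ (for $p$ large enough that $t \le p$), so the failure probability is at most $2\exp(-c\log^4 p)$, which decays faster than any inverse polynomial in $p$ and is therefore ``overwhelming'' in the paper's sense. Dividing through by $p$ yields $|\|\theta\|^2 - 1| \le \log^2(p)/\sqrt{p} = \tilde O(1/\sqrt{p})$, as required.

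There is essentially no obstacle here: the only point that warrants a moment of care is matching the quantifier ``overwhelming probability'' (super-polynomial decay) rather than merely ``high probability,'' which is precisely what forces the $\tilde O$ (i.e., a $\mathrm{polylog}(p)$ slack) rather than a pure $O(1/\sqrt{p})$ bound. If one preferred not to cite a chi-squared tail bound directly, an equivalent route is to observe that $z_i^2 - 1$ is sub-exponential with $O(1)$ Orlicz norm, and apply Bernstein's inequality to the sum $\sum_i (z_i^2 - 1)$; this gives the same conclusion with the same parameters.
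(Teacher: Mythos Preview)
Your proposal is correct and matches the paper's approach: the paper's proof is a one-line citation of Bernstein's inequality for subexponential random variables, which is exactly the route you describe (your alternative formulation via the chi-squared tail bound is equivalent). Your write-up simply supplies the details the paper omits.
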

\begin{proof}
This follows from Bernstein's inequality for subexponential random variables (see e.g.\ \cite{rig-notes}).
\end{proof}

\begin{lemma}
If $\theta \sim \mathcal{N}(0,I_p/p)$ then with overwhelming probability we have for all $i$, $$|\theta_i| \le \tilde O(1/\sqrt{p}).$$
\end{lemma}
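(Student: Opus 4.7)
The claim is a routine Gaussian maximum bound, so the plan is short. First, normalize: for each fixed $i \in [p]$, $\sqrt{p}\,\theta_i \sim \mathcal{N}(0,1)$, so the standard Gaussian tail bound gives
\[
\Pr\!\left[\,|\theta_i| > t/\sqrt{p}\,\right] \;\le\; 2\exp(-t^2/2)
\]
for every $t>0$.

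Second, choose $t$ to beat a union bound and give overwhelming probability. Recall that ``overwhelming probability'' means failure probability $1/\delta(p)$ where $\delta(p)$ grows faster than any polynomial, and $\tilde O(1/\sqrt{p})$ hides factors of $\mathrm{polylog}(p)$. So I would just pick $t = \log^{2}(p)$ (any $\mathrm{polylog}$ that dominates $\sqrt{\log p}$ works). Then for a single coordinate the failure probability is at most $2\exp(-\tfrac{1}{2}\log^4 p)$, which is smaller than $p^{-k}$ for every fixed $k$ once $p$ is large.

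Third, union bound over $i \in [p]$: the event that some $|\theta_i|$ exceeds $\log^2(p)/\sqrt{p}$ has probability at most $2p \exp(-\tfrac{1}{2}\log^{4} p)$, still super-polynomially small. Since $\log^2(p)/\sqrt{p} = \tilde O(1/\sqrt{p})$, this gives the claim.

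There is no real obstacle here; the only ``choice'' is picking a polylog rate $t$ large enough so that the single-coordinate tail is super-polynomially small, since the union bound over $p$ coordinates only costs a factor of $p$. One could alternatively cite a standard maximal inequality for Gaussians (e.g.\ $\mathbb{E}\max_{i\le p} |Z_i| \lesssim \sqrt{\log p}$ together with Borell--TIS concentration) to get the sharper constant $\tilde O = O(\sqrt{\log p})$, but this is not needed for the stated bound.
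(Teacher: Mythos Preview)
Your proposal is correct and matches the paper's approach: the paper simply writes ``This follows from standard Gaussian tail bounds,'' and your argument (single-coordinate Gaussian tail plus a union bound over the $p$ coordinates with a polylog threshold) is exactly the routine verification of that one-liner.
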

\begin{proof}
This follows from standard Gaussian tail bounds.
\end{proof}

The following concentration bound is a consequence of \emph{hypercontractivity} (see e.g.\ Theorem~1.10 of \cite{hyp}).

\begin{theorem}\label{thm:hyp}
Consider a degree-$q$ polynomial $f(Y) = f(Y_1,\ldots,Y_n)$ of independent Gaussian random variables $Y_1,\ldots,Y_n$. Let $\sigma^2$ be the variance of $f(Y)$. There exists an absolute constant $R > 0$ such that
$$\mathrm{Pr}\left[\left|f(Y) - \mathbb{E}[f(Y)]\right| \ge t\right] \le e^2 \cdot e^{-\left(\frac{t^2}{R \sigma^2}\right)^{1/q}}.$$
\end{theorem}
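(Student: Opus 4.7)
My plan is to derive this tail bound from Gaussian hypercontractivity (Nelson's inequality) via a Markov bound on a carefully chosen high moment of $f$. After centering, we may assume $\mathbb{E}[f] = 0$, so $\|f\|_2 = \sigma$; subtracting a constant does not change either the degree $q$ or the variance.

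The central input is the hypercontractive moment comparison: for every $p \ge 2$ and every degree-$q$ polynomial $f$ in independent standard Gaussians,
\[
\|f\|_p \;\le\; (p-1)^{q/2}\,\|f\|_2.
\]
This is the classical consequence of the Ornstein--Uhlenbeck semigroup $P_t$ being a contraction from $L^2$ to $L^p$ whenever $e^{-2t} \le 1/(p-1)$, together with the fact that a degree-$q$ polynomial lies in the sum of the first $q$ Wiener chaoses, on each of which $P_t$ acts as multiplication by $e^{-kt}$. Writing $f = \sum_{k \le q} f_k$ and setting $g = \sum_{k \le q} e^{kt} f_k$ with $t = \tfrac{1}{2}\log(p-1)$, one gets $P_t g = f$ and $\|g\|_2 \le e^{qt}\|f\|_2 = (p-1)^{q/2}\|f\|_2$, so hypercontractivity yields the claim.

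With this in hand, Markov applied to $|f|^p$ gives
\[
\mathrm{Pr}[|f| \ge t] \;\le\; \|f\|_p^p / t^p \;\le\; \bigl((p-1)^{q/2}\sigma/t\bigr)^p.
\]
I would then optimize over (real) $p \ge 2$ by setting $p-1 = (t/(e\sigma))^{2/q}$, so that $(p-1)^{q/2}\sigma/t = e^{-1}$. This choice is admissible (i.e.\ $p \ge 2$) precisely when $t \ge e\sigma$, and it produces
\[
\mathrm{Pr}[|f| \ge t] \;\le\; e^{-p} \;\le\; e\cdot\exp\!\bigl(-(t^2/(e^2\sigma^2))^{1/q}\bigr),
\]
which is the desired inequality with $R = e^2$ (and even a slightly better prefactor). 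In the remaining regime $t < e\sigma$, the right-hand side of the stated bound is at least $e^2 \cdot e^{-1} > 1$, so the inequality is trivially satisfied; this is precisely the role of the $e^2$ prefactor.

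The only genuine obstacle is the hypercontractivity input itself, which is a deep classical theorem but which the paper treats as a black box; once it is in hand, everything else is routine moment-method bookkeeping. The one subtlety worth flagging is that the optimal $p$ is generally not an integer, which is fine since Markov's inequality on $|f|^p$ makes sense for any real $p \ge 1$; alternatively one could round up, absorbing the small loss into $R$.
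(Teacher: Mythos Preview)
Your argument is correct and is the standard derivation of this tail bound from Gaussian hypercontractivity. One minor arithmetic slip: after choosing $p-1=(t/(e\sigma))^{2/q}$ you get $e^{-p}=e^{-1}\cdot\exp\bigl(-(t^2/(e^2\sigma^2))^{1/q}\bigr)$, so the prefactor is $e^{-1}$ rather than $e$; this only strengthens your conclusion and the claimed inequality with $R=e^2$ and prefactor $e^2$ certainly holds.

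As for comparison with the paper: there is nothing to compare. The paper does not prove this theorem; it quotes it as a black box, citing Theorem~1.10 of the hypercontractivity reference. Your write-up supplies exactly the proof that reference would give, so in effect you have filled in what the paper outsourced.
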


\subsubsection{Fourier basis}\label{sec:fourier}

We will largely work in the Fourier domain. Let $\Delta$ be the unitary matrix that converts from the Fourier representation to the standard representation of a vector $v \in \mathbb{R}^p$, i.e.\ $\theta = \Delta \hat\theta$; see (\ref{eq:Delta}). We define the Fourier transform $\hat T$ of a tensor $T$ as depicted in Figure~\ref{fig:change-basis}(b). One can check that $\Delta^\top \Delta$ is the permutation matrix that swaps indices $i$ and $-i$, i.e.\ $(\Delta^\top \Delta)_{ij} = \mathbbm{1}_{i=-j}$. Thus, as shown in Figure~\ref{fig:change-basis}(c), when two copies of $\Delta$ combine in a tensor network, we denote this by a dotted line which is understood to mean contraction with indices $i$ and $-i$ paired.

\begin{figure}[!ht]\centering
\begin{minipage}{0.8\textwidth}\centering
\includegraphics[width=0.8\textwidth]{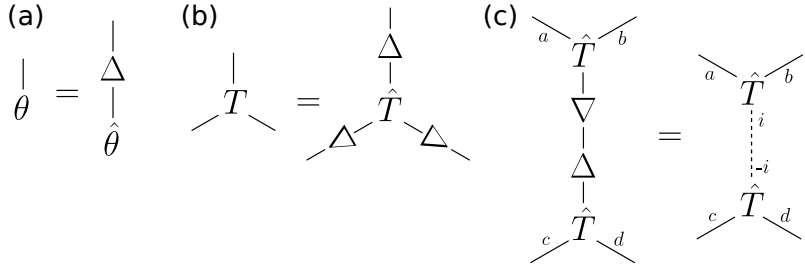}
\caption{{\bf (a)} The matrix $\Delta$ converts a vector's Fourier representation $\hat\theta$ to its standard representation $\theta$. {\bf (b)} We can convert from $\hat T$ to $T$ by attaching three copies of $\Delta$. Note that $\Delta$ is asymmetric and so the orientation of the $\Delta$ symbols is important. {\bf (c)} When two $\Delta$'s connect as shown, this has the effect of a contraction in which indices $i$ and $-i$ are paired. We abbreviate this as a dotted line with one end labeled $i$ and the other end labeled $-i$. The tensor $C$ shown here is $C_{abcd} = \sum_i \hat T_{abi} \hat T_{cd(-i)}$.}
\label{fig:change-basis}
\end{minipage}
\end{figure}

\subsection{Main technical theorem}

We now begin the proof of our main result (Theorem~\ref{thm:mra-main}).

Our algorithm will build its list of candidate solutions by repeating a certain spectral method $L$ times, with fresh randomness $u$ each time. The following main technical theorem shows that each of these trials has a decent probability of success.

\begin{theorem}[main technical theorem]
\label{thm:mra-tech}
Let $\{\theta^k\}$, $\delta$ and $\mathcal{T}$ be as in Theorem~\ref{thm:mra-main}. Let $K \le p^\delta$. Let $u \in (\mathbb{R}^p)^{\otimes 5}$ be drawn from $\mathcal{N}(0,I_{p^5})$. There is a matrix $M(\mathcal{T},u) \in \mathbb{R}^{p^2 \times p^2}$ (computable in time $\mathrm{poly}(p)$ from $\mathcal{T}$ and $u$) with the following guarantee. Let $v \in \mathbb{R}^{p^2}$ be the leading eigenvector of $\frac{1}{2}[M(\mathcal{T},u)+M(\mathcal{T},u)^\top]$. Re-shape\footnote{We will see that $M(\mathcal{T},u)$ is a flattening of a 4-tensor, with entries $M(\mathcal{T},u)_{ab,cd}$. Thus $v$ has entries $v_{ab}$ and can be naturally thought of as a $p \times p$ matrix.} $v$ to a $p \times p$ matrix $V$ and let $\tau \in \mathbb{R}^p$ be the (unit-norm) leading eigenvector\footnote{Here the \emph{leading eigenvector} is defined to be the one whose eigenvalue is largest in absolute value.} of $\frac{1}{2}(V+V^\top)$. There is a deterministic predicate $P(\{\theta^k\})$ (defined in Section~\ref{sec:together}) depending only on $\{\theta^k\}$, that is satisfied with high probability (over $\{\theta^k\}$). For fixed $\{\theta^k\}$ satisfying $P(\{\theta^k\})$, for any $k \in [K]$ and any $\varepsilon > 0$, we have $\langle \tau,\theta^k \rangle^2 \ge 1 - \varepsilon - o(1)$ with probability $p^{-O(1)/\varepsilon^4}$ over the randomness of $u$.
\end{theorem}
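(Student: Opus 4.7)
The plan is to split $M(\mathcal{T},u)$ into a signal term and a noise term, show that on a favorable event of probability $p^{-O(1)/\varepsilon^4}$ the signal carries a large rank-one summand aligned with $(\theta^k)^{\otimes 2}$, and bound the noise in spectral norm via the trace moment method. Working in the Fourier basis of Section~\ref{sec:fourier}, every edge of Figure~\ref{fig:net-9} carries a signed frequency in $\pm[p/2]$ and each $T$-vertex imposes the constraint that its three incident frequencies sum to zero. By linearity of the tensor network in $u$, I decompose $u = \alpha\,\phi_k + u_\perp$ where $\phi_k = (\theta^k)^{\otimes 5}/\|\theta^k\|^5$ and $\alpha = \langle u,\phi_k\rangle$. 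The entrywise Fourier correction that defines $M$ from $\tilde M$ is there precisely to cancel the spurious diagonal terms arising from coincident frequencies; I will verify by direct Fourier bookkeeping that after this correction, plugging $u=\phi_k$ into the nine-$T$ contraction and keeping only the contribution where all nine $T$-slots carry the $k$-th summand $\mathbb{E}_h[(h\cdot\theta^k)^{\otimes 3}]$ of $T$ collapses to a scalar multiple $c_k\approx 1$ of $((\theta^k)^{\otimes 2})((\theta^k)^{\otimes 2})^\top$.

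Since $u$ has i.i.d.\ $\mathcal{N}(0,1)$ entries, $\alpha$ is a scalar Gaussian with variance $1$, and the event $\mathcal{A}=\{|\alpha|\ge\alpha_\star\}$ for $\alpha_\star = C\sqrt{\log p}/\varepsilon^{2}$ has probability at least $p^{-O(1)/\varepsilon^4}$ by standard Gaussian anti-concentration. On $\mathcal{A}$, the signal contribution $\alpha\,M(\mathcal{T},\phi_k)$ contains a rank-one summand of operator norm $\Theta(\alpha_\star)$ along $(\theta^k)^{\otimes 2}$. All cross-contributions---those in which some $T$-slots carry $\mathbb{E}_h[(h\cdot\theta^{k'})^{\otimes 3}]$ for $k'\ne k$, or come from the random tensor $u_\perp$---are small and nearly orthogonal to this direction: the former by the deterministic predicate $P(\{\theta^k\})$, which I take to collect the high-probability events of the preliminaries (nearly unit norms, entrywise $\tilde O(1/\sqrt p)$-smallness, pairwise near-orthogonality, all Fourier coefficients nonzero), and the latter by the trace moment bound below.

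The heart of the proof is the bound $\|M(T,u_\perp)\| + \|M(E,u)\| = o(\alpha_\star)$ with overwhelming probability. The term $\|M(E,u)\|$ is handled crudely via the hypothesis $\|E\|_\infty \le K^{-8}p^{-4}/\mathrm{polylog}(p)$. For $M(T,u_\perp)$ I apply the trace moment method: as in Figure~\ref{fig:trace-method}, $\mathbb{E}[\mathrm{Tr}((M(T,u_\perp)M(T,u_\perp)^\top)^q)]$ equals the value of the tensor network obtained by stringing $2q$ copies of the nine-$T$-plus-$u$ gadget of Figure~\ref{fig:net-9} into a ring. Expanding $T=\sum_k\mathbb{E}_g[(g\cdot\theta^k)^{\otimes 3}]$ and applying Isserlis/Wick to the Gaussian $\theta^k$ and $u_\perp$ coordinates reduces this expectation to a sum over (i) Wick pairings of the $\theta^k$- and $u$-legs, and (ii) frequency labelings $\ell$ of the expanded network's edges obeying every triangle sum-to-zero constraint. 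For each fixed pairing, the number of valid labelings is $p^{c(\ell)}$ with $c(\ell)$ the dimension of the solution space of the $\mathrm{SO}(2)$-constraint system; this is weighted against the $p^{-(\#\text{pairings})}$ factor from $\theta^k\sim\mathcal{N}(0,I/p)$ (and a similar $p^{-(\#\text{pairings of $u$-legs})}$ factor). The nine-copy topology of Figure~\ref{fig:net-9} is chosen precisely so that for every admissible pairing these factors balance in favor of a small upper bound, yielding a trace moment estimate that Markov's inequality converts into $\|M(T,u_\perp)\|=o(\alpha_\star)$ with overwhelming probability. Two Davis--Kahan steps---first on $\tfrac12(M+M^\top)$ to obtain $v\approx\pm(\theta^k)^{\otimes 2}$ in Frobenius norm, then on $\tfrac12(V+V^\top)$ to obtain $\tau\approx\pm\theta^k$---translate the spectral bounds into the claimed $\langle\tau,\theta^k\rangle^2\ge 1-\varepsilon-o(1)$.

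The main obstacle, by a wide margin, is the combinatorial bookkeeping in the trace moment step. For each Wick pairing one must classify the cycle structure of the induced $\mathrm{SO}(2)$-constraint system on the expanded $18q$-$T$-vertex ring, and verify the favorable balance between the $p^{c(\ell)}$ valid labelings and the $p^{-(\#\text{pairings})}$ Gaussian weight. This is genuinely unlike a standard Wigner or Wishart moment calculation, and both the use of a nine-copy gadget (in place of the four-copy one of Figure~\ref{fig:net-intro}(c)) and the choice of an order-5 tensor for $u$ are dictated by the requirement that this balance tilt in the signal's favor, as elaborated in Appendix~\ref{app:design}.
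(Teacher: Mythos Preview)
Your high-level decomposition (signal from $\alpha\phi_k$, cross terms from other $k'$, noise from $u_\perp$, error from $E$) matches the paper, as does the anti-concentration event $|\alpha|\gtrsim C\sqrt{\log p}/\varepsilon^2$ and the two-stage eigenvector extraction. But several technical ingredients are missing or misdescribed, and at least one is a genuine gap.

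\textbf{Decoupling the two sources of randomness.} You propose a single trace-moment computation $\mathbb{E}_{\theta,u_\perp}[\mathrm{Tr}((M(T,u_\perp)M(T,u_\perp)^\top)^q)]$ with Wick pairings over both $\theta$ and $u_\perp$. The theorem, however, demands a predicate $P(\{\theta^k\})$ depending only on the signals, after which the success probability is purely over $u$. The paper achieves this by first applying a random-contraction lemma (their Theorem~\ref{thm:MSS-rand-contract}, from \cite{MSS-tensor-sos}): since $M(T,\tilde u)$ is linear in $\tilde u$, one writes $M(T,\tilde u)=(I\otimes I\otimes\tilde u^\top)W$ for a fixed (in $u$) order-3 tensor $W$, and the contraction lemma reduces $\|M(T,\tilde u)\|$ to bounding the spectral norm of a specific flattening $\tilde W$ of $W$. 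Only then is the trace moment method applied, with expectation over $\{\theta^k\}$ alone, yielding the deterministic predicate $P_1=\{\|\tilde W\|\le O(1)\}$. Your joint-expectation route would also have to integrate out $2q$ independent copies of the $p^5$-dimensional Gaussian $u_\perp$ (with covariance $I-\phi_k\phi_k^\top$, not $I$), making the combinatorics substantially heavier, and would not obviously deliver the clean conditional structure.

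\textbf{The combinatorics of the trace moment.} The paper does not argue via Wick pairings and cycle structure. Instead, each vertex of the expanded network $\mathcal{G}_q$ gets a label $k_v\in[K]$ (the heterogeneous component), edges get frequency labels, and a labeling is \emph{valid} if the zero-sum constraints hold at every vertex and each ``region'' $\{v:k_v=k\}$ has balanced incident $\pm i$ labels. The key counting lemma (their Lemma~\ref{lem:count-labelings}) bounds the number of valid edge-labelings with exactly $c$ repeated labels by $\lesssim p^{1+9q-c/25}$, and a separate lemma shows $r$ regions force $\ge r/2$ repeated labels. The argument is a careful ``free labels'' count per layer, and it is here that setting $S_{abcd}=0$ whenever $a=-b$ or $c=-d$ is indispensable (otherwise one layer's repeat $a=-b$ propagates to all layers for free). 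Your description of $\mathcal{S}$ as ``cancelling spurious diagonal terms from coincident frequencies'' misses both of its actual roles: the nonzero entries are $S_{abcd}=1/\mathbb{E}[s_{abcd}]$, a normalization so that the signal matrix concentrates (via hypercontractivity, not just bookkeeping) to $\Theta^k$; and the zeroed entries are what makes the noise combinatorics close.

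\textbf{A smaller point.} $M(\mathcal{T},u)$ is degree $9$ in $\mathcal{T}$, not linear, so ``$M(E,u)$'' is not the error term; one must expand $M(T+E,u)-M(T,u)$ into cross terms with at least one $E$-factor and bound those.
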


We first see how our main technical theorem (Theorem~\ref{thm:mra-tech}) implies our main theorem (Theorem~\ref{thm:mra-main}).

\begin{proof}[Proof of Theorem~\ref{thm:mra-main}]
To produce the list $\tau_1,\ldots,\tau_L$, the algorithm draws independent samples $u_1,\ldots,u_L \sim \mathcal{N}(0,I_{p^5})$. For $i \in [L]$, extract $\tau_i$ from $M(\mathcal{T},u_i)$ as in Theorem~\ref{thm:mra-tech}. For fixed $k$, let $\gamma = p^{-O(1)/\varepsilon^4}$ denote the success probability of a single trial. For fixed $k$, the probability of success after $L$ trials is at least $1-(1-\gamma)^L \ge 1-\exp(-\gamma L)$. Taking a union bound over $[K]$, the
overall probability of failure is at most $K \exp(-\gamma L) \le p^\delta \exp(-\gamma L)$. To make this $o(1)$, it is sufficient to take $L = \log^2(p)/\gamma = p^{O(1)/\varepsilon^4}$.
\end{proof}

We now begin the proof of the main technical theorem (Theorem~\ref{thm:mra-tech}). The $p^2 \times p^2$ matrix $M(\mathcal{T},u)$ is the $(\{a,b\},\{c,d\})$-flattening of the tensor depicted in Figure~\ref{fig:net-9}, but with an additional post-processing operator $\mathcal{S}$ applied to it. This operator is easiest to describe in the Fourier domain: let $\hat{\mathcal{S}}$ be the operator that acts entrywise on a 4-tensor by multiplying the $abcd$ entry by a nonnegative real number $S_{abcd}$ to be specified later. We will have $S_{abcd} = S_{(-a)(-b)(-c)(-d)}$ and so $\mathcal{S}$ takes real 4-tensors to real 4-tensors. We define
$$M(\mathcal{T},u) = (\Delta \otimes \Delta)[\hat M(\mathcal{T},u)](\Delta \otimes \Delta)^\top$$
where $\Delta$ is as in Section~\ref{sec:fourier}, and where $\hat M(\mathcal{T},u)$ is obtained by applying $\hat{\mathcal{S}}$ to the $(\{a,b\},\{c,d\})$-flattening of the tensor depicted in Figure~\ref{fig:net-mra}.

\begin{figure}\centering
\begin{minipage}{0.8\textwidth}\centering
\begin{tikzpicture}[auto]
\def\ra{2};  
\def\rad{3};  

\node (u) at (0,0) {$\hat u$};
\node (T1) at ({-\ra*cos(1*360/9)},{\ra*sin(1*360/9)}) {$\hat{\mathcal{T}}$};
\node (T2) at ({-\ra*cos(2*360/9)},{\ra*sin(2*360/9)}) {$\hat{\mathcal{T}}$};
\node (T3) at ({-\ra*cos(3*360/9)},{\ra*sin(3*360/9)}) {$\hat{\mathcal{T}}$};
\node (T4) at ({-\ra*cos(4*360/9)},{\ra*sin(4*360/9)}) {$\hat{\mathcal{T}}$};
\node (T5) at ({-\ra*cos(5*360/9)},{\ra*sin(5*360/9)}) {$\hat{\mathcal{T}}$};
\node (T6) at ({-\ra*cos(6*360/9)},{\ra*sin(6*360/9)}) {$\hat{\mathcal{T}}$};
\node (T7) at ({-\ra*cos(7*360/9)},{\ra*sin(7*360/9)}) {$\hat{\mathcal{T}}$};
\node (T8) at ({-\ra*cos(8*360/9)},{\ra*sin(8*360/9)}) {$\hat{\mathcal{T}}$};
\node (T9) at ({-\ra*cos(9*360/9)},{\ra*sin(9*360/9)}) {$\hat{\mathcal{T}}$};
\coordinate (a) at ({-\rad*cos(1*360/9)},{\rad*sin(1*360/9)});
\coordinate (c) at ({-\rad*cos(3*360/9)},{\rad*sin(3*360/9)});
\coordinate (b) at ({-\rad*cos(5*360/9)},{\rad*sin(5*360/9)});
\coordinate (d) at ({-\rad*cos(7*360/9)},{\rad*sin(7*360/9)});
\draw[dashed] (T1) to node[very near start,inner sep=1pt] {\scriptsize{$i_2$}} node[very near end,inner sep=1pt] {\scriptsize{-$i_2$}} (T2);
\draw[dashed] (T2) to node[pos=0,inner sep=1pt] {\scriptsize{$i_3$}} node[pos=0.6,inner sep=1pt] {\scriptsize{-$i_3$}} (T3);
\draw[dashed] (T3) to node[very near start,inner sep=1pt] {\scriptsize{$i_4$}} node[very near end,inner sep=1pt] {\scriptsize{-$i_4$}} (T4);
\draw[dashed] (T4) to node[very near start,inner sep=2pt] {\scriptsize{$i_5$}} node[very near end,inner sep=2pt] {\scriptsize{-$i_5$}} (T5);
\draw[dashed] (T5) to node[very near start,inner sep=1pt] {\scriptsize{$i_6$}} node[very near end,inner sep=1pt] {\scriptsize{-$i_6$}} (T6);
\draw[dashed] (T6) to node[pos=0.3,inner sep=1pt] {\scriptsize{$i_7$}} node[pos=1,inner sep=1pt] {\scriptsize{-$i_7$}} (T7);
\draw[dashed] (T7) to node[very near start,inner sep=1pt] {\scriptsize{$i_8$}} node[very near end,inner sep=1pt] {\scriptsize{-$i_8$}} (T8);
\draw[dashed] (T8) to node[very near start,inner sep=1pt] {\scriptsize{$i_9$}} node[very near end,inner sep=1pt] {\scriptsize{-$i_9$}} (T9);
\draw[dashed] (T9) to node[very near start,inner sep=1pt] {\scriptsize{$i_1$}} node[very near end,inner sep=1pt] {\scriptsize{-$i_1$}} (T1);
\draw (T1) -- (a) node [midway,above,inner sep=2pt] {\scriptsize{$a$}};
\draw[dashed] (T2) to node[near start,inner sep=1pt] {\scriptsize{$j_1$}} node[very near end,inner sep=1pt] {\scriptsize{-$j_1$}} (u);
\draw (T3) -- (c) node [midway,right,inner sep=2pt] {\scriptsize{$c$}};
\draw[dashed] (T4) to node[near start,inner sep=1pt] {\scriptsize{$j_2$}} node[very near end,inner sep=1pt] {\scriptsize{-$j_2$}} (u);
\draw (T5) -- (b) node [midway,below,inner sep=2pt] {\scriptsize{$b$}};
\draw[dashed] (T6) to node[near start,inner sep=1pt] {\scriptsize{$j_3$}} node[very near end,inner sep=1pt] {\scriptsize{-$j_3$}} (u);
\draw (T7) -- (d) node [midway,left,inner sep=2pt] {\scriptsize{$d$}};
\draw[dashed] (T8) to node[near start,inner sep=1pt] {\scriptsize{$j_4$}} node[pos=0.7,inner sep=1pt] {\scriptsize{-$j_4$}} (u);
\draw[dashed] (T9) to node[near start,inner sep=1pt] {\scriptsize{$j_5$}} node[very near end,inner sep=1pt] {\scriptsize{-$j_5$}} (u);
\end{tikzpicture}
\caption{The $p^2 \times p^2$ matrix $\hat M(\mathcal{T},u)$ is obtained by applying $\hat{\mathcal{S}}$ to the $(\{a,b\},\{c,d\})$-flattening of the tensor shown here. The dotted lines and the Fourier transforms $\hat{\mathcal{T}}$ and $\hat u$ are defined as in Figure~\ref{fig:change-basis}.}
\label{fig:net-mra}
\end{minipage}
\end{figure}

Explicitly, we have
\begin{align*}
\hat M(\mathcal{T},u)_{ab,cd} = S_{abcd} \sum_{i_1,\ldots,i_9} \sum_{j_1,\ldots,j_5} \hat u_{-j_1, -j_2, -j_3, -j_4, -j_5}
\hat{\mathcal{T}}_{-i_1,a,i_2}
&\hat{\mathcal{T}}_{-i_2,j_1,i_3}
\hat{\mathcal{T}}_{-i_3,c,i_4}
\hat{\mathcal{T}}_{-i_4,j_2,i_5}
\hat{\mathcal{T}}_{-i_5,b,i_6}\\
&\hat{\mathcal{T}}_{-i_6,j_3,i_7}
\hat{\mathcal{T}}_{-i_7,d,i_8}
\hat{\mathcal{T}}_{-i_8,j_4,i_9}
\hat{\mathcal{T}}_{-i_9,j_5,i_1}.
\end{align*}

Recall $\mathcal{T} = T + E$ where $$T = \sum_{k=1}^K \,\Ex_g \left[(g \cdot \theta^k)^{\otimes 3}\right].$$
Let $\theta^k$ be the signal we are hoping to recover. Let
$$T^k = \Ex_g \left[(g \cdot \theta^k)^{\otimes 3}\right].$$

Recall $u \sim \mathcal{N}(0,I_{p^5}) \in (\mathbb{R}^p)^{\otimes 5}$. Write $u = \alpha \,(\theta^k)^{\otimes 5} + \tilde u$ with $\tilde u \perp (\theta^k)^{\otimes 5}$.
We will break down the matrix $M(\mathcal{T},u)$ into the following terms:
\begin{align*}
M(\mathcal{T},u) &= M(T,u) + [M(\mathcal{T},u) - M(T,u)]\\
&= \alpha M(T,(\theta^k)^{\otimes 5}) + M(T,\tilde u) + [M(\mathcal{T},u) - M(T,u)]\\
&= \alpha M(T^k,(\theta^k)^{\otimes 5}) + \alpha[M(T,(\theta^k)^{\otimes 5}) - M(T^k,(\theta^k)^{\otimes 5})] + M(T,\tilde u) + [M(\mathcal{T},u) - M(T,u)].
\end{align*}
Here we have used the fact that $M(\mathcal{T},u)$ is linear in $u$. We now have four terms to bound separately.

\subsection{Signal term}\label{sec:signal}

Here we consider the signal term $M(T^k,(\theta^k)^{\otimes 5})$. Let $\Theta^k = (\theta^k \otimes \theta^k)(\theta^k \otimes \theta^k)^\top$, the matrix we would like to recover. Intuitively, we will show that if we were to correctly guess $u = (\theta^k)^{\otimes 5}$, then the resulting matrix matrix would be close to $\Theta^k$. In order for this to be true, we will need to choose the parameters $S_{abcd}$ appropriately.

\begin{proposition} \label{prop:signal}
For any $k \in [K]$, with overwhelming probability over $\theta^k$,
$$\|M(T^k,(\theta^k)^{\otimes 5}) - \Theta^k\| \le o(1).$$
\end{proposition}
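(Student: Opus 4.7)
The plan is to work entirely in the Fourier basis, where both $\hat T^k$ and $\hat u=(\hat\theta^k)^{\otimes 5}$ factor as products of single-frequency coefficients. Specifically, $\hat T^k_{\ell_1\ell_2\ell_3} = \mathbbm{1}_{\ell_1+\ell_2+\ell_3=0}\,\hat\theta^k_{\ell_1}\hat\theta^k_{\ell_2}\hat\theta^k_{\ell_3}$, so when we substitute into the explicit expression for $\hat M(T^k,(\theta^k)^{\otimes 5})_{ab,cd}$ displayed before the proposition, every cycle edge pairs $i_\ell$ with $-i_\ell$ and yields $\hat\theta^k_{i_\ell}\hat\theta^k_{-i_\ell}=|\hat\theta^k_{i_\ell}|^2$; every $u$-leg pairs $j_\ell$ with $-j_\ell$ and yields $|\hat\theta^k_{j_\ell}|^2$; and the four external legs contribute exactly $\hat\theta^k_a\hat\theta^k_b\hat\theta^k_c\hat\theta^k_d$. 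Each $\mathbbm{1}_{\cdot=0}$ from a copy of $\hat T^k$ becomes one of nine linear constraints on $(i_1,\ldots,i_9,j_1,\ldots,j_5,a,b,c,d)$. Four of these read $i_1-i_2=a$, $i_3-i_4=c$, $i_5-i_6=b$, $i_7-i_8=d$, while the other five determine each $j_\ell$ in terms of the $i$'s, leaving five free $i$-parameters. Thus
\[\hat M(T^k,(\theta^k)^{\otimes 5})_{ab,cd} = S_{abcd}\cdot\hat\theta^k_a\hat\theta^k_b\hat\theta^k_c\hat\theta^k_d\cdot F_{abcd}(\theta^k),\]
where $F_{abcd}(\theta^k):=\sum\prod_{\ell=1}^{9}|\hat\theta^k_{i_\ell}|^2\prod_{\ell=1}^{5}|\hat\theta^k_{j_\ell}|^2$ is summed over the $\sim p^5$ valid labelings.

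I would then specify $S_{abcd}$ to be the deterministic quantity $1/\bar F_{abcd}$, where $\bar F_{abcd}:=\mathbb{E}_{\theta\sim\mathcal{N}(0,I_p/p)}[F_{abcd}(\theta)]$. Since $\mathbb{E}[|\hat\theta_i|^2]=1/p$ and labelings with all indices distinct dominate, a standard counting argument (with diagonal/coincidence patterns contributing lower-order corrections) shows $\bar F_{abcd}=\Theta(p^{-9})$ for all relevant $(a,b,c,d)$, so $S_{abcd}=\Theta(p^9)$. The symmetry $S_{abcd}=S_{-a,-b,-c,-d}$ required in the definition of $\hat{\mathcal S}$ is immediate from the sign-flip symmetry of the constraint system.

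The heart of the argument is to show that $F_{abcd}(\theta^k)/\bar F_{abcd}=1+o(1)$ uniformly in $(a,b,c,d)$, with overwhelming probability over $\theta^k$. Since $F_{abcd}$ is a polynomial of fixed degree (at most $28$) in the independent real-Gaussian entries of $\theta^k$, Theorem~\ref{thm:hyp} applies. A second-moment calculation classifies pairs of valid labelings by which internal indices they share: disjoint pairs contribute exactly $\bar F_{abcd}^2$, while pairs sharing $t\ge 1$ indices lose a combinatorial factor of $p^{\Omega(t)}$ in count and gain only constant-factor boosts from higher moments of $|\hat\theta_i|^2$, yielding $\mathrm{Var}[F_{abcd}] \le \bar F_{abcd}^2\cdot p^{-\Omega(1)}$. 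Hypercontractivity then gives deviations at scale $p^{-\Omega(1)}\bar F_{abcd}$ with probability $1-\exp(-p^{\Omega(1)})$, which survives a union bound over the $p^4$ frequency quadruples.

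Writing the conclusion as $\hat M_{ab,cd}=\hat\theta^k_a\hat\theta^k_b\hat\theta^k_c\hat\theta^k_d\,(1+\eta_{abcd})$ with $\max_{abcd}|\eta_{abcd}|=o(1)$, and observing that unpacking $\theta^k=\Delta\hat\theta^k$ gives $\hat\Theta^k_{ab,cd}=\hat\theta^k_a\hat\theta^k_b\hat\theta^k_c\hat\theta^k_d$ (the Fourier-basis form of $\Theta^k=(\theta^k\otimes\theta^k)(\theta^k\otimes\theta^k)^\top$), unitarity of $\Delta\otimes\Delta$ gives $\|M-\Theta^k\|=\|\hat M-\hat\Theta^k\|$. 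Bounding the spectral norm by the Frobenius norm,
\[\|\hat M-\hat\Theta^k\|^2 \le \sum_{a,b,c,d}|\eta_{abcd}|^2\,|\hat\theta^k_a|^2|\hat\theta^k_b|^2|\hat\theta^k_c|^2|\hat\theta^k_d|^2 \le \max_{abcd}|\eta_{abcd}|^2\cdot\|\hat\theta^k\|^8=o(1),\]
using $\|\hat\theta^k\|^2=\|\theta^k\|^2=1+o(1)$ with overwhelming probability. The main obstacle is the concentration step: the second-moment bookkeeping of shared-index patterns across nine coupled constraints is delicate, and the bound $\mathrm{Var}[F_{abcd}]\le p^{-\Omega(1)}\bar F_{abcd}^2$ must hold \emph{uniformly} in $(a,b,c,d)$ (including near-boundary frequency tuples where the count of valid labelings deviates mildly from the generic $\Theta(p^5)$), which is where the technical work concentrates.
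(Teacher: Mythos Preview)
Your approach is essentially identical to the paper's: the same Fourier factorization $\hat M_{ab,cd}=S_{abcd}\,s_{abcd}\,\hat\theta^k_a\hat\theta^k_b\hat\theta^k_c\hat\theta^k_d$ (your $F_{abcd}$ is the paper's $s_{abcd}$), the same choice $S_{abcd}=1/\mathbb{E}[s_{abcd}]$, the same moment bounds $\mathbb{E}[s_{abcd}]=\Theta(p^{-9})$ and $\mathrm{Var}[s_{abcd}]=O(p^{-19})$, hypercontractivity for concentration, and a Frobenius bound to finish. Your final step is in fact a bit cleaner than the paper's, since bounding by $\max|\eta_{abcd}|^2\cdot\|\hat\theta^k\|^8$ avoids a case split.

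There is one substantive point you miss, though it does not affect the validity of \emph{this} proposition. The paper additionally sets $S_{abcd}=0$ whenever $a=-b$ or $c=-d$, and handles those $O(p^3)$ entries separately in the Frobenius bound (they contribute $\hat\theta^k_a\hat\theta^k_b\hat\theta^k_c\hat\theta^k_d=\tilde O(p^{-2})$ each, which is harmless). This zeroing is irrelevant for the signal term---your version of $S_{abcd}$ works perfectly well here---but it is \emph{crucial} for the later noise-term analysis (the trace-moment labeling count in Lemma~\ref{lem:count-labelings}): without it, the constraint $a^{(\ell)}=-b^{(\ell)}$ can propagate across all layers of $\mathcal{G}_q$ at the cost of a single free label, destroying the bound. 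So your $S_{abcd}$ proves Proposition~\ref{prop:signal} but defines a matrix $M$ for which Proposition~\ref{prop:noise-u} would fail.
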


\noindent This section is devoted to proving Proposition~\ref{prop:signal}. Recall
$$\hat T_{i_1 i_2 i_3} = \mathbbm{1}_{i_1+i_2+i_3=0}\sum_{k=1}^K \hat\theta^k_{i_1} \hat\theta^k_{i_2} \hat\theta^k_{i_3}$$
and so
$$\hat T^k_{i_1 i_2 i_3} = \mathbbm{1}_{i_1+i_2+i_3=0}\, \hat\theta^k_{i_1} \hat\theta^k_{i_2} \hat\theta^k_{i_3}.$$
Without loss of generality, take $k=1$. We have
\begin{equation}\label{eq:hatM1}
\hat M(T^1,(\theta^1)^{\otimes 5})_{ab,cd} = S_{abcd}\, s_{abcd}\, \hat\theta^1_a \hat\theta^1_b \hat\theta^1_c \hat\theta^1_d
\end{equation}
where
$$s_{abcd} = \sum_{i_1,\ldots,i_9} \sum_{j_1,\ldots,j_5} \left(\mathbbm{1}_{-i_1+a+i_2=0} \cdots \mathbbm{1}_{-i_9+j_5+i_1=0}\right) \left( |\hat\theta^1_{i_1}|^2 \cdots |\hat\theta^1_{i_9}|^2 |\hat\theta^1_{j_1}|^2 \cdots |\hat\theta^1_{j_5}|^2 \right).$$
Here the indicator functions enforce that for each copy of $\hat{\mathcal{T}}$ in Figure~\ref{fig:net-mra}, the three incident labels sum to zero. Define
$$S_{abcd} \defeq \left\{\begin{array}{ll} 0 & \text{if }a=-b\text{ or }c=-d\\ 1/\mathbb{E}[s_{abcd}] & \text{otherwise}. \end{array}\right.$$
The reason for zeroing out some $S_{abcd}$'s will not be apparent until later (Section~\ref{sec:noise}); this is crucially used in the proof of Lemma~\ref{lem:count-labelings} for bounding the noise term $M(T,\tilde u)$. The reason for $1/\mathbb{E}[s_{abcd}]$ should be clear from (\ref{eq:hatM1}).

We will show that $s_{abcd}$ concentrates near its expectation. We start with a basic computation of the moments of $\hat\theta^1$ (which of course holds for any $\hat\theta^k$).
\begin{lemma}\label{lem:moments}
$\mathbb{E}|\hat\theta^1_i|^{2k} = k!\, p^{-k}$. If $k_1 \ne k_2$ then $\mathbb{E}[(\hat\theta_i^1)^{k_1} (\hat\theta_{-i}^1)^{k_2}] = 0$. If $i \ne \pm j$ then $\hat\theta_i^1$ and $\hat\theta_j^1$ are independent.
\end{lemma}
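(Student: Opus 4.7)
The plan is to reduce each of the three claims to elementary properties of the bivariate real Gaussian $(\theta_j,\theta_{-j})$, together with the circular symmetry of the complex variable $\hat\theta_j$.

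For (1), I would first rewrite $|\hat\theta^1_j|^2 = \tfrac{1}{2}(\theta_j^2+\theta_{-j}^2)$ directly from the definition of $\hat\theta_j$ in (\ref{eq:Delta}). Writing $\theta_j = Y_1/\sqrt{p}$, $\theta_{-j} = Y_2/\sqrt{p}$ with $Y_1,Y_2\sim\mathcal{N}(0,1)$ i.i.d., this becomes $|\hat\theta^1_j|^2 = \tfrac{1}{2p}(Y_1^2+Y_2^2) = W/p$ where $W := (Y_1^2+Y_2^2)/2$ is an Exp$(1)$ random variable (since $Y_1^2+Y_2^2\sim \chi^2_2$). Then $\mathbb{E}[W^k]=k!$ immediately gives $\mathbb{E}|\hat\theta^1_j|^{2k}=k!/p^k$. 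The index $j<0$ case is identical since $|\hat\theta^1_{-j}|=|\hat\theta^1_j|$.

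For (2), the point is that $\hat\theta^1_j$ is a circularly symmetric complex Gaussian, i.e.\ $e^{\im\alpha}\hat\theta^1_j\stackrel{d}{=}\hat\theta^1_j$ for every $\alpha\in\mathbb{R}$ (this follows because the real and imaginary parts are i.i.d.\ centered Gaussians of equal variance). Since $\hat\theta^1_{-j}=\overline{\hat\theta^1_j}$, applying this rotation to the product gives
\[
\mathbb{E}\bigl[(\hat\theta^1_j)^{k_1}(\hat\theta^1_{-j})^{k_2}\bigr]
= \mathbb{E}\bigl[(e^{\im\alpha}\hat\theta^1_j)^{k_1}(\overline{e^{\im\alpha}\hat\theta^1_j})^{k_2}\bigr]
= e^{\im\alpha(k_1-k_2)}\,\mathbb{E}\bigl[(\hat\theta^1_j)^{k_1}(\hat\theta^1_{-j})^{k_2}\bigr].
\]
Choosing any $\alpha$ with $e^{\im\alpha(k_1-k_2)}\neq 1$ (possible because $k_1\neq k_2$) forces the expectation to vanish.

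For (3), observe from (\ref{eq:Delta}) that $(\hat\theta^1_j,\hat\theta^1_{-j})$ is a deterministic function of $(\theta^1_j,\theta^1_{-j})$ alone, and symmetrically $(\hat\theta^1_i,\hat\theta^1_{-i})$ depends only on $(\theta^1_i,\theta^1_{-i})$. When $i\neq\pm j$ these two pairs of real Gaussian coordinates are disjoint and therefore independent, so $\hat\theta^1_i$ and $\hat\theta^1_j$ are independent. None of the steps pose any real obstacle; the only care needed is in (2), where one must be explicit that the rotational invariance applies to the joint pair $(\hat\theta^1_j,\hat\theta^1_{-j})$ via conjugation, which is exactly what the definition provides.
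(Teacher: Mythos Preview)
Your proposal is correct and follows essentially the same approach as the paper: for (1) you identify $|\hat\theta^1_j|^2$ with $\tfrac{1}{2p}\chi^2_2$ (equivalently $\mathrm{Exp}(1)/p$) and compute its moments, for (2) you use the uniform random phase of $\hat\theta^1_j$ together with $\hat\theta^1_{-j}=\overline{\hat\theta^1_j}$, and for (3) you use that disjoint coordinate pairs of $\theta^1$ are independent. The paper's proof is simply a terser rendition of the same three observations.
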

\begin{proof}
The third statement is immediate from (\ref{eq:Delta}), since $\theta^1 \sim \mathcal{N}(0,I/p)$. The second statement is immediate from the fact that the complex phase of $\hat\theta_i^1$ is a uniformly random angle, and $\hat\theta_{-i}^1 = \overline{\hat\theta_i^1}$. For the first statement, $|\hat\theta^1_i|^2 \sim \frac{1}{2p} \chi_2^2$, so use the known formula for chi-squared moments: $\mathbb{E}[(\chi_2^2)^k] = 2^k k!$. 
\end{proof}

We next show that for every $a,b,c,d$ we have $\mathbb{E}[s_{abcd}] = \Theta(p^{-9})$, specifically:
\begin{lemma} \label{lem:Es}
There exist universal positive constants $c_1$ and $c_2$ such that for every $a,b,c,d \in \pm [p/2]$,
$$c_1\, p^{-9} \le \mathbb{E}[s_{abcd}] \le c_2\, p^{-9}.$$
\end{lemma}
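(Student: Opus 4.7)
The plan is to evaluate $\mathbb{E}[s_{abcd}]$ by exchanging sum and expectation, applying Lemma~\ref{lem:moments} to each Gaussian product, and then counting integer solutions of the 9 sum-to-zero constraints by a dimension argument. The first step is to classify the summands by the partition $\pi$ of the $14$ index positions $(i_1,\ldots,i_9,j_1,\ldots,j_5)$ induced by the $\pm$-equivalence $x_m \sim x_n \iff x_m = \pm x_n$. By Lemma~\ref{lem:moments}, indices in different blocks contribute independent $\hat\theta^1$-coordinates, while within a block of size $k$ the product collapses to $|\hat\theta^1_v|^{2k}$ with expectation $k!\,p^{-k}$. Hence, letting $N_\pi(a,b,c,d)$ count the assignments in $(\pm[p/2])^{14}$ that satisfy all 9 constraints and induce exactly $\pi$,
\begin{equation*}
\mathbb{E}[s_{abcd}] \;=\; p^{-14}\sum_{\pi} N_\pi(a,b,c,d)\prod_\ell k_\ell!,
\end{equation*}
where the sum runs over the finitely many partitions of $\{1,\ldots,14\}$ and $\prod_\ell k_\ell! \le 14!$.

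For the upper bound, the 9 sum-to-zero constraints cut out a $5$-dimensional affine subspace of $\mathbb{Z}^{14}$: telescoping around the ring, eight of them determine $i_2,\ldots,i_9$ in terms of $(i_1,j_1,j_2,j_3,j_4,a,b,c,d)$, and the ninth determines $j_5$. So the total number $N_{\mathrm{tot}}$ of constraint-satisfying assignments in the box $(\pm[p/2])^{14}$ is at most $p^5$ (one factor of $p$ per free parameter), and in particular every $N_\pi \le p^5$. Summing the finitely many partition contributions immediately gives $\mathbb{E}[s_{abcd}] \le c_2\,p^{-9}$ with $c_2$ a universal constant.

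For the lower bound, since every summand is nonnegative, I would keep only the contribution of the finest partition $\pi_{\mathrm{triv}}$ (all singletons), giving $\mathbb{E}[s_{abcd}] \ge N_{\pi_{\mathrm{triv}}}\cdot p^{-14}$, and then write $N_{\pi_{\mathrm{triv}}} = N_{\mathrm{tot}} - N_{\mathrm{deg}}$, where $N_{\mathrm{deg}}$ counts assignments with at least one $\pm$-coincidence. Each coincidence $x_m = \pm x_n$ imposes an additional independent linear constraint, reducing the dimension of the solution space by one and contributing at most $O(p^4)$ assignments; summing over the $O(1)$ pairs gives $N_{\mathrm{deg}} = O(p^4)$. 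It therefore remains to prove $N_{\mathrm{tot}} \ge c_1' p^5$ uniformly in $a,b,c,d \in \pm[p/2]$. The main obstacle lies precisely here: for extreme $|a|,|b|,|c|,|d| \approx p/2$ one must still exhibit $\Omega(p^5)$ lattice points in the polytope of box-feasible $(i_1,j_1,j_2,j_3,j_4)$, even though the walk $W_m = \sum_{k<m} e_k$ can span a range close to $p$. The resolution is to reparametrize by cumulative walk positions (for instance $s = i_1 - j_1$, $u = s - j_2$, $w = u - j_3$, $v = w - j_4$), which decouples the nine slab constraints into nested elementary intervals; a short case analysis on the signs of $(a,b,c,d)$ then exhibits, in each case, an explicit sub-box of side $\Omega(p)$ in each coordinate lying inside the feasible set, yielding the desired uniform $\Omega(p^5)$ count and completing the proof.
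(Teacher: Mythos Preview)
Your proposal is correct and essentially matches the paper's argument. The upper bound is the same: at most $p^5$ index configurations (five free parameters after the nine zero-sum constraints), each contributing at most $14!\,p^{-14}$ in expectation.

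For the lower bound, the paper takes a slightly shorter path than you do. Since every term of $s_{abcd}$ is a product of $|\hat\theta^1|^2$ factors, its expectation is at least $p^{-14}$ regardless of collisions---indeed your own formula gives $p^{-14}\prod_\ell k_\ell! \ge p^{-14}$---so there is no need to isolate the trivial-partition contribution or to subtract off $N_{\mathrm{deg}}$; it suffices to show $N_{\mathrm{tot}} \ge \Omega(p^5)$ directly. The paper does this by exactly the kind of explicit construction you sketch: build one feasible assignment by choosing the $j$'s to keep the partial sums in range, then perturb each of five free coordinates by $\Theta(p)$. One small correction to your sketch: the natural cumulative coordinates are the $i_k$'s themselves, since the walk increments alternate between the fixed $a,b,c,d$ and the free $j$'s (e.g.\ $i_2=i_1-a$, $i_3=i_2-j_1$, \ldots), not $i_1-j_1$ etc.; this does not affect the argument.
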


\begin{proof}

Fix $a,b,c,d$. There is a (nonzero) term of $s_{abcd}$ for each choice of indices $i_1,\ldots,i_9,j_1,\ldots,j_5 \in \pm [p/2]$ such that for each copy of $\hat{\mathcal{T}}$ in Figure~\ref{fig:net-mra}, the three incident indices sum to zero. There are at most $p^5$ (nonzero) terms in $s_{abcd}$ because once $i_9,j_5,j_4,j_3,j_1$ are chosen, the zero-sum constraints uniquely determine at most one possible value for the other indices. (We say ``at most one'' since only indices in the set $\pm [p/2]$ are valid.) Each term of $s_{abcd}$ has expectation at most $14!\, p^{-14}$ (by Lemma~\ref{lem:moments}), so $\mathbb{E}[s_{abcd}] \le 14!\, p^{-9}$. This proves the upper bound.

The idea of the lower bound is to argue that $s_{abcd}$ has $\Omega(p^5)$ terms and each term has expectation at least $p^{-14}$. We defer the full proof to Section~\ref{sec:pf-Es}.
\end{proof}

\begin{lemma}
There exists a universal positive constant $c_3$ such that for every $a,b,c,d \in \pm [p/2]$,
$$\mathrm{Var}[s_{abcd}] \le c_3\, p^{-19}.$$
\end{lemma}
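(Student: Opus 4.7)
The plan is to expand
\[
\mathrm{Var}[s_{abcd}] = \sum_{\alpha,\beta}\mathrm{Cov}(X_\alpha, X_\beta),
\]
where $\alpha$ ranges over valid tuples $(i_1,\ldots,i_9,j_1,\ldots,j_5) \in (\pm[p/2])^{14}$ satisfying the $9$ zero-sum constraints from Figure~\ref{fig:net-mra}, and $X_\alpha = \prod_{t=1}^{14} |\hat\theta^1_{\ell_t^\alpha}|^2$ with $\ell_t^\alpha$ the $t$-th index of $\alpha$. Since $|\hat\theta^1_v|^2 = |\hat\theta^1_{-v}|^2$ and the collection $\{|\hat\theta^1_v|^2\}_{v \in [p/2]}$ is mutually independent by Lemma~\ref{lem:moments}, $\mathrm{Cov}(X_\alpha,X_\beta) = 0$ whenever $\alpha$ and $\beta$ use disjoint absolute values. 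This is the mechanism that turns the variance into something strictly smaller than the second moment.

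Next I would establish a uniform pointwise covariance bound $|\mathrm{Cov}(X_\alpha,X_\beta)| \le C_1 p^{-28}$ using the same lemma: grouping the combined $28$ factors by absolute value with multiplicities $n_v$ summing to $28$ yields $\mathbb{E}[X_\alpha X_\beta] = \prod_v n_v!\, p^{-n_v} \le 28!\, p^{-28}$, and analogously $\mathbb{E}[X_\alpha]\mathbb{E}[X_\beta] \le (14!)^2\, p^{-28}$.

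The main combinatorial step is to count pairs $(\alpha,\beta)$ that share at least one index up to sign, and show that this count is $O(p^9)$. By a union bound over the $O(1)$ choices of positions $t,t' \in [14]$ and sign $\epsilon \in \{\pm 1\}$, it suffices to count pairs satisfying $\ell_t^\alpha = \epsilon\, \ell_{t'}^\beta$. I would first enumerate the shared value $v \in \pm[p/2]$ ($\le p$ choices); given $v$, the number of valid $\alpha$ with $\ell_t^\alpha = v$ is at most $p^4$, since the $5$-dimensional solution space of the $9$ zero-sum constraints drops to dimension $\le 4$ once the extra equation $\ell_t^\alpha = v$ is imposed. The same count applies to $\beta$. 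This gives $O(p \cdot p^4 \cdot p^4) = O(p^9)$ dependent pairs, whence $\mathrm{Var}[s_{abcd}] \le O(p^9) \cdot C_1 p^{-28} = O(p^{-19})$.

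The main subtlety is justifying that $\ell_t^\alpha = v$ really is an independent constraint, i.e.\ that no coordinate $\ell_t^\alpha$ is already forced to a single constant by the $9$ zero-sum constraints alone. This is a quick inspection in the parametrization from Lemma~\ref{lem:Es}'s proof, with free indices $(i_9,j_5,j_4,j_3,j_1)$: by walking around the cycle of Figure~\ref{fig:net-mra} and successively applying the constraints, one sees that each of the other $9$ indices unwraps as a linear combination of the free indices and the externals $a,b,c,d$ in which at least one free-index coefficient is nonzero, so every coordinate genuinely varies over $\Theta(p)$ distinct values as $\alpha$ ranges over valid configurations.
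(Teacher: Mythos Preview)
Your proposal is correct and follows essentially the same approach as the paper's proof: expand the variance into covariances, use Lemma~\ref{lem:moments} to bound each nonzero covariance by $O(p^{-28})$, and count that only $O(p^9)$ pairs of terms are dependent. Your treatment is in fact more careful than the paper's sketch, in particular your last paragraph explicitly verifies that pinning any single coordinate genuinely removes a degree of freedom (so the dependent-pair count is $O(p^9)$ and not $O(p^{10})$), a point the paper leaves implicit.
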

\begin{proof}
The variance of a sum can be broken down as $\mathrm{Var}(\sum_i x_i) = \sum_i \mathrm{Var}(x_i) + \sum_{i \ne j} \mathrm{Covar}(x_i,x_j)$. Each of the $O(p^5)$ terms of $s_{abcd}$ has variance $O(p^{-28})$. There are $O(p^{10})$ ways to choose two distinct terms of $s_{abcd}$. Only $O(p^9)$ of these ways gives two terms that are dependent, in which case their covariance is $O(p^{-28})$; otherwise they are independent and have covariance zero. This means $\mathrm{Var}[s_{abcd}] \le O(p^5 \cdot p^{-28} + p^9 \cdot p^{-28}) = O(p^{-19})$.
\end{proof}

By hypercontractivity (Theorem~\ref{thm:hyp}) we have with overwhelming probability, $|s_{abcd} - \mathbb{E}[s_{abcd}]| \le p^{-9.1}$. Thus, when $a \ne -b$ and $c \ne -d$, we have $|S_{abcd} \,s_{abcd} - 1| \le O(p^{-0.1})$ (recall that $S_{abcd} s_{abcd}$ appears in (\ref{eq:hatM1})). For the entries with $a=-b$ or $c=-d$ (there are $ \le 2p^3$ such entries in $\hat M$), we will simply use the bound $|\hat\theta^1_i| \le \tilde O(1/\sqrt{p})$.

We can now complete the proof of Proposition~\ref{prop:signal}. Using (\ref{eq:hatM1}),
\begin{align*}
\|M(T^1,(\theta^1)^{\otimes 5}) - (\theta^1 \otimes \theta^1)(\theta^1 \otimes \theta^1)^\top\|
&= \|\hat M(T^1,(\theta^1)^{\otimes 5}) - (\hat\theta^1 \otimes \hat\theta^1)(\hat\theta^1 \otimes \hat\theta^1)^\top\| \\
&\le \|\hat M(T^1,(\theta^1)^{\otimes 5}) - (\hat\theta^1 \otimes \hat\theta^1)(\hat\theta^1 \otimes \hat\theta^1)^\top\|_F \\
&\le \sqrt{p^4 \cdot \tilde O(p^{-0.1} \cdot p^{-2})^2 + 2p^3 \cdot \tilde O(p^{-2})^2} \\
&\le o(1).
\end{align*}

\subsection{Noise term}\label{sec:noise}

We now consider the noise term $M(T,\tilde u)$, i.e.\ the term created by the ``bad'' component of $u$ that is orthogonal to $(\theta^k)^{\otimes 5}$. This term is the crux of the proof, where we will crucially use the assumption $K \le p^\delta$.

\begin{proposition} \label{prop:noise-u}
There exists $\delta > 0$ such that if $K \le p^\delta$, we have the following. There is a determinstic predicate $P_1(\{\theta^k\})$ depending only on $\{\theta^k\}$ that is satisfied with high probability. For any fixed $\{\theta^k\}$ satisfying $P_1(\{\theta^k\})$, we have
$$\|M(T,\tilde u)\| \le O(\sqrt{\log p})$$
with high probability over the randomness of $\tilde u$.
\end{proposition}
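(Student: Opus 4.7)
The plan is to apply the trace moment method to $M := M(T,\tilde u)$. Since $M$ is linear in $\tilde u$, the scalar $\mathrm{Tr}[(MM^\top)^q]$ is a polynomial of degree $2q$ in the Gaussian entries of $\tilde u$, whose expectation can be evaluated by Wick's theorem. The predicate $P_1(\{\theta^k\})$ should bundle a handful of high-probability events over $\{\theta^k\}$: the entrywise bounds $|\hat\theta^k_i| \le \tilde O(1/\sqrt p)$ for all $k,i$, together with concentration of various low-degree polynomials in $\{\hat\theta^k\}$ about their expectations (for instance $\sum_k |\hat\theta^k_i|^2 = \Theta(K/p)$ by Lemma~\ref{lem:moments} and Theorem~\ref{thm:hyp}; analogous estimates will be needed to ``pair up'' the $\theta^k$'s that appear in the trace). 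We condition on $P_1$ and let all remaining randomness be carried by $\tilde u$. Since $\tilde u$ is just $u \sim \mathcal{N}(0,I_{p^5})$ restricted to a hyperplane, the difference between $u$ and $\tilde u$ contributes only a lower-order correction term.

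Expanding $\mathrm{Tr}[(MM^\top)^q]$ via the ring picture of Figure~\ref{fig:trace-method} glues $2q$ copies of $M$ (with alternating orientation) into a closed diagram containing $18q$ copies of $\hat T$, $2q$ copies of $\hat{\tilde u}$, and $2q$ entrywise factors $S_{abcd}$ at the flattening seams. Taking the Gaussian expectation in $\tilde u$ contributes, via Wick's theorem, a sum over perfect matchings $\pi$ of the $2q$ copies of $\hat{\tilde u}$; for each pair in $\pi$ the five Fourier legs are identified coordinate-by-coordinate with negation, since $\mathbb{E}[\hat u_A \hat u_B] = \prod_\ell \mathbbm{1}_{A_\ell+B_\ell=0}$. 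Substituting $\hat T_{ijk} = \mathbbm{1}_{i+j+k=0} \sum_{m} \hat\theta^m_i \hat\theta^m_j \hat\theta^m_k$ at each of the $18q$ slots introduces a further sum over component indices $\vec m \in [K]^{18q}$. What remains is a purely combinatorial sum over triples $(\pi, \vec m, \text{edge labeling in } \pm[p/2])$ satisfying (i) the sum-to-zero constraint at every copy of $\hat T$ and (ii) the Wick identifications from $\pi$, each weighted by a product of $\hat\theta^m$'s and of $S_{abcd}$'s.

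The heart of the proof is to bound this sum by $(C\log p)^q$ for an absolute constant $C$. Under $P_1$, each valid triple contributes at most $\tilde O(p^{-9q})$ in absolute value: the $54q$ Fourier coefficients are each $\tilde O(p^{-1/2})$, and the $2q$ masking factors $S_{abcd}$ are $\Theta(p^9)$ by Lemma~\ref{lem:Es}. It therefore remains to bound the weighted count of valid triples, which is the content of a counting lemma (Lemma~\ref{lem:count-labelings}) and the main obstacle of the proof. This lemma needs two non-trivial inputs beyond a naive degrees-of-freedom argument. First, the concentration events in $P_1$ must be leveraged to effectively ``pair up'' the $18q$ component indices in $\vec m$, compressing the naive $K^{18q}$ factor; the assumption $K \le p^\delta$ is then used to absorb the residual $K^{O(q)}$ into the absolute constant. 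Second, the masking $S_{abcd} = 0$ when $a = -b$ or $c = -d$ excludes a family of degenerate ``short-circuit'' labelings in which paired external indices of a single copy of $M$ would otherwise create extra degrees of freedom and spoil the count. Given this counting bound, Markov's inequality yields $\Pr[\|M(T,\tilde u)\| \ge t] \le (C\log p/t^2)^q \cdot p^{O(1)}$, and choosing $q = \Theta(\log p)$ together with $t = B\sqrt{\log p}$ for a sufficiently large absolute constant $B$ drives the right-hand side to $o(1)$, completing the proof.
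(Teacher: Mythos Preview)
Your approach diverges from the paper's in a way that creates a genuine gap. The paper does \emph{not} apply the trace moment method to $M(T,\tilde u)$ with expectation over $\tilde u$. Instead it first decouples the two sources of randomness via a random-contraction inequality (Theorem~\ref{thm:MSS-rand-contract}): writing $M(T,\tilde u) = (I\otimes I\otimes \tilde u^\top)W$ for a deterministic (in $\tilde u$) order-3 tensor $W$, one gets $\Pr_{\tilde u}[\|M(T,\tilde u)\|\ge t\,\|\tilde W\|]\le 8p^2 e^{-t^2/2}$, where $\tilde W$ is a $p^2\times p^7$ flattening of $W$. The predicate $P_1$ is simply the event $\|\tilde W\|\le O(1)$, and the trace moment method is applied to $\tilde W$ \emph{with the expectation taken over $\{\theta^k\}$}. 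This choice of which variable to average over is the crux of the argument.

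Averaging over $\{\theta^k\}$ is what makes the counting work. By Lemma~\ref{lem:moments}, $\mathbb{E}_\theta\bigl[\prod_v \mathcal{L}(v)\bigr]$ vanishes unless each region (maximal set of vertices sharing a $k_v$-label) has balanced $\pm i$ incidences on its boundary; this is the region constraint of Definition~\ref{def:valid}(iii). That constraint, via Lemma~\ref{lem:rc}, forces $c(\mathcal{L})\ge r(\mathcal{L})/2$ repeated edge labels whenever $r(\mathcal{L})>1$, and Lemma~\ref{lem:count-labelings} shows each repeated label costs a power of $p$ in the count. This is precisely the mechanism that compresses the naive $K^{18q}$ sum over vertex labels down to something absorbable by $K\le p^\delta$.

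In your route, once you condition on $\{\theta^k\}$ and invoke the entrywise bound $|\hat\theta^k_i|\le\tilde O(p^{-1/2})$, all phase information in the $\hat\theta^k$'s is discarded and with it all cancellation across $\vec m\in[K]^{18q}$. Every choice of $\vec m$ contributes with the same upper bound, so the $K^{18q}$ factor survives; with $q=\Theta(\log p)$ this is super-polynomial for any $K\ge 2$. Your proposed remedy---concentration of ``low-degree polynomials'' in $P_1$---cannot repair this: the quantity you need to control, $\mathbb{E}_{\tilde u}[\mathrm{Tr}((MM^\top)^q)]$, is a polynomial of degree $54q=\Theta(\log p)$ in $\{\hat\theta^k\}$, not low degree. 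A secondary issue is that your Wick expansion sums over all $(2q-1)!!$ matchings of the $\hat u$'s, and for non-adjacent matchings the resulting diagram is not the $\mathcal{G}_q$ to which Lemma~\ref{lem:count-labelings} is tailored.
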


\begin{remark}
Note that $\tilde u$ depends on which signal $k \in [K]$ we have chosen as the target, since $\tilde u \perp (\theta^k)^{\otimes 5}$. However, $P_1$ does not depend on $k$, and the conclusion of Proposition~\ref{prop:noise-u} holds for any fixed $k$.
\end{remark}

This section is devoted to proving Proposition~\ref{prop:noise-u}. We will use the following result on random contractions of tensors.

\begin{theorem}[\cite{MSS-tensor-sos} Corollary~6.6]
\label{thm:MSS-rand-contract}
Let $W \in \mathbb{R}^p \times \mathbb{R}^q \times \mathbb{R}^r$ be an order-3 tensor. Let $\tilde u \sim \mathcal{N}(0,\Sigma)$ with $r \times r$ covariance matrix satisfying $0 \preceq \Sigma \preceq I$. Then for any $t \ge 0$,
$$\Pr_{\tilde u}\left[\|(I \otimes I \otimes \tilde u^\top)W\|_{\{1\},\{2\}} \ge t \cdot \max\left\{\|W\|_{\{1\},\{2,3\}},\|W\|_{\{1,3\},\{2\}}\right\}\right] \le 4(p+q) \exp(-t^2/2).$$
\end{theorem}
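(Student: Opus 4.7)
The plan is to recognize the random contraction as a matrix Gaussian series and invoke Tropp's standard matrix concentration inequality. Writing $A_k = W_{:,:,k} \in \mathbb{R}^{p \times q}$ for the $k$-th slice of $W$ along the third mode, one has
\[
M \defeq (I \otimes I \otimes \tilde u^\top) W \;=\; \sum_{k=1}^r \tilde u_k \, A_k,
\]
a linear Gaussian combination of fixed matrices. Tropp's matrix Gaussian series bound says that if $z_1,\ldots,z_r$ are i.i.d.\ $\mathcal{N}(0,1)$ and $B_1,\ldots,B_r \in \mathbb{R}^{p\times q}$ are fixed, then
\[
\Pr\!\left[\Big\|\sum_k z_k B_k\Big\| \ge t\,\sigma\right] \;\le\; 2(p+q)\, e^{-t^2/2},
\]
where $\sigma^2 = \max\!\bigl(\,\|\sum_k B_k B_k^\top\|,\ \|\sum_k B_k^\top B_k\|\,\bigr)$. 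So the task reduces to (i) handling general covariance and (ii) matching the variance proxy $\sigma^2$ to the two flattening norms of $W$.

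First I would reduce the general covariance case to the isotropic case. Write $\tilde u = \Sigma^{1/2} z$ with $z\sim\mathcal{N}(0,I_r)$, so $M = \sum_l z_l B_l$ with $B_l = \sum_k \Sigma^{1/2}_{kl} A_k$. Using symmetry of $\Sigma^{1/2}$, a direct expansion gives $\sum_l B_l B_l^\top = \sum_{k,k'} \Sigma_{k k'} A_k A_{k'}^\top$. For any unit vector $x$,
\[
x^\top\!\Big(\sum_l B_l B_l^\top\Big) x \;=\; \mathrm{tr}(\Sigma Y), \qquad Y_{k k'} := \langle A_k^\top x,\, A_{k'}^\top x\rangle,
\]
and $Y$ is a Gram matrix, hence PSD. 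Since $\Sigma \preceq I$, $\mathrm{tr}(\Sigma Y) = \mathrm{tr}(Y^{1/2}\Sigma Y^{1/2}) \le \mathrm{tr}(Y) = x^\top\!\bigl(\sum_k A_k A_k^\top\bigr) x$, so $\|\sum_l B_l B_l^\top\| \le \|\sum_k A_k A_k^\top\|$. The analogous bound holds for $\sum_l B_l^\top B_l$ versus $\sum_k A_k^\top A_k$, so the variance proxy in the isotropic reformulation is controlled by that of the i.i.d.\ case.

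Second I would identify the two resulting matrix sums with flattening norms of $W$. Stacking the slices $A_k$ side by side produces the $p \times qr$ matrix $W^{(1)}$ which is exactly the $(\{1\},\{2,3\})$-flattening of $W$, and $\sum_k A_k A_k^\top = W^{(1)} (W^{(1)})^\top$; hence $\|\sum_k A_k A_k^\top\| = \|W\|_{\{1\},\{2,3\}}^2$. Symmetrically, $\|\sum_k A_k^\top A_k\| = \|W\|_{\{2\},\{1,3\}}^2 = \|W\|_{\{1,3\},\{2\}}^2$ (spectral norm is unchanged under transposition of the flattening). Plugging these into Tropp's bound yields
\[
\Pr\!\bigl[\|M\| \ge t\cdot \max\{\|W\|_{\{1\},\{2,3\}},\, \|W\|_{\{1,3\},\{2\}}\}\bigr] \;\le\; 2(p+q)\, e^{-t^2/2},
\]
which is even slightly stronger than the stated bound; the constant $4$ is harmless slack.

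The argument is essentially routine once the random contraction is recognized as a matrix Gaussian series, so there is no real combinatorial or analytic obstacle. The only mild subtlety worth double-checking is that $\Sigma \preceq I$ does not inflate the variance proxy beyond the isotropic case, which follows cleanly from the Gram-matrix trace calculation above.
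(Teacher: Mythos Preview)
The paper does not prove this statement: it is quoted verbatim as Corollary~6.6 of \cite{MSS-tensor-sos} and used as a black box. Your proposal is correct and is exactly the standard argument behind that corollary---interpret the random contraction as a matrix Gaussian series $\sum_k \tilde u_k A_k$, apply Tropp's tail bound, and identify the two variance proxies $\|\sum_k A_k A_k^\top\|$ and $\|\sum_k A_k^\top A_k\|$ with the squared spectral norms of the two flattenings of $W$; the reduction from $\Sigma\preceq I$ to the isotropic case via the Gram-matrix trace inequality is clean and correct. There is nothing to compare against in the present paper, and your argument matches the one in the cited source.
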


In our setting, we have $M(T,\tilde u) = (I \otimes I \otimes \tilde u^\top)W$ where $W$ is given by the tensor network in Figure~\ref{fig:mra-trace}(a) ($\hat{\mathcal{S}}$ is present but not shown). Explicitly, the Fourier transform (defined as in Figure~\ref{fig:change-basis}) of $W$ is
$$\hat W_{ab,cd,j_1 j_2 j_3 j_4 j_5} = S_{abcd} \sum_{i_1,\ldots,i_9} \hat T_{-i_1,a,i_2} \hat T_{-i_2,j_1,i_3} \hat T_{-i_3,c,i_4} \hat T_{-i_4,j_2,i_5} \hat T_{-i_5,b,i_6} \hat T_{-i_6,j_3,i_7} \hat T_{-i_7,d,i_8} \hat T_{-i_8,j_4,i_9} \hat T_{-i_9,j_5,i_1}.$$
By Theorem~\ref{thm:MSS-rand-contract},
\begin{equation} \label{eq:rand-contraction}
\Pr_{\tilde u}\left[\|M(T,\tilde u)\| \ge t \sigma \right] \le 8p^2 \exp(-t^2/2)
\end{equation}
where
$$\sigma = \max \{\|W\|_{\{a,b\},\{c,d, j_1, j_2, j_3, j_4, j_5\}},\|W\|_{\{a,b, j_1, j_2, j_3, j_4, j_5\},\{c,d\}}\}.$$

The two flattenings of $W$ that appear in the definition of $\sigma$ are equivalent due to symmetry, so it suffices to consider just the first one. The corresponding matrix is $\tilde W \in \mathbb{R}^{2 \times 7}$ given by
$$\tilde W_{ab,cd j_1 j_2 j_3 j_4 j_5} = W_{ab,cd, j_1 j_2 j_3 j_4 j_5}.$$

We will bound $\|\tilde W\|$ using the trace moment method:
\begin{theorem}[e.g.\ \cite{PS-tensor-completion} Proposition~5.2] \label{thm:trace-method}
For any real-valued random matrix $Y$, for any integer $q \ge 1$ and any $\varepsilon > 0$,
$$\mathrm{Pr}\left[\|Y\| > \left(\frac{\mathbb{E}[\mathrm{Tr}((YY^\top)^q)]}{\varepsilon}\right)^{\frac{1}{2q}}\right] < \varepsilon.$$
\end{theorem}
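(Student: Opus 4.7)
The plan is to prove this standard inequality via a two-step argument: first a deterministic comparison between $\|Y\|^{2q}$ and $\mathrm{Tr}((YY^\top)^q)$, then Markov's inequality applied to the latter.

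First I would observe that if $\sigma_1 \ge \sigma_2 \ge \cdots \ge 0$ are the singular values of $Y$, then the eigenvalues of $YY^\top$ are $\sigma_i^2$, so the eigenvalues of $(YY^\top)^q$ are $\sigma_i^{2q}$. Taking the trace,
\begin{equation*}
\mathrm{Tr}((YY^\top)^q) \;=\; \sum_i \sigma_i^{2q} \;\ge\; \sigma_1^{2q} \;=\; \|Y\|^{2q},
\end{equation*}
where the inequality uses that every $\sigma_i^{2q} \ge 0$. This comparison holds pointwise (deterministically) on the probability space.

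Next I would apply Markov's inequality to the nonnegative random variable $\mathrm{Tr}((YY^\top)^q)$: for any $s > 0$,
\begin{equation*}
\Pr\left[\mathrm{Tr}((YY^\top)^q) > s\right] \;\le\; \frac{\mathbb{E}[\mathrm{Tr}((YY^\top)^q)]}{s}.
\end{equation*}
Combining with the deterministic comparison, for any $t > 0$,
\begin{equation*}
\Pr\!\left[\|Y\| > t\right] \;=\; \Pr\!\left[\|Y\|^{2q} > t^{2q}\right] \;\le\; \Pr\!\left[\mathrm{Tr}((YY^\top)^q) > t^{2q}\right] \;\le\; \frac{\mathbb{E}[\mathrm{Tr}((YY^\top)^q)]}{t^{2q}}.
\end{equation*}
Finally I would choose $t$ so that the right-hand side equals $\varepsilon$, i.e.\ $t = \left(\mathbb{E}[\mathrm{Tr}((YY^\top)^q)]/\varepsilon\right)^{1/(2q)}$, which yields exactly the stated bound. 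There is no real obstacle here — the whole argument is a one-line moment method, and the only subtlety worth flagging is that the comparison $\|Y\|^{2q} \le \mathrm{Tr}((YY^\top)^q)$ relies on $YY^\top$ being positive semidefinite so that its eigenvalues (which are squared singular values of $Y$) are nonnegative, making the sum dominate any single term.
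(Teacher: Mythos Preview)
Your argument is correct and is exactly the standard trace moment method proof; the paper does not supply its own proof of this theorem but merely cites it from \cite{PS-tensor-completion}, so there is nothing further to compare. (One cosmetic point: your chain yields $\le \varepsilon$ rather than the strict $< \varepsilon$ stated, but this is immaterial for the application and is likely just a typo in the statement.)
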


As illustrated in Figures~\ref{fig:trace-method} and \ref{fig:mra-trace}, we can represent $\mathrm{Tr}((\tilde W \tilde W^\top)^q)$ as a tensor network by connecting (in a ring) $2q$ copies of the tensor network for $\tilde W$. (We will take $q=\log p$.) Call this new tensor network $\mathcal{G}_q$ (see Figure~\ref{fig:mra-trace}). The computation of $\mathbb{E}[\mathrm{Tr}((\tilde W \tilde W^\top)^q)]$ is thus reduced to a combinatorial problem involving labelings of $\mathcal{G}_q$, which we next describe.

\begin{figure}\centering
\begin{minipage}{0.8\textwidth}\centering
\begin{tikzpicture}[auto]
\node at (-6,3.5) {\bf (a)};
\def\rin{1}
\def\rd{1.75}
\def\rT{2.75}
\def\rD{3.75}
\def\rout{4.5}
\foreach \i in {2,3,4,6,7,8,9}{
	\coordinate (pt\i) at ({180-\i*360/9}:\rin);
}
\foreach \i in {2,3,4,6,7,8,9}{
	\node[rotate={270-\i*360/9}] (d\i) at ({180-\i*360/9}:\rd) {$\Delta$};
}
\foreach \i in {1,...,9}{
	\node (T\i) at ({180-\i*360/9}:\rT) {$\hat{T}$};
}
\foreach \i in {1,5}{
	\node[rotate={90-\i*360/9}] (d\i) at ({180-\i*360/9}:\rD) {$\Delta$};
}
\foreach \i in {1,5}{
	\coordinate (pt\i) at ({180-\i*360/9}:\rout);
}
\foreach \i in {1,...,8}{
	\pgfmathtruncatemacro{\nexti}{\i + 1}
	\draw[dashed] (T\i) -- (T\nexti);
}
\draw[dashed] (T9) -- (T1);
\foreach \i in {1,...,9}{
	\draw (T\i) -- (d\i);
}
\foreach \i/\a in {1/a,2/j_1,3/c,4/j_2,5/b,6/j_3,7/d,8/j_4,9/j_5}{
	\draw (d\i) to node[midway,inner sep=1pt] {\scriptsize{$\a$}} (pt\i);
}
\begin{scope}[shift={(0,-9)}]
\node at (-6,5.5) {\bf (b)};
\def\rin{0.75}
\def\rt{1.5}
\def\rtt{3}
\def\rttt{5}
\def\rout{6.5}
\foreach \i in {2,3,4,6,7,8,9}{
	\coordinate (pt\i) at ({180-\i*360/9}:\rin);
}
\foreach \i in {1,...,9}{
	\node (t\i) at ({180-\i*360/9}:\rt) {$\hat{T}$};
	\node (tt\i) at ({180-\i*360/9}:\rtt) {$\hat{T}$};
	\node (ttt\i) at ({180-\i*360/9}:\rttt) {$\hat{T}$};
}
\foreach \i in {1,5}{
	\coordinate (pt\i) at ({180-\i*360/9}:\rout);
}
\foreach \i in {1,...,8}{
	\pgfmathtruncatemacro{\nexti}{\i + 1}
	\draw[dashed] (t\i) -- (t\nexti);
	\draw[dashed] (tt\i) to node[near start,inner sep=1pt] {\scriptsize{$\tilde i_\nexti^{(1)}$}} node[near end,inner sep=1pt] {\scriptsize{-$\tilde i_\nexti^{(1)}$}} (tt\nexti);
	\draw[dashed] (ttt\i) to node[near start,inner sep=1pt] {\scriptsize{$i_\nexti^{(1)}$}} node[near end,inner sep=1pt] {\scriptsize{-$i_\nexti^{(1)}$}} (ttt\nexti);
}
\draw[dashed] (t9) -- (t1);
\draw[dashed] (tt9) to node[near start,inner sep=1pt] {\scriptsize{$\tilde i_1^{(1)}$}} node[near end,inner sep=1pt] {\scriptsize{-$\tilde i_1^{(1)}$}} (tt1);
\draw[dashed] (ttt9) to node[near start,inner sep=1pt] {\scriptsize{$i_1^{(1)}$}} node[near end,inner sep=1pt] {\scriptsize{-$i_1^{(1)}$}} (ttt1);
\foreach \i/\j in {2/j_1,3/c,4/j_2,6/j_3,7/d,8/j_4,9/j_5}{
	\draw[dashed] (ttt\i) to node[near start,inner sep=1pt] {\scriptsize{$\j^{(1)}$}} node[near end,inner sep=1pt] {\scriptsize{-$\j^{(1)}$}} (tt\i);
	\draw[dashed] (t\i) -- (pt\i);
}
\foreach \i/\a in {1/a,5/b}{
	\draw[dashed] (pt\i) to node[near start,inner sep=1pt] {\scriptsize{-$\a^{(1)}$}} node[near end,inner sep=1pt] {\scriptsize{$\a^{(1)}$}} (ttt\i);
	\draw[dashed] (tt\i) to node[near start,inner sep=1pt] {\scriptsize{-$\a^{(2)}$}} node[near end,inner sep=1pt] {\scriptsize{$\a^{(2)}$}} (t\i);
}
\end{scope}
\end{tikzpicture}
\caption{{\bf (a)} The tensor network for $W$. (The operator $\hat S$ is not shown but can be thought of as living on the appropriate four edges.) The matrix $\tilde W$ is the $(\{a,b\},\{c,d,j_1,j_2,j_3,j_4,j_5\})$-flattening. {\bf (b)} Here we see (part of) the tensor network $\mathcal{G}_q$ that computes $\mathrm{Tr}((\tilde W \tilde W^\top)^q)$. There are $2q$ copies of the tensor network from (a) (3 copies are shown here) connected in a ring as in Figure~\ref{fig:trace-method}. The outermost copy connects back to the innermost copy, so that the entire network can be visualized as living on the surface of a torus. Again, $\hat{\mathcal{S}}$ is not shown (it will be unimportant since we will bound its contribution separately). Recall that connecting two ``opposing'' copies of $\Delta$ results in a dotted edge, as shown in Figure~\ref{fig:change-basis}.}
\label{fig:mra-trace}
\end{minipage}
\end{figure}

\begin{definition}
A \emph{labeling} $\mathcal{L}$ of $\mathcal{G}_q$ is described by the following. For each edge $e$, label one end with a value $i_e \in \pm [p/2]$ and label the other end with $-i_e$. Call each copy of $\hat T$ in $\mathcal{G}_q$ a \emph{vertex}, and label each vertex $v$ with a value $k_v \in [K]$. Let $\mathcal{L}(v) = \mathbbm{1}_{i_1+i_2+i_3=0} \,\hat\theta^{k_v}_{i_1} \hat\theta^{k_v}_{i_2} \hat\theta^{k_v}_{i_3}$ where $i_1,i_2,i_3$ are the three edge labels incident\footnote{Suppose an edge $e$ is incident to a vertex $v$ in a tensor network. There is a label $\pm i_e$ at each end of $e$. The label at the $v$-end of $e$ is considered \emph{incident} to $v$.} to $v$.
\end{definition}

\noindent Recall that $\hat T_{i_1 i_2 i_3} = \mathbbm{1}_{i_1+i_2+i_3=0}\sum_{k=1}^K \hat\theta^k_{i_1} \hat\theta^k_{i_2} \hat\theta^k_{i_3}$. The vertex labels $k_v$ correspond to the terms in this sum.

As shown in Figure~\ref{fig:mra-trace}, there are $q$ \emph{layers}, and layer $\ell$ (for $\ell = 1,2,\ldots,q$) has labels $$a^{(\ell)},b^{(\ell)},c^{(\ell)},d^{(\ell)},i_1^{(\ell)},\ldots,i_9^{(\ell)},\tilde i_1^{(\ell)},\ldots,\tilde i_9^{(\ell)}, j_1^{(\ell)},\ldots,j_5^{(\ell)}.$$
Layer numbers are defined modulo $q$, i.e.\ layer $q+1$ refers to layer 1.

We now have
$$\mathbb{E}[\mathrm{Tr}((\tilde W \tilde W^\top)^q)]
= \mathbb{E}\sum_\mathcal{L} S_\mathcal{L} \prod_v \mathcal{L}(v)$$
where: $\mathcal{L}$ ranges over all labelings of $\mathcal{G}_q$; $v$ ranges over all vertices of $\mathcal{G}_q$; the expectation is over the randomness of $\theta^1,\ldots,\theta^K$; and the contributions from $\mathcal{S}$ are captured by
$$S_\mathcal{L} \defeq \prod_{\ell=1}^q S_{a^{(\ell)}b^{(\ell)}c^{(\ell)}d^{(\ell)}}\, S_{(-a^{(\ell+1)})(-b^{(\ell+1)})(-c^{(\ell)})(-d^{(\ell)})}.$$

\begin{definition}
\label{def:repeated-labels}
Define the \emph{number of repeated labels} in a labeling to be
$$c(\mathcal{L}) = \sum_{i \in [p/2]} \max\{0,[\text{\# edges labeled with }\pm i] - 1\}.$$
\end{definition}

\begin{definition}\label{def:region}
For any $k$, call the set of vertices $\{v \;:\; k_v = k\}$ a \emph{region}. The \emph{number of regions} in a labeling is
$$r(\mathcal{L}) = |\{k_v\}_v| = [\text{\# distinct }k_v\text{ values}].$$
\end{definition}

\begin{definition}\label{def:valid}
Call $\mathcal{L}$ a \emph{valid labeling} if $S_\mathcal{L} \,\mathbb{E} \prod_v \mathcal{L}(v) \ne 0$. In other words, $\mathcal{L}$ is valid if and only if
\renewcommand{\theenumi}{\roman{enumi}}
\begin{enumerate}
    \item for every vertex $v$, the three edge labels $i_1,i_2,i_3$ incident to $v$ satisfy $i_1 + i_2 + i_3 = 0$,
    \item for every $\ell$, $a^{(\ell)} \ne -b^{(\ell)}$ and $c^{(\ell)} \ne -d^{(\ell)}$, and
    \item for every $i$, each region has as many incident\footnote{If $R$ is a region and $e = (u,v)$ is an edge with $u \in R$ and $v \notin R$ then the label at the $u$-end of $e$ is considered \emph{incident} to $R$.} $i$ labels as incident $-i$ labels.
\end{enumerate}
\end{definition}
\noindent Recall that we set certain $S_{abcd}$ values to zero, which results in rule (ii) above.

\begin{lemma}
\label{lem:rc}
For any valid labeling $\mathcal{L}$ with $r(\mathcal{L}) > 1$, we have $c(\mathcal{L}) \ge r(\mathcal{L})/2$.
\end{lemma}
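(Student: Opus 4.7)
The plan is to derive the bound $c(\mathcal{L}) \ge r(\mathcal{L})/2$ by studying the \emph{inter-region edges} of $\mathcal{G}_q$, i.e., edges whose two endpoints lie in different regions. Let $I$ be the number of such edges. The argument rests on two claims: (A) $I \ge r(\mathcal{L})$, and (B) the number of distinct $|i|$ values appearing on inter-region edges is at most $I/2$.

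For (A), let $B_R$ denote the number of labels on $R$'s boundary (the labels on the $R$-side of inter-region edges). Rule (iii) of Definition~\ref{def:valid} forces $B_R$ to be even, since the number of incident $+i$ labels equals the number of incident $-i$ labels for every $i$. Because $\mathcal{G}_q$ is a connected graph (its $2q$ copies of $W$ are chained through their $a,b$ and $c,d,j_1,\ldots,j_5$ legs) and $r(\mathcal{L}) > 1$, each region must contain at least one inter-region edge, so $B_R \ge 1$ and hence $B_R \ge 2$. Summing gives $2I = \sum_R B_R \ge 2\,r(\mathcal{L})$.

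For (B), fix a value $|i|$ appearing on at least one inter-region edge and orient every such edge from its $+i$-labeled end to its $-i$-labeled end. This yields a directed multi-graph on the regions which is loopless (because inter-region means the endpoints lie in distinct regions). Rule (iii) asserts that in-degree equals out-degree at every region, so this digraph is Eulerian. A loopless Eulerian digraph with exactly one edge would have a vertex with out-degree $1$ and in-degree $0$, contradicting the Eulerian property; hence it has at least two edges. Summing over the distinct $|i|$ values that appear on inter-region edges yields $I \ge 2\,D_{\text{inter}}$.

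To conclude, observe that $c(\mathcal{L}) = E - D$, where $E$ is the total number of edges and $D$ is the number of distinct $|i|$ values used. Splitting $E = I + E_{\text{intra}}$ and $D = D_{\text{inter}} + D_{\text{intra-only}}$, and using the trivial bound $D_{\text{intra-only}} \le E_{\text{intra}}$ (every $|i|$ counted in $D_{\text{intra-only}}$ appears on at least one intra-region edge), one obtains
$$c(\mathcal{L}) \ge I - D_{\text{inter}} \ge I/2 \ge r(\mathcal{L})/2.$$
The conceptual heart of the proof is step (B), where rule (iii)'s global balance constraint is converted into a local ``at least two edges per used label'' bound via the Eulerian observation; this is where I expect the main subtlety to lie, though the argument is elementary once the digraph picture on regions is set up.
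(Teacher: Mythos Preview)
Your proof is correct and follows essentially the same route as the paper's. The paper's argument is a very terse version of yours: it notes that each region has at least two boundary edges (your step (A)), that each boundary edge must share its label with another boundary edge (your step (B), which you justify more carefully via the Eulerian picture), and then concludes $c(\mathcal{L})\ge r(\mathcal{L})/2$; your final splitting $c(\mathcal{L})=E-D\ge I-D_{\text{inter}}$ makes rigorous a step the paper leaves implicit.
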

\begin{proof}
Since $r(\mathcal{L}) > 1$, every region has at least two edges crossing its boundary\footnote{It is immediate from Definition~\ref{def:valid}(iii) that a region must have an even number of edges crossing its boundary. In fact, it is not hard to see that $\mathcal{G}_q$ has no cuts of size two and so at least four edges must cross.}, so there are at least $r(\mathcal{L})$ such boundary edges total. In a valid labeling, each of these boundary edges must have the same label as at least one other boundary edge. This results in at least $r(\mathcal{L})/2$ repeated labels.
\end{proof}

The following key lemma is proved in Appendix~\ref{app:count}.
\begin{lemma}
\label{lem:count-labelings}
The number of valid edge-labelings\footnote{By \emph{edge-labelings} we mean choices for the edge labels $i_e$ but not the vertex labels $v_k$. Here \emph{valid} means that (i) and (ii) in Definition~\ref{def:valid} are satisfied.} with exactly $c(\mathcal{L})$ repeated labels is at most $3 [2(27q)^2]^{c(\mathcal{L})} p^{1 + 9q - c(\mathcal{L})/25}$.
\end{lemma}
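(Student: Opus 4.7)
My plan is to prove this by first analyzing the dimensionality of the solution space to the zero-sum vertex constraints, and then charging each repeated label against an "independent" constraint that reduces this dimension. The quantity $1 + 9q$ in the exponent strongly suggests the base case where the $27q$ edges of $\mathcal{G}_q$ are constrained only by $18q$ zero-sum rules at the $\hat{\mathcal{T}}$-vertices: with one redundancy (since the sum of all vertex constraints is identically zero), these give at most $27q - 18q + 1 = 9q + 1$ degrees of freedom, so the number of edge-labelings satisfying (i) alone is at most $p^{9q+1}$. Constraint (ii) only further restricts the count, so this handles the case $c(\mathcal{L})=0$.

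Next I would set up an encoding for labelings with $c(\mathcal{L}) = c > 0$. Each repeat can be witnessed by specifying an ordered pair of edges $(e_1, e_2)$ along with a sign bit encoding whether $i_{e_1} = i_{e_2}$ or $i_{e_1} = -i_{e_2}$. There are at most $2(27q)^2$ such witnesses per repeat, giving the factor $[2(27q)^2]^{c}$ in the bound. The small constant $3$ in front should absorb boundary/degenerate cases (for instance, how to handle a label that appears more than twice, or pairs of edges sharing a vertex where the zero-sum constraint already forces a relation).

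The heart of the proof, and what I expect to be the main obstacle, is showing that after fixing such an encoding, the number of valid edge-labelings compatible with it is at most $p^{9q + 1 - c/25}$. Morally, each repeat is a linear equation $i_{e_1} \pm i_{e_2} = 0$ added to the vertex-constraint system; if it is linearly independent of the existing equations, the solution space shrinks by a factor of $p$. The factor $1/25$ (rather than the naive $1$) reflects that we cannot hope for every repeat to be independent: relations may be implied by the vertex constraints as soon as enough edges are identified within a localized sub-region. What I would try to show is an "amortized independence" statement: among any $c$ repeats, at least $c/25$ of them strictly cut down the dimension of the affine solution space. The specific constant $25$ should be tuned to the geometry of $\mathcal{G}_q$, which is a cylinder of $2q$ layers where each layer is a 9-vertex ring of degree-3 tensors glued to its neighbors by 2 or 7 external edges. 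Because every vertex has degree 3, and every "short" sub-region of $\mathcal{G}_q$ has a bounded number of boundary edges, a bounded number of repeats (at most some universal constant, giving the $1/25$) can be "absorbed" before we are forced to use a genuinely independent constraint.

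Concretely, I would carry out the third step as follows. Build the labeling one edge at a time along a BFS/DFS order adapted to the cylindrical structure of $\mathcal{G}_q$, so that when each new edge is processed, it either (a) completes a zero-sum constraint at an already-saturated vertex, removing one degree of freedom; or (b) is a free edge contributing one new degree of freedom. The non-repeat analysis shows exactly $9q+1$ edges are "free" in this accounting. Now introduce the $c$ repeat constraints one by one; classify each as "redundant" (already implied by the existing constraints) or "useful." A local combinatorial argument on the 9-vertex $W$-block — showing that any configuration of $t$ repeats entirely inside a constant number of adjacent layers can yield at most $Ct$ constraints with $C$ strictly less than $1$ redundancy rate — would give the $c/25$ bound globally after summing over layers. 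Combining the three steps, the number of valid edge-labelings with exactly $c$ repeats is at most $3 \cdot [2(27q)^2]^{c} \cdot p^{1 + 9q - c/25}$, as claimed.
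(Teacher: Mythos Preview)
Your high-level structure---count the degrees of freedom of the zero-sum system, encode the repeats at cost $[2(27q)^2]^c$, then argue that repeats cut the dimension---matches the paper's. But the mechanism you propose for the $1/25$ is not the one that actually works, and your sketch misses the structural ingredient that makes the argument go through. The paper does \emph{not} argue ``amortized independence'' via a BFS/DFS with a local redundancy bound. Instead it first observes that summing the nine vertex constraints around any ring forces $a^{(\ell)}+b^{(\ell)}=-(c^{(\ell)}+d^{(\ell)}+j_1^{(\ell)}+\cdots+j_5^{(\ell)})=a^{(\ell+1)}+b^{(\ell+1)}\eqqcolon w$ to be constant across all layers. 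One pays a single factor of $2p+1\le 3p$ for $w$ (this is where the constant $3$ comes from---not from ``boundary/degenerate cases'') and then, within each layer, partitions the $27$ edges into two \emph{classes}: $\{a,b\}$ (size $2$) with one free label since $a+b=w$, and $\{c,d,i,\tilde i,j\}$ (size $25$) with eight free labels. Since every class has at most $25$ edges, by pigeonhole at least $c/25$ distinct classes contain a repeat; the crux is then a short case analysis (the paper's Lemma on free labels) showing that \emph{any} single repeat inside a class kills at least one of that class's free labels.

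That case analysis is where validity condition (ii) is essential, and you have it backwards: you treat (ii) as something that ``only further restricts the count'' in the base case, but in fact it is what rescues the repeat argument. Without (ii), the repeat $a^{(\ell)}=-b^{(\ell)}$ would be consistent with $w=0$ and cost no free labels in the $\{a,b\}$ class, so every layer could carry a ``free'' repeat and the bound would collapse. The paper zeroes out $S_{abcd}$ for $a=-b$ or $c=-d$ precisely to make this case vacuous. Your BFS/DFS scheme would need to confront exactly this obstruction, and the fix is not an amortization argument but the class decomposition plus the explicit use of (ii).
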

\noindent The interpretation of this is as follows. The important factor is $p^{1 + 9q - c(\mathcal{L})/25}$; the rest is lower-order. Without requiring any repeated labels, the number of valid edge-labelings is $\sim p^{1+9q}$. Thus the lemma shows that when repeated labels are required, the number of valid edge-labelings decreases substantially.

\begin{remark}
To achieve heterogeneity $K \lesssim p^\delta$, one needs to show that the number of valid labelings with $r(\mathcal{L})$ regions is $\lesssim p^{1+9q-\delta r(\mathcal{L})}$. Together, Lemmas~\ref{lem:rc} and \ref{lem:count-labelings} show this for some constant $\delta > 0$, but we have not attempted to optimize $\delta$. See Appendix~\ref{app:design} for more on this.
\end{remark}

Using the above lemmas, we are able to bound $\mathbb{E}[\mathrm{Tr}((\tilde W \tilde W^\top)^q)]$ as desired. We prove the following in Appendix~\ref{app:ETr}.
\begin{lemma}\label{lem:ETr}
With $q = \log p$,
$$\mathbb{E}[\mathrm{Tr}((\tilde W \tilde W^\top)^q)] \le O(1)^q p^2$$
and so
$$\{\mathbb{E}[\mathrm{Tr}((\tilde W \tilde W^\top)^q)]\}^{1/2q} \le O(p^{1/q}) \le O(1).$$
\end{lemma}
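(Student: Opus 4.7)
The plan is to expand the trace as
\begin{equation*}
\mathbb{E}[\mathrm{Tr}((\tilde W\tilde W^\top)^q)] \;=\; \sum_{\mathcal{L}\text{ valid}} S_\mathcal{L}\, \mathbb{E}\prod_v \mathcal{L}(v),
\end{equation*}
bound each valid labeling's contribution, then sum using Lemmas~\ref{lem:count-labelings} and \ref{lem:rc}. The preconditioner factor $S_\mathcal{L}$ is a product of $2q$ terms $S_{abcd} = 1/\mathbb{E}[s_{abcd}] = \Theta(p^{9})$ by Lemma~\ref{lem:Es}, so $S_\mathcal{L} \le O(1)^q\, p^{18q}$ uniformly. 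The moment $\mathbb{E}\prod_v \mathcal{L}(v)$ factorizes across regions by independence of the $\theta^k$'s, and Lemma~\ref{lem:moments} says a region $R$ whose incident labels have balanced multiplicities $m_j^R = m_j^{R+} = m_j^{R-}$ contributes $\prod_j m_j^R!\, p^{-m_j^R}$. Since $\mathcal{G}_q$ has $54q$ legs and all pair up within regions (by validity rule~(iii)), $\sum_{R,j} m_j^R = 27q$, so the overall scaling is $p^{-27q}$ times the combinatorial weight $\prod_{R,j} m_j^R!$.

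The next step is to stratify by $c = c(\mathcal{L})$ and $r = r(\mathcal{L})$. Lemma~\ref{lem:count-labelings} bounds the number of valid edge-labelings with a given $c$ by $3\bigl[2(27q)^2\bigr]^c p^{1+9q - c/25}$. For each edge-labeling the number of consistent vertex-labelings is at most $K^r$ (one free choice of $k_v$ per region). Lemma~\ref{lem:rc} gives $r \le 2c$ whenever $r > 1$, so $K^r \le K^{2c} \le p^{2\delta c}$ under $K \le p^\delta$. The most delicate ingredient is the factorial weight $\prod_{R,j} m_j^R!$: because each excess $m_j^R \ge 2$ is witnessed by at least $m_j^R - 1$ repeats in the global count $c(\mathcal{L})$, I plan to show $\prod_{R,j} m_j^R! \le q^{O(c)}$ (or, at worst, $e^{O(c\log q)}$), so this factor is absorbed into the $O(1)^q$ overhead once $q = \log p$.

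Combining these pieces, the contribution from labelings with $c \ge 1$ and $r \ge 2$ is at most
\begin{equation*}
O(1)^q\, p^{18q} \cdot p^{1 + 9q - c/25} \cdot p^{-27q} \cdot K^{2c} \cdot q^{O(c)}
\;=\; O(1)^q\, p \cdot \bigl(q^{O(1)}\, p^{\,2\delta - 1/25}\bigr)^c.
\end{equation*}
Taking $\delta < 1/50$ makes the base $o(1)$, so the geometric sum over $c$ is $O(1)$. The case $r = 1$ is handled separately: all vertices share a single $k_v$, contributing at most $K$ times the $r=1$ analogue of the above bound, which is $O(1)^q\, p^{1+\delta}$. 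Adding the two pieces gives $\mathbb{E}[\mathrm{Tr}((\tilde W \tilde W^\top)^q)] \le O(1)^q\, p^2$, and the $(2q)$-th root with $q = \log p$ (so that $p^{1/q} = e$) yields $O(1)$ as claimed.

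The main obstacle is precisely the factorial bookkeeping: a crude bound like $(3|R|/2)!$ blows up super-polynomially when $|R|$ is large (e.g.\ for $r = 1$), so one must argue that large region-level multiplicities are forced to coincide with many global repeats and thus get absorbed into losses already exploited by Lemma~\ref{lem:count-labelings}. Once this combinatorial fact is in hand, the rest is mechanical summation.
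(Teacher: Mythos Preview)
Your approach is the same as the paper's, but you are missing one factor and you leave the factorial bound undone; together these mean the proof as written is incomplete.

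\textbf{The missing partition factor.} Your claim that ``for each edge-labeling the number of consistent vertex-labelings is at most $K^r$'' is wrong. A vertex-labeling is a choice of $k_v$ for each of the $18q$ vertices; the regions are then the level sets of $v \mapsto k_v$. For a fixed edge-labeling and a fixed target $r$, you must first choose \emph{which} partition of the $18q$ vertices into $r$ parts you use (at most $r^{18q}$ ways) and only then assign distinct $k$-values to the parts ($\le K^r$ ways). Since $r \le 2(c+1)$ by Lemma~\ref{lem:rc}, the missing factor is $[2(c+1)]^{18q}$. This is not absorbable into $O(1)^q$: for $q=\log p$ it is $p^{\Theta(\log\log p)}$. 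In the paper's computation it is exactly the $c^{18q}$ in the series $\sum_c R^c c^{18q}$, and the convergence argument genuinely uses that $R = q^3 p^{-1/25}K^2$ beats this polylog growth. Your displayed geometric sum omits it, so the summation you wrote is bounding the wrong quantity.

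\textbf{The factorial bound.} You flag $\prod_{R,j} m_j^R!$ as the main obstacle and hope for $q^{O(c)}$. The clean bound (used implicitly in the paper) is simply $\prod_{R,j} m_j^R! \le (c(\mathcal{L})+1)!$. Proof: with $n_j := \sum_R m_j^R$ equal to the number of edges labeled $\pm j$, one has $\prod_R m_j^R! \le n_j!$ for each $j$ (multinomial), and the $n_j$ partition $27q$ into $27q - c$ positive parts, so $\prod_j n_j!$ is maximized by concentrating all excess into one part, giving $(c+1)!$. Since $c \le 27q$, this is indeed $q^{O(c)}$ and your plan goes through; but you should state and prove this bound rather than leave it as an obstacle, and note that it also handles the $r=1$ case directly (no separate treatment is needed).

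With these two fixes your argument coincides with the paper's: bound $S_\mathcal{L} \le O(p^9)^{2q}$, bound $\mathbb{E}\prod_v \mathcal{L}(v) \le p^{-27q}(c+1)!$, multiply by the edge-labeling count from Lemma~\ref{lem:count-labelings} and the vertex-labeling count $[2(c+1)]^{18q}K^{2(c+1)}$, and sum the resulting series $\sum_c (q^3 p^{-1/25}K^2)^c c^{18q}$, which is $O(1)^q$ once $K \le p^{1/50-\eta}$.
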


By Theorem~\ref{thm:trace-method}, with probability at least $1-1/p$ (over the randomness of $\theta^1,\ldots,\theta^K$) we have $\|\tilde W\| \le O(1)$; let this be the predicate $P_1(\{\theta^k\})$. Provided $P_1(\{\theta^k\})$ holds, we then have by (\ref{eq:rand-contraction}) that
$$\|M(T,\tilde u)\| \le c \sqrt{\log p}$$
with probability at least $1-8p^{2-\Omega(c^2)}$ (over the randomness of $\tilde u$). Taking $c$ to be a sufficiently large constant completes the proof.

\subsection{Heterogeneous signal term}

Here we consider the term $M(T,(\theta^k)^{\otimes 5})-M(T^k,(\theta^k)^{\otimes 5})$. This term is relatively benign compared to the previous one and we bound it using a much simpler variant of the argument in the previous section. For convenience we use $K \le p^\delta$ here but we expect that the previous term (not this one) would be the bottleneck if we wanted to optimize $\delta$.

\begin{proposition} \label{prop:het-signal}
There exists $\delta > 0$ such that if $K \le p^\delta$ then with high probability over $\{\theta^k\}$ we have for every $k \in [K]$,
$$\|M(T,(\theta^k)^{\otimes 5})-M(T^k,(\theta^k)^{\otimes 5})\| \le o(1).$$
\end{proposition}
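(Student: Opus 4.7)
The plan is to follow the same trace moment template as in the noise term analysis (Section~\ref{sec:noise}), which is substantially simpler here because $u = (\theta^k)^{\otimes 5}$ is rank-one (no random contraction step is needed) and because the ``all-$k_v = k$'' term is explicitly subtracted, so every contributing labeling has a nontrivial region structure.

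First, by multilinearity of $M$ in the nine copies of $\hat T$, together with $\hat T = \sum_{k' \in [K]} \hat T^{k'}$, I would write
$$
Y := M(T,(\theta^k)^{\otimes 5}) - M(T^k,(\theta^k)^{\otimes 5}) = \sum_{\vec k' \in [K]^9 \setminus \{(k,\dots,k)\}} M_{\vec k'}\!\big((\theta^k)^{\otimes 5}\big),
$$
where $M_{\vec k'}$ denotes the matrix built with $\hat T^{k'_i}$ inserted for the $i$-th copy of $\hat T$ in the network of Figure~\ref{fig:net-mra}. I would then apply the trace moment method (Theorem~\ref{thm:trace-method}) to $Y$, fixing $k$ (say $k=1$). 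The quantity $\mathbb{E}[\operatorname{Tr}((YY^\top)^q)]$ expands, exactly as in Section~\ref{sec:noise}, into a sum over labelings of an expanded network $\mathcal{G}_q$ obtained by stringing $2q$ copies of the network for $Y$ around a ring. Two modifications appear compared to the noise term: (a) each central ``$u$''-vertex is rank-one, so its five legs are pinned to carry factors of $\hat\theta^k$ rather than being contracted against a fresh Gaussian; and (b) in each of the $2q$ layers, at least one vertex label $k_v$ must differ from $k$.

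The pinned legs in (a) can be treated as additional constraints linking specific labels to $\pm$ indices of $\hat\theta^k$; since they only restrict the number of free labels, they do not loosen Lemma~\ref{lem:count-labelings}, and the counting argument for $\mathcal{G}_q$ carries over up to lower-order combinatorial factors. The constraint (b) forces $r(\mathcal{L}) \ge 2$ at every layer, and hence by Lemma~\ref{lem:rc} every valid contributing labeling satisfies $c(\mathcal{L}) \ge 1$. Combined with the factor $K^{r(\mathcal{L})-1}$ from choosing non-$k$ vertex labels, the zero-out rule that produced rule (ii) in Definition~\ref{def:valid}, and the Gaussian moment bounds of Lemma~\ref{lem:moments}, this gives (for $K \le p^\delta$ with $\delta$ small enough) a bound of the form
$$
\mathbb{E}[\operatorname{Tr}((YY^\top)^q)] \le p^{-\Omega(q)} \cdot p^2 .
$$
Taking $q = \log p$ and applying Theorem~\ref{thm:trace-method}, we conclude $\|Y\| = p^{-\Omega(1)} = o(1)$ with overwhelming probability over $\{\theta^k\}$, and a union bound over the $K \le p^\delta$ values of $k$ finishes the proof.

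The main obstacle is verifying that the pinned-leg modification does not spoil the counting bound of Lemma~\ref{lem:count-labelings}. Intuitively it can only help, since pinning fixes specific labels and therefore reduces the number of valid edge-labelings, but writing this out carefully requires tracking how the pinned values interact with the zero-sum constraints at the adjacent $\hat T^{k_v}$ vertices. Beyond that, I expect the analysis to be strictly easier than the noise term: there is no random contraction, and the ``at least one non-$k$ vertex per layer'' structure is more restrictive than the single global region-counting dichotomy that drove the noise-term bound---consistent with the remark in the paper that this term is not the bottleneck in $\delta$.
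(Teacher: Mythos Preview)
Your proposal takes a genuinely different and heavier route than the paper. The paper does \emph{not} run the full trace moment method with $q=\log p$ here; it simply bounds the \emph{Frobenius} norm in expectation (i.e.\ the $q=1$ case), shows $\mathbb{E}\|Y\|_F^2 \le O(K^{18}/p)$ by directly counting labelings of a single small network (two rings of nine $\hat T$'s joined along $a,b,c,d$, with the ten $j$-legs terminating at degree-one $\hat\theta^k$ vertices), and finishes with Markov's inequality plus a union bound over $k$. Because the network has only $18$ $\hat T$-vertices and one forced repeated label (from $r(\mathcal{L})\ge 2$), the count $O(K^{18}p^{13})$ drops out in a few lines without ever invoking Lemma~\ref{lem:count-labelings}. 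What this buys is simplicity; the price is that one only gets ``high probability'' via Markov rather than the overwhelming probability your $q=\log p$ approach would yield.

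Your approach is plausible but the step you flag as the main obstacle is a real one, and your justification for it is not quite right. When you replace the random $u$ by $(\theta^k)^{\otimes 5}$, the five $j$-legs in each layer are not ``pinned to specific labels'': they are still summed over, weighted by $\hat\theta^k_{-j}$. So nothing is fixed, and there is no automatic reduction in the number of valid edge-labelings. More importantly, the expanded network you get is topologically different from the $\mathcal{G}_q$ of Section~\ref{sec:noise}: there the $j$-legs are part of the column index of $\tilde W$ and hence connect adjacent layers in the ring, whereas in your construction each layer's $j$-legs terminate at degree-one $\hat\theta^k$ vertices. Lemma~\ref{lem:count-labelings} is proved for the former topology and does not transfer for free; you would need a fresh counting argument tracking how these dangling $\hat\theta^k$ vertices (all forced into region $k$) interact with the region and zero-sum constraints. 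This is doable, but it is new work rather than a corollary of the existing lemma---which is presumably why the paper opted for the one-line Frobenius bound instead.
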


This section is devoted to proving Proposition~\ref{prop:het-signal}. By Markov's inequality, it is sufficient to show
$$\mathbb{E}\|M(T,(\theta^1)^{\otimes 5})-M(T^1,(\theta^1)^{\otimes 5})\|^2_F \le o(1/K)$$
so that we can take a union bound over all $k \in [K]$.

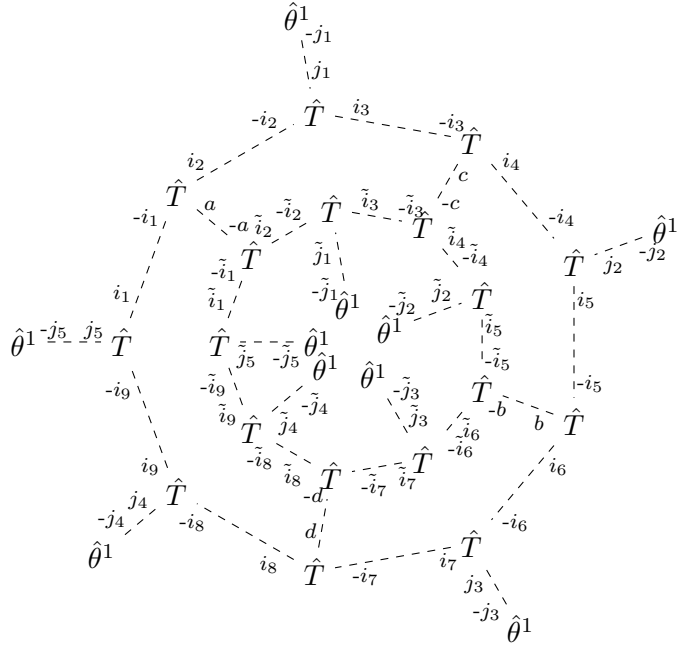
\begin{figure}\centering
\begin{minipage}{0.8\textwidth}\centering
\begin{tikzpicture}[auto]
\def\ra{0.5}  
\def\rad{1.8}  
\def\radi{3.1}  
\def\radiu{4.4}  
\foreach \i in {2,4,6,8,9}{
	\node (\i) at ({180-\i*360/9}:\ra) {$\hat{\theta}^1$};
}
\foreach \i in {1,...,9}{
	\node (\i+9) at ({180-\i*360/9}:\rad) {$\hat{T}$};
}
\foreach \i in {1,...,9}{
	\node (\i+18) at ({180-\i*360/9}:\radi) {$\hat{T}$};
}
\foreach \i in {2,4,6,8,9}{
	\node (\i+27) at ({180-\i*360/9}:\radiu) {$\hat{\theta}^1$};
}
\foreach \i/\j in {2/1,4/2,6/3,8/4,9/5}{
	\draw[dashed] (\i) to node[very near start,inner sep=1pt] {\scriptsize{-$\tilde j_\j$}} node[very near end,inner sep=1pt] {\scriptsize{$\tilde j_\j$}} (\i+9);
}
\foreach \i/\j in {2/1,4/2,6/3,8/4,9/5}{
	\draw[dashed] (\i+27) to node[very near start,inner sep=1pt] {\scriptsize{-$j_\j$}} node[very near end,inner sep=1pt] {\scriptsize{$j_\j$}} (\i+18);
}
\foreach \i in {1,...,8}{
	\pgfmathtruncatemacro{\nexti}{\i + 1}
	\draw[dashed] (\i+9) to node[very near start,inner sep=1pt] {\scriptsize{$\tilde i_\nexti$}} node[very near end,inner sep=1pt] {\scriptsize{-$\tilde i_\nexti$}} (\nexti+9);
}
\draw[dashed] (9+9) to node[very near start,inner sep=1pt] {\scriptsize{$\tilde i_1$}} node[very near end,inner sep=1pt] {\scriptsize{-$\tilde i_1$}} (1+9);
\foreach \i in {1,...,8}{
	\pgfmathtruncatemacro{\nexti}{\i + 1}
	\draw[dashed] (\i+18) to node[very near start,inner sep=1pt] {\scriptsize{$i_\nexti$}} node[very near end,inner sep=1pt] {\scriptsize{-$i_\nexti$}} (\nexti+18);
}
\draw[dashed] (9+18) to node[very near start,inner sep=1pt] {\scriptsize{$i_1$}} node[very near end,inner sep=1pt] {\scriptsize{-$i_1$}} (1+18);
\foreach \i/\a in {1/a,3/c,5/b,7/d}{
	\draw[dashed] (\i+18) to node[very near start,inner sep=1pt] {\scriptsize{$\a$}} node[very near end,inner sep=1pt] {\scriptsize{-$\a$}} (\i+9);
}
\end{tikzpicture}
\caption{The tensor network for $\|M(T,(\theta^1)^{\otimes 5})\|_F^2$. The error term in (\ref{eq:het-frob}) is obtained by only considering the terms corresponding to particular labelings (see Definition~\ref{def:valid-het}).}
\label{fig:het-frob}
\end{minipage}
\end{figure}

The value $\|M(T,(\theta^1)^{\otimes 5})\|_F^2$ is depicted by the tensor network in Figure~\ref{fig:het-frob}. Similarly to the previous section, we consider labelings of Figure~\ref{fig:het-frob}. As before, each edge gets a label $i_e$ and each vertex gets a label $k_v$. (Each $\hat\theta^1$ is also considered a vertex.)

\begin{definition}\label{def:valid-het}
In addition to the requirements in Definition~\ref{def:valid}, we define a \emph{valid labeling} of Figure~\ref{fig:het-frob} to have two additional constraints: (i) each $\hat\theta^1$ has vertex label $k_v = 1$, and (ii) the inner and outer ring of $\hat T$'s each have a vertex for which $k_v \ne 1$.
\end{definition}

By restricting to these valid labelings, we get an expression for the error term that we want. Formally,
\begin{equation}\label{eq:het-frob}
\|M(T,(\theta^1)^{\otimes 5})-M(T^1,(\theta^1)^{\otimes 5})\|^2_F = \sum_{\text{valid }\mathcal{L}} S_{abcd}S_{(-a)(-b)(-c)(-d)} \prod_v \mathcal{L}(v)
\end{equation}
where for a $\hat T$ vertex, $\mathcal{L}(v)$ is defined as in the previous section, and for a $\hat\theta^1$ vertex, $\mathcal{L}(v) = \hat\theta^1_{-j}$ where $-j$ is the incident label.

\begin{lemma}
The number of valid labelings of Figure~\ref{fig:het-frob} is $O(K^{18} p^{13})$.
\end{lemma}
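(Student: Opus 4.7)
The plan is to bound the count as a product of choices for vertex labels $k_v$ and choices for edge labels $i_e$. For vertex labels, the 10 $\hat\theta^1$ vertices are pinned to $k_v = 1$ by Definition~\ref{def:valid-het}(i), so only the 18 $\hat T$ vertices remain, each ranging over $[K]$; this contributes a factor of at most $K^{18}$.

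For edge labels (given a fixed vertex labeling), the network has 32 edges: 9 inner-ring, 9 outer-ring, 4 spokes, and 10 incident to $\hat\theta^1$'s. The 18 zero-sum constraints of Definition~\ref{def:valid}(i) at the $\hat T$ vertices give a linear system whose rank I would verify is $18$: any linear dependency $\sum_v c_v L_v = 0$ (where $L_v$ is the zero-sum at vertex $v$) forces $c_v = 0$ for every $\hat T$ adjacent to a $\hat\theta^1$ (since nothing on the $\hat\theta^1$ side cancels the edge's contribution), and then connectivity of the $\hat T$-induced subgraph propagates $c_v = 0$ to every $\hat T$. Hence the zero-sum constraints restrict to $O(p^{14})$ edge labelings.

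To save the remaining factor of $p$, I would invoke the balance constraint (iii) on the region $R_1 = \{v : k_v = 1\}$. By Definition~\ref{def:valid-het}(ii), $R_1$ is a proper subset, so $\partial R_1 \neq \emptyset$. Since every $\hat T$ has odd degree $3$ and all $\hat\theta^1$'s lie in $R_1$, the parity of $|\partial R_1|$ matches the parity of the number of $\hat T$'s outside $R_1$; balance forces $|\partial R_1|$ to be even, and a short graph-theoretic check on the ring-plus-spoke structure (using that the graph is triangle-free with every $\hat T$ of degree $3$) rules out $|\partial R_1| = 2$, yielding $|\partial R_1| \ge 4$. Balance then requires the boundary labels of $R_1$ to pair as $(x, -x)$, restricting them to $O(p^{|\partial R_1|/2})$ configurations compared with the $O(p^{|\partial R_1|-1})$ configurations consistent with merely the sum-to-zero already implied by zero-sum. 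This is a combinatorial saving of at least $p^{|\partial R_1|/2 - 1} \ge p$, yielding $O(p^{13})$ edge labelings and hence the claimed $O(K^{18} p^{13})$ bound.

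The main obstacle I anticipate is carrying out the case check that rules out $|\partial R_1| = 2$ and, more generally, making the factor-of-$p$ saving uniform across all $\{k_v\}$ configurations; in particular, configurations with many small non-$R_1$ regions could have balance constraints from several regions interacting non-trivially with the zero-sum constraints, and one must ensure that the balance savings always beat any potential loss in the parameterization.
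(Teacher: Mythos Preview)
Your proposal is correct in outline and reaches the same bound, but it takes a more laborious route than the paper for the final factor of $p$.

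The $K^{18}$ count and the $O(p^{14})$ bound on edge-labelings match the paper (the paper obtains $p^{14}$ by an explicit parametrization—$p^4$ for $a,b,c,d$, then $p^8$ for the $j$'s and $\tilde j$'s subject to the two ring-sum identities, then $p$ each for the $i$- and $\tilde i$-rings—rather than your rank argument; both are fine).

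Where you diverge is the $p$-saving. You analyze $\partial R_1$ directly: show $|\partial R_1|\ge 4$, then argue the boundary-pairing imposed by condition~(iii) cuts the boundary-label count from $O(p^{|\partial R_1|-1})$ to $O(p^{|\partial R_1|/2})$. The paper instead observes that Definition~\ref{def:valid-het}(ii) forces at least two regions, hence (by the same reasoning as Lemma~\ref{lem:rc}) at least one \emph{repeated label} in the sense of Definition~\ref{def:repeated-labels}, and then invokes the Appendix~\ref{app:count} machinery to conclude that a repeated-label constraint removes one free parameter. Your boundary-pairing argument would, once completed, rediscover precisely this repeated-label constraint—but at the cost of the case check ruling out $|\partial R_1|=2$ and, more delicately, a verification that the pairing constraints are genuinely independent of the 18 zero-sum constraints (your ``$O(p^{|\partial R_1|/2})$ vs.\ $O(p^{|\partial R_1|-1})$'' comparison is a count of boundary-label configurations, not of full edge-labelings, and the translation between the two is exactly the independence statement you flag as the main obstacle). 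The paper's shortcut buys brevity by recycling the repeated-label bookkeeping already developed for the noise term; your route is more self-contained but heavier.
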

\begin{proof}
There are $K^{18}$ ways to choose the vertex labels for the 18 $\hat T$ vertices. There are at most $p^{14}$ ways to choose the edge labels: $p^4$ ways to choose $a,b,c,d$, $p^8$ ways to choose the $j$'s and $\tilde j$'s such that $-j_1-j_2-\cdots-j_5=a+b+c+d=\tilde j_1+\tilde j_2 + \cdots + \tilde j_5$ (which is required; see Appendix~\ref{app:count}), $p$ ways to choose the $i$'s, and $p$ ways to choose the $\tilde i$'s. We can improve this argument by noting that a valid labeling cannot have all vertex labels equal, so there must be at least two regions (in the sense of Definition~\ref{def:region}) and thus at least one repeated label (in the sense of Definition~\ref{def:repeated-labels}). This constraint removes a degree of freedom in the edge labels (similarly to the argument in Appendix~\ref{app:count}), and so there are only $O(p^{13})$ ways to choose the edge labels.
\end{proof}

Let $S_{\max} = \max_{abcd} S_{abcd} \le O(p^9)$. Since $0 \le \mathbb{E}\prod_v \mathcal{L}(v) \le O(p^{-32})$, we have
\begin{align*}
\mathbb{E}\|M(T,(\theta^1)^{\otimes 5})-M(T^1,(\theta^1)^{\otimes 5})\|^2_F
&\le S_{\max}^2 \sum_{\text{valid }\mathcal{L}}\mathbb{E} \prod_v \mathcal{L}(v)\\
&\le O(p^{18}) \cdot O(K^{18} p^{13}) \cdot O(p^{-32})\\
&\le O(K^{18}/p)
\end{align*}
which is $o(1/K)$ provided $K \le p^{1/20}$.

\subsection{Error term}

Here we consider the term $M(\mathcal{T},u) - M(T,u)$. This is the error term due to the small error $E$ that we allow in our input tensor. We will use very crude and easy bounds here.

\begin{proposition}
\label{prop:error-E}
There is a deterministic predicate $P_2(\{\theta^k\})$ depending only on $\{\theta^k\}$ that occurs with high probability. If $P_2(\{\theta^k\})$ is satisfied then
$$\|M(\mathcal{T},u) - M(T,u)\| \le \tilde O(K^8 p^4 \|E\|_\infty)$$
with overwhelming probability over $u$.
\end{proposition}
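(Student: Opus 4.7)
The plan is to exploit multilinearity of $M$ in the nine $\mathcal{T}$-slots. Since $\mathcal{T} = T+E$, expanding yields
\[
M(\mathcal{T},u) - M(T,u) \;=\; \sum_{\emptyset \neq S \subseteq [9]} M_S(E,T,u),
\]
where $M_S$ denotes the operator obtained by plugging $E$ into the slots indexed by $S$ and $T$ into the remaining slots. There are $2^9-1=511$ such terms, and by the triangle inequality it suffices to bound each separately.

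For each term I would bound the spectral norm by the Frobenius norm, which in turn I would bound by a crude entrywise estimate combined with counting the summation range in the Fourier picture. The predicate $P_2(\{\theta^k\})$ will be chosen to enforce (with high probability) the entrywise bounds $|\hat\theta^k_i| \le \tilde O(p^{-1/2})$ for all $k \in [K]$ and $i \in \pm[p/2]$, from which one obtains $\|\hat T\|_{\infty} \le \tilde O(Kp^{-3/2})$ together with the sparsity pattern $\hat T_{i_1 i_2 i_3} = 0$ unless $i_1+i_2+i_3=0$. The random tensor $u$ satisfies $\|\hat u\|_{\infty} \le \tilde O(1)$ with overwhelming probability by standard Gaussian tail bounds, and from Lemma~\ref{lem:Es} we have $S_{abcd} \le O(p^{9})$ uniformly. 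For the error tensor, I would use the coarse bound $|\hat E_{ijk}| \le p^{3/2}\|E\|_\infty$ (obtained from $\hat E = (\Delta^\top)^{\otimes 3}E$ and Cauchy--Schwarz, for instance).

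For a fixed $S$ with $|S|=s$, each entry of $\hat M_S(E,T,u)_{ab,cd}$ is a sum over the indices $i_1,\ldots,i_9,j_1,\ldots,j_5$. The $9-s$ copies of $\hat T$ still enforce their zero-sum indicator, so each such copy removes one degree of freedom; the $s$ copies of $\hat E$ impose no zero-sum constraint. A direct count then shows the number of nonzero summands is at most $p^{5+s}$, and each summand is bounded by $(\tilde O(Kp^{-3/2}))^{9-s}(p^{3/2}\|E\|_\infty)^{s} \cdot \tilde O(1)$. Combined with $S_{abcd} \le O(p^9)$, the entrywise bound $\|\hat M_S\|_{\max}$ is $\tilde O(K^{9-s} p^{9} \cdot p^{5+s} \cdot p^{-3(9-s)/2} \cdot p^{3s/2} \cdot \|E\|_\infty^{s})$; summing squares over the $p^4$ entries and taking a square root gives
\[
\|M_S(E,T,u)\|_F \;\le\; \tilde O\!\left(K^{9-s}\, p^{4}\, (p^{3}\|E\|_\infty)^{s}\, (Kp^{-3/2})^{-s}\right)\cdot\|E\|_\infty^{0}
\]
after rearrangement. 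The dominant contribution comes from $s=1$ (since subsequent powers of $\|E\|_\infty$ are tiny), yielding the claimed $\tilde O(K^{8} p^{4}\|E\|_\infty)$. Summing the geometric series over $s=1,\ldots,9$ absorbs the constant factor $511$ into the $\tilde O(\cdot)$.

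The main obstacle is purely bookkeeping: ensuring that the exponents of $K$ and $p$ across the 511 terms come out consistent with the target $K^{8}p^{4}\|E\|_\infty$, and that the $s=1$ term genuinely dominates once one factors in the assumed smallness $\|E\|_\infty \le K^{-8}p^{-4}/\mathrm{polylog}(p)$. Nothing in this step requires the delicate trace-moment combinatorics of Section~\ref{sec:noise}; the looseness of the Frobenius bound and the crude entrywise estimates on $\hat T$ and $\hat E$ are more than enough, because the error term is only required to be small, not sharp.
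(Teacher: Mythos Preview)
Your strategy---multilinear expansion into $2^9-1$ terms, entrywise bounds, then Frobenius---is exactly what the paper does, so the approach is right.  The execution, however, does not reach the stated exponent $K^8 p^4$, for two concrete reasons.

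First, your bound $|\hat E_{ijk}| \le p^{3/2}\|E\|_\infty$ is far too generous.  The change-of-basis matrix $\Delta$ is not a generic $p\times p$ unitary; by its definition in Section~\ref{sec:continuousMRA} it is block-diagonal with $2\times 2$ blocks (entries of modulus $1/\sqrt 2$).  Hence each entry of $\hat E = (\Delta^*)^{\otimes 3}E$ is a linear combination of only $2^3=8$ entries of $E$, giving $|\hat E_{ijk}| \le O(\|E\|_\infty)$ with no factor of $p$.  This is the bound the paper actually uses.

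Second, your own arithmetic does not produce $K^8 p^4$.  With your inputs (entry bound $p^{3/2}\|E\|_\infty$ and $p^{5+s}$ summands), the $p$-exponent in the entrywise bound is $9 + (5+s) - \tfrac{3}{2}(9-s) + \tfrac{3s}{2} = \tfrac12 + 4s$, so for $s=1$ the Frobenius bound is $\tilde O(K^8 p^{6.5}\|E\|_\infty)$, not $\tilde O(K^8 p^{4}\|E\|_\infty)$.  Your displayed ``rearranged'' formula does not match this calculation either, and the trailing $\|E\|_\infty^{0}$ is spurious.  Even after fixing the $\hat E$ bound to $O(\|E\|_\infty)$, your count of $p^{5+s}$ summands still yields $K^8 p^{5}\|E\|_\infty$ for $s=1$.

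The paper recovers the extra factor of $p$ by treating the number of summands as $O(p^5)$ rather than $p^{5+s}$, i.e.\ by exploiting the zero-sum support of $\hat T$ on all nine slots.  Your observation that $\hat E$ need not be supported on $\{i_1+i_2+i_3=0\}$ is correct as written, but note that one may always zero out the off-support entries of $\hat{\mathcal T}$ as a free preprocessing step (this leaves $\hat T$ untouched and only shrinks $\|\hat E\|_\infty$), after which the $O(p^5)$ count is legitimate.  With both fixes---$|\hat E_{ijk}|\le O(\|E\|_\infty)$ and $O(p^5)$ summands---your argument becomes the paper's: entrywise $\tilde O(p^9\cdot p^5\cdot (Kp^{-3/2})^8\cdot \|E\|_\infty)=\tilde O(K^8 p^2\|E\|_\infty)$, hence Frobenius bound $\tilde O(K^8 p^4\|E\|_\infty)$.
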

\begin{proof}
Let $P_2(\{\theta^k\})$ be the high-probability event that every entry of every $\theta^k$ is bounded by $|\theta^k_i| \le \tilde O(1/\sqrt{p})$. If $P_2(\{\theta^k\})$ is satisfied then every entry of $T$ has size at most $\tilde O(K p^{-3/2})$. Each entry of $u$ has size $\tilde O(1)$ with overwhelming probability. We also have $S_{abcd} \le O(p^9)$. Each entry of $M(\mathcal{T},u)$ is the sum of $O(p^5)$ terms, each of which is the product of: $S_{abcd}$, an entry of $u$, and $9$ entries of $\mathcal{T}$. Each entry of $\mathcal{T}$ has a part from $T$ and a part from $E$; here we only consider the error terms where one of the $9$ entries uses $E$. Each entry of $M(\mathcal{T},u) - M(T,u)$ has size $\tilde O(p^5 \cdot p^9 \cdot (Kp^{-3/2})^8 \cdot \|E\|_\infty) = \tilde O(K^8 p^2 \|E\|_\infty)$, provided $\varepsilon \le Kp^{-3/2}$. Thus
$$\|M(\mathcal{T},u) - M(T,u)\| \le \|M(\mathcal{T},u) - M(T,u)\|_F \le \sqrt{p^4 \,\tilde O(K^8 p^2 \|E\|_\infty)^2} \le \tilde O(K^{8} p^4 \|E\|_\infty).$$
\end{proof}

\subsection{Putting it all together}
\label{sec:together}

Here we complete the proof of Theorem~\ref{thm:mra-tech}. Recall $u \sim \mathcal{N}(0,I) \in \mathbb{R}^{p^5}$, $\Theta^k = (\theta^k \otimes \theta^k)(\theta^k \otimes \theta^k)^\top$ and $u = \alpha (\theta^k)^{\otimes 5} + \tilde u$ with $\tilde u \perp (\theta^k)^{\otimes 5}$. As above, we write
$$M(\mathcal{T},u) = \alpha M(T^k,(\theta^k)^{\otimes 5}) + \alpha[M(T,(\theta^k)^{\otimes 5}) - M(T^k,(\theta^k)^{\otimes 5})] + M(T,\tilde u) + [M(\mathcal{T},u) - M(T,u)].$$

\noindent Let the predicate $P(\{\theta^k\})$ be the intersection of the following high-probability events:
\begin{itemize}
    \item the conclusion of Proposition~\ref{prop:signal} holds for every $k \in [K]$,
    \item $P_1(\{\theta^k\})$ (from Proposition~\ref{prop:noise-u}) holds,
    \item the conclusion of Proposition~\ref{prop:het-signal} holds,
    \item $P_2(\{\theta^k\})$ (from Proposition~\ref{prop:error-E}) holds,
    \item for every $k \in [K]$, $1-\tilde O(1/\sqrt{p}) \le \|\theta^k\| \le 1+\tilde O(1/\sqrt{p})$.
\end{itemize}

\noindent For fixed $\theta^1,\ldots,\theta^K$ satisfying $P(\{\theta^k\})$, we have for any $k$,
\begin{itemize}
\item $\|M(T^k,(\theta^k)^{\otimes 5}) - \Theta^k\| \le o(1),$
\item $\|M(T,(\theta^k)^{\otimes 5})-M(T^k,(\theta^k)^{\otimes 5})\| \le o(1),$
\item $\|M(T,\tilde u)\| \le O(\sqrt{\log p})$ \quad with high probability over $\tilde u$,
\item $\|M(\mathcal{T},u) - M(T,u)\| \le o(1)$ \quad with overwhelming probability over $u$.
\end{itemize}

We have $\alpha = \langle u,(\theta^k)^{\otimes 5} \rangle/\|(\theta^k)^{\otimes 5}\|^2 = \langle u,(\theta^k)^{\otimes 5}/\|(\theta^k)^{\otimes 5}\|\rangle/\|(\theta^k)^{\otimes 5}\| \defeq \tilde \alpha/\|\theta^k\|^5$ where $\tilde \alpha \sim \mathcal{N}(0,1)$ independently from $\tilde u$.

We have the Gaussian lower tail bound
$$\mathrm{Pr}\{\tilde\alpha \ge t\} \ge \frac{1}{2\sqrt{2 \pi}}\, t^{-1} \exp(-t^2/2)$$
and so
\begin{equation}\label{eq:alpha}
\Pr\left\{\tilde\alpha \ge C \sqrt{\log p}\right\} \ge \frac{1}{2\sqrt{2 \pi}}\, \frac{1}{C \sqrt{\log p}} \,p^{-C^2/2}.
\end{equation}
We can write $M_\mathrm{sym}(\mathcal{T},u) \defeq \frac{1}{2}[M(\mathcal{T},u)+M(\mathcal{T},u)^\top] = \alpha \,\Theta^k + B = \tilde\alpha\,\Theta^k/\|\theta^k\|^5 + B$ where
\begin{equation}\label{eq:B}
\|B\| \cdot \|\theta^k\| \defeq \beta \le o(1) \,\alpha + O(\sqrt{\log p}) \le o(1) \,\tilde\alpha + O(\sqrt{\log p}).
\end{equation}

Let $w = (\theta^k \otimes \theta^k)/\|\theta^k\|^2 \in (\mathbb{R}^p)^{\otimes 2}$ so that $ww^\top = \Theta^k/\|\theta^k\|^4$. Let $v \in (\mathbb{R}^p)^{\otimes 2}$ be the leading (unit-norm) eigenvector of $M_\mathrm{sym}(\mathcal{T},u)$, which is also the leading eigenvector of $\tilde M_\mathrm{sym}(\mathcal{T},u) = \|\theta^k\| M_\mathrm{sym}(\mathcal{T},u) = \tilde \alpha ww^\top + \|\theta^k\|B$. We have
$$\tilde\alpha \langle v,w \rangle^2 + \beta \ge v^\top \tilde M_\mathrm{sym}(\mathcal{T},u) \,v \ge w^\top \tilde M_\mathrm{sym}(\mathcal{T},u) \,w \ge \tilde\alpha - \beta$$
and so $$\langle v,w \rangle^2 \ge 1 - \frac{2\beta}{\tilde\alpha}.$$
Re-shape $v$ into a $p \times p$ matrix $\tilde V$ and let $V = \frac{1}{2}(\tilde V + \tilde V^\top)$. Let $\tau$ be the eigenvector of $V$ corresponding to the eigenvalue of largest absolute value. Let $y = \theta^k/\|\theta^k\|$. Write
$$V = \langle V,yy^\top \rangle yy^\top + B'$$
where $\|B'\|^2 \le \|B'\|_F^2 = \|V\|_F - \langle V,yy^\top \rangle^2 \le 1 - \langle V,yy^\top \rangle^2$.
We have
$$|\langle V,yy^\top \rangle| = |y^\top V y| \le |\tau^\top V \tau| \le |\langle V,yy^\top \rangle| \cdot\langle \tau,y \rangle^2 + \|B'\|$$
and so
$$\langle \tau,y \rangle^2 \ge 1 - \frac{\|B'\|}{|\langle V,yy^\top \rangle|} \ge 1 - \frac{\sqrt{1-\langle V,yy^\top \rangle^2}}{|\langle V,yy^\top \rangle|}.$$
Note that
$$\langle V, yy^\top \rangle^2 = \langle \tilde V, yy^\top \rangle^2 = \langle v,w \rangle^2 \ge 1 - \frac{2\beta}{\tilde\alpha}$$
and so, provided $2\beta/\tilde\alpha \le 1/2$,
$$\langle \tau,y \rangle^2 \ge 1 - \frac{\sqrt{2\beta/\tilde\alpha}}{\sqrt{1/2}} = 1 - 2 \sqrt{\frac{\beta}{\tilde\alpha}}.$$
Recall from (\ref{eq:B}) that if $\tilde\alpha \ge C\sqrt{\log p}$ then $\beta/\tilde\alpha \le o(1) + O(1)/C$. Thus, to have $\langle \tau,y \rangle^2 \ge 1 - \varepsilon - o(1)$, it is sufficient to take $C = O(1)/\varepsilon^2$. Using (\ref{eq:alpha}), the success probability is $\ge p^{-O(1)/\varepsilon^4}$.

\appendix

\section{Omitted proofs}

\subsection{Proof of Lemma~\ref{lem:Es}}
\label{sec:pf-Es}

Here we prove the lower bound in Lemma~\ref{lem:Es}: for some constant $c_1$, $\mathbb{E}[s_{abcd}] \ge c_1 p^{-9}$.

\begin{proof}[Proof of Lemma~\ref{lem:Es}]
For every $a,b,c,d$, we will show (by explicit construction) that there are at least $c_1 p^5$ (nonzero) terms in $s_{abcd}$. This completes the proof because each term has expectation at least $p^{-14}$.

Once $i_1,j_1,j_2,j_3,j_4$ are fixed, we have $i_2 = i_1 - a, i_3 = i_2 - j_1, i_4 = i_3 - c, \ldots, i_9 = i_8 - j_4$. We then need to pick $j_5$ so that $i_1 = i_9 - j_5$.

Let us first construct one solution. Regardless of $i_1$, it is possible by induction to choose $j_1,j_2,j_3 \in [-p/2,p/2]$ such that $i_2,\ldots,i_8$ lie between $i_1$ and $i_1 - \frac{p}{2}\,\sgn(a)$ (where $\sgn(a) = 1$ if $a \ge 0$ and $\sgn(a) = -1$ if $a < 0$). We can then take $j_4 \in [-p/2,p/2]$ so that $i_9 = i_1$, and $j_5 = 0$. Since $i_1,\ldots,i_9$ all lie in an interval of length $p/2$, we now choose $i_1$ anywhere in a particular interval of length $p/2$ so that $i_1,\ldots,i_9$ all lie in $[-p/2,p/2]$.

Now consider perturbations of the above solution where we allow $j_1,\ldots,j_4$ to deviate from their above values by at most $p/10$. This causes $i_2,\ldots,i_9$ to each deviate from their original value by at most $2p/5$. We pick $j_5 \in [-2p/5,2p/5]$ in order to compensate. Now $i_1,\ldots,i_9$ lie in an interval of length $9p/10$, so the interval of possible $i_1$ values now has length $\ge p/10$. This yields $(10^{-5} - o(1))p^5$ different solutions.
\end{proof}

\subsection{Proof of Lemma~\ref{lem:count-labelings}}
\label{app:count}

\begin{proof}[Proof of Lemma~\ref{lem:count-labelings}]
By summing the vertex constraints (from Definition~\ref{def:valid}(i)) in a ring, we have
$$a^{(\ell)}+b^{(\ell)} = -c^{(\ell)}-d^{(\ell)}-j_1^{(\ell)} - j_2^{(\ell)} - j_3^{(\ell)} - j_4^{(\ell)} - j_5^{(\ell)} = a^{(\ell+1)}+b^{(\ell+1)} \defeq w$$
for all $\ell$. There are $2p+1$ choices for $w$.

Within each layer, split the edges into \emph{classes} $\{ab\}$ (2 edges) and $\{cdi\tilde ij\}$ (25 edges). In total there are $q$ copies of each class. After fixing $w$, imagine picking the labels sequentially, where first we pick all the $\{ab\}$ labels layer by layer, and then we pick all the $\{cdi\tilde ij\}$ labels layer by layer. (Within each class, fix an arbitrary order for the labels.) Call a label \emph{repeated} if it is equal (up to sign) to a previous label in the above ordering. The total number of repeated labels is, by definition, $c(\mathcal{L})$.

Let $N = 27q$ be the total number of edges. There are at most $N^{c(\mathcal{L})}$ ways to pick which edges are \emph{repeated}, and then at most $(2N)^{c(\mathcal{L})}$ ways to pick which previous edges they repeat (where the factor of 2 is for the choice of sign). We imagine choosing this structure of repeats in advance and then picking all the labels subject to these constraints.

Without any repeats imposed, each $\{ab\}$ class has 1 \emph{free label}, i.e.\ $a$ is free to take any value but then we must set $b = w-a$. Each $\{cdi \tilde ij\}$ class has 8 free labels: there are 6 free labels among $c,d,j_1,\ldots,j_5$ (since they must sum to $-w$), 1 free label among the $i$'s (since once one $i$ label is chosen, this propagates around the ring), and similarly 1 free label among the $\tilde i$'s. Overall, there are 9 free labels per layer.

Each class ($\{ab\}$ or $\{cdi\tilde ij\}$) has $\le 25$ edges. At least $c(\mathcal{L})/25$ classes must have a repeated label. Each class with a repeated label has (at least) one fewer free label than it would otherwise have; see Lemma~\ref{lem:free} below. So the total number of free labels is $q + 8q - c(\mathcal{L})/25 = 9q - c(\mathcal{L})/25$.

Thus the total number of valid edge-labelings is at most
$$(2p+1) (2N^2)^{c(\mathcal{L})} p^{9q - c(\mathcal{L})/25} \le 3(2N^2)^{c(\mathcal{L})} p^{1 + 9q - c(\mathcal{L})/25}.$$
\end{proof}

\begin{lemma}\label{lem:free}
Suppose we have already chosen $w$ and the structure of repeated labels and now want to count the number of labelings (subject to these constraints). Each $\{ab\}$ class with a repeated label has no free labels, i.e.\ at most 1 possible labeling (for any fixed values of the previous labels). Each $\{cdi\tilde ij\}$ class with a repeated label has at most 7 free labels, i.e.\ at most $p^7$ possible labelings.
\end{lemma}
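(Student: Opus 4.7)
The approach is to read Lemma~\ref{lem:free} as a rank calculation on the linear system of vertex constraints restricted to a single class. I would first recover the baseline 1 and 8 free-label counts via rank, then argue that every admissible repeat contributes one equation independent of the existing system, so a single repeat cuts the free dimension by exactly one.

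To recover the baseline counts, I would sum the nine outer-ring vertex constraints in layer $\ell$: every $i$ telescopes to zero and what remains is $a^{(\ell)}+b^{(\ell)}+c^{(\ell)}+d^{(\ell)}+j_1^{(\ell)}+\cdots+j_5^{(\ell)}=0$; substituting the already-fixed $a^{(\ell)}+b^{(\ell)}=w$ gives $c^{(\ell)}+d^{(\ell)}+j_1^{(\ell)}+\cdots+j_5^{(\ell)}=-w$. Summing the nine inner-ring constraints produces the same relation, so the 18 vertex constraints on the 25 labels of the $\{cdi\tilde ij\}$ class have exactly one dependence, rank 17, and $25-17=8$ free labels. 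The $\{ab\}$ class carries only $a^{(\ell)}+b^{(\ell)}=w$ on two labels and so has 1 free label.

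For the $\{ab\}$ class with a repeat, a three-way case split suffices: $a^{(\ell)}$ is pinned to $\pm$ a previously fixed label, $b^{(\ell)}$ is pinned similarly, or $a^{(\ell)}=\pm b^{(\ell)}$. The first two pin one endpoint and $a+b=w$ determines the other; the third combined with $a+b=w$ and the validity rule $a\neq -b$ leaves at most one admissible pair. Either way there is at most one labeling of the class, so zero free labels.

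The main work is the $\{cdi\tilde ij\}$ class with a repeat, where the repeat contributes one linear equation of the form $L_e\mp L_{e'}=0$ (or $L_e=\mathrm{const}$). I would show this equation lies outside the 17-dimensional row space, so the rank jumps to 18 and the free dimension falls to 7. I would case-split on the location of $L_e,L_{e'}$: both within $\{c,d,j_1,\ldots,j_5\}$, both within the $i$-ring, both within the $\tilde i$-ring, one in each ring, one in $\{c,d,j\}$ and one in a ring, or the external-pin case. The structural fact driving each case is that a linear combination of the existing 17 constraints whose $i$- and $\tilde i$-coefficients all vanish must be a scalar multiple of the sum relation, which is supported on all seven $c,d,j$ variables with equal coefficients, so a two-variable repeat equation cannot match unless it is a self-repeat. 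I expect the within-ring subcase to be the main obstacle: telescoping $i_\ell\mp i_{\ell'}=0$ in the same-sign situation converts the equation into a partial-sum relation $\sum_{k\in S}x_k=0$ on a proper nonempty subset $S$ of the outer ring's external labels, and I need to verify this partial sum is never proportional to the full-sum relation already in the span; in the opposite-sign situation the equation instead pins $i_1$ in terms of $c,d,j$'s, which is automatically independent. Checking the subcases yields the seven-free-label bound and the $p^7$ count claimed.
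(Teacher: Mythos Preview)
Your proposal is correct and follows essentially the same route as the paper. Both arguments do the same three-way split on the $\{ab\}$ class (external pin, $a=b$, $a=-b$ blocked by Definition~\ref{def:valid}(ii)) and the same case breakdown on the $\{cdi\tilde ij\}$ class (external pin; two of $c,d,j$ collide; a $c,d,j$ label collides with an $i$ or $\tilde i$; an $i$ collides with a $\tilde i$; and the within-ring same-sign/opposite-sign cases). Your linear-algebra framing---compute the rank of the vertex system, then show any repeat equation lies outside its row space---is a clean repackaging of the paper's direct ``free label'' count, but the content of each case is identical. In particular, your handling of the within-ring same-sign repeat (telescope $i_{\ell_1}=i_{\ell_2}$ into a proper-subset partial sum on the ring's external labels, which cannot be proportional to the full seven-variable sum) is exactly the paper's observation that the arc splits $c,d,j$ into two nonempty parts each summing to a prescribed value.
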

\begin{proof}
First consider an $\{ab\}$ class. Here we have the constraint $a+b=w$. If either $a$ or $b$ is constrained to repeat a label from a previous class, then there is only 1 possible way to label $a$ and $b$. If the repeat $a=b$ is constrained then the only possibility is $a = b = w/2$. If $a = -b$ is constrained then there are no possible labelings because this would violate Definition~\ref{def:valid}(ii). (This is why we needed to set $S_{abcd}$ to zero when $a=-b$ or $c=-d$. If it were not for this, when $w=0$ we would have the repeat $a=-b$ on every layer without sacrificing any free labels.)

Now consider a $\{cdi\tilde ij\}$ class. There are a few cases to check. It is clear that a free label is lost in each of the following cases.
\begin{itemize}
    \item A label in the class is constrained to repeat a label from a previous class.
    \item Two labels from $c,d,j$ are constrained to be equal (up to sign).
    \item A label from $c,d,j$ is constrained to equal (up to sign) an $i$ or $\tilde i$ label. (This removes the free label from $i$ or $\tilde i$.)
    \item An $i$ label is constrained to repeat a $\tilde i$ label. (This removes the free label from $\tilde i$.)
\end{itemize}
There is one more subtler case to check: a repeat is constrained within a single ring of $i$'s (or $\tilde i$'s). If $i_{\ell_1} = -i_{\ell_2}$ is constrained then (once $a,b,c,d,j$ are chosen) there is only 1 possible value for $i_{\ell_1}$ (i.e.\ the ring has no free labels). Suppose instead that $i_{\ell_1} = i_{\ell_2}$ is constrained. If $i_{\ell_1}$ and $i_{\ell_2}$ are adjacent in the ring then there is no possible labeling since the edge between them would need a zero label (which is not in $\pm [p/2]$). Otherwise $i_{\ell_1}$ and $i_{\ell_2}$ split the $c,d,j$ labels into two nonempty sets, each of which needs to sum to a particular value; thus a free label is lost from $c,d,j$.
\end{proof}

\subsection{Proof of Lemma~\ref{lem:ETr}}\label{app:ETr}

\begin{proof}[Proof of Lemma~\ref{lem:ETr}]
For a valid labeling $\mathcal{L}$, we have
$$0 \le \mathbb{E}\prod_v \mathcal{L}(v) \le p^{-27q} (c(\mathcal{L})+1)!$$
using Lemma~\ref{lem:moments}. By Lemma~\ref{lem:rc}, a valid labeling has at most $2(c(\mathcal{L})+1)$ regions (where the $+1$ accommodates the case $c(\mathcal{L})=0$). There are at most $[2(c(\mathcal{L})+1)]^{18q}$ ways to decide which of the $18q$ vertices belong to each region, and $K^{2(c(\mathcal{L})+1)}$ ways to assign $k_v$-values to the regions.

Let $S_{\max} = \max_{abcd} S_{abcd} \le O(p^9)$. Since $S_{abcd} \ge 0$ and $\mathbb{E}\prod_v \mathcal{L}(v) \ge 0$, we have
\begin{align*}
\mathbb{E}[\mathrm{Tr}((\tilde W \tilde W^\top)^q)] &= \mathbb{E}\sum_\mathcal{L} S_\mathcal{L} \prod_v \mathcal{L}(v)\\
&\le S_{\max}^{2q} \sum_{\text{valid }\mathcal{L}} \mathbb{E}\prod_v \mathcal{L}(v)\\
&\le O(p^9)^{2q} \sum_{c=0}^{27q} 3 [2(27q)^2]^c p^{1 + 9q - c/25} \cdot p^{-27q} (c+1)! \cdot (2(c+1))^{18q} K^{2(c+1)}\\
&\le O(1)^q \sum_{c=1}^{27q+1} q^{2c} p^{1 - (c-1)/25} c^c c^{18q} K^{2c}\\
&\le O(1)^q p^2 \sum_{c=0}^\infty (q^3 p^{-1/25} K^2)^c c^{18q}.\\
\end{align*}

Set $q = \log p$. We have a series of the form $\sum_{c=0}^\infty R^c c^N$ with $R = q^3 p^{-1/25} K^2$ and $N = 18q$. Fix $\eta > 0$ and require $K \le p^{1/50-\eta}$ so that $R \le \tilde O(p^{-2\eta})$. Now
$$\sum_{c=0}^\infty R^c c^N = \sum_{c=0}^{20/\eta} R^c c^N + \sum_{c=20/\eta}^\infty R^c c^N \le O(1)^q.$$
Here the second sum was evaluated by noting that the first term is $O(1)^q$ and the ratio between successive terms is
$$R(1+1/c)^N \le \tilde O(p^{-2\eta})(1+\eta/20)^{18\log p} \le \tilde{O}(p^{-2\eta+18\log(1+\eta/20)}) \le \tilde{O}(p^{-\eta}) \le 1/2.$$
\end{proof}

\section{Aside: tensor PCA}\label{app:tensor-pca}

We remark that many popular algorithms for the related problem of tensor PCA (principal component analysis) can also be interpreted as spectral methods on tensor networks. (Third-order) tensor PCA \cite{RM-tensor-pca} is the problem of recovering a unit vector $x \in \mathbb{R}^p$ given the tensor $T = \lambda x^{\otimes 3} + W$ where $\lambda \ge 0$ is a signal-to-noise parameter and $W$ has entries drawn i.i.d.\ from $\mathcal{N}(0,1)$. The problem is statistically possible for $\lambda = O(\sqrt{p})$ but it is expected that polynomial-time algorithms require $\lambda \gtrsim p^{3/4}$. Various methods are known to achieve $\lambda \sim p^{3/4}$; we now briefly describe a few of these and interpret them as spectral methods on tensor networks (see Figure~\ref{fig:tensor-pca}).

\begin{itemize}
    \item {\bf Tensor unfolding}: This method flattens $T$ to a $p \times p^2$ matrix $\tilde T$ and then computes the top eigenvector of $\tilde T \tilde T^\top$. In other words, we compute the top eigenvector of the $p \times p$ matrix described by the tensor network in Figure~\ref{fig:tensor-pca}(a). This method was shown to work when $\lambda \gtrsim p$ by \cite{RM-tensor-pca} and later for $\lambda \gtrsim p^{3/4}$ by \cite{tensor-pca-sos}.
    \item {\bf Spectral SoS}: Inspired by their algorithm based on the sum-of-squares (SoS) hierarchy, \cite{tensor-pca-sos} give the following spectral method. Let $T_i$ denote the $i$th slice of $T$, i.e.\ the $p \times p$ matrix $(T_i)_{jk} = T_{ijk}$. Compute the top eigenvector of the $p^2 \times p^2$ matrix $\sum_i (T_i \otimes T_i)$. This matrix is the $(\{a,b\},\{c,d\})$-flattening of the tensor network in Figure~\ref{fig:tensor-pca}(b). (The above method gives an estimate for $x^{\otimes 2}$, from which $x$ can be extracted via another eigenvector computation.)
    \item {\bf Spectral SoS with partial trace}: In \cite{HSSS-fast-sos}, an improvement to the above method is given. By applying the partial trace operation, they reduce the matrix from $p^2 \times p^2$ to $p \times p$, giving a speedup in runtime. Specifically, they compute the top eigenvector of the matrix $\sum_i \mathrm{Tr}(T_i) T_i$, which is given by the tensor network in Figure~\ref{fig:tensor-pca}(c).
    \item {\bf Homotopy method}: This is a type of local search algorithm analyzed by \cite{homotopy}. A key step in the analysis is to obtain a good initialization. Specifically, they initialize the method with the vector $z_j = \sum_i T_{iij}$, whose tensor network is shown in Figure~\ref{fig:tensor-pca}(d).
\end{itemize}

\begin{figure}\centering
\begin{minipage}{0.8\textwidth}\centering
\begin{tikzpicture}
\tikzset{every loop/.style={}}
\node at (0.25,3.5) {\bf (a)};
\coordinate (top) at (1,3.3) {};
\node (T1) at (1,2.3) {$T$};
\node (T2) at (1,1) {$T$};
\coordinate (bot) at (1,0) {};
\draw (T1) -- (top);
\draw[transform canvas={xshift=-1.5pt}] (T1) -- (T2);
\draw[transform canvas={xshift=1.5pt}] (T1) -- (T2);
\draw (T2) -- (bot);
\begin{scope}[shift={(3,0)}]
\node at (-0.5,3.5) {\bf (b)};
\node (T1) at (1,2.3) {$T$};
\node (T2) at (1,1) {$T$};
\coordinate (a) at ({1-sin(60)},{2.3+cos(60)}) {};
\coordinate (c) at ({1+sin(60)},{2.3+cos(60)}) {};
\coordinate (b) at ({1-sin(60)},{1-cos(60)}) {};
\coordinate (d) at ({1+sin(60)},{1-cos(60)}) {};
\draw (T1) -- (T2);
\draw (T1) -- (a) node [midway, above] {$a$};
\draw (T1) -- (c) node [midway, above] {$c$};
\draw (T2) -- (b) node [midway, below] {$b$};
\draw (T2) -- (d) node [midway, below] {$d$};
\end{scope}
\begin{scope}[shift={(6.5,0)}]
\node at (-0.5,3.5) {\bf (c)};
\node (T1) at (1,2.3) {$T$};
\node (T2) at (1,1) {$T$};
\coordinate (a) at ({1-sin(60)},{2.3+cos(60)}) {};
\coordinate (c) at ({1+sin(60)},{2.3+cos(60)}) {};
\draw (T1) -- (T2);
\draw (T1) -- (a);
\draw (T1) -- (c);
\draw (T2) edge[loop below] (T2);
\end{scope}
\begin{scope}[shift={(9.25,0)}]
\node at (0.25,3.5) {\bf (d)};
\node (T) at (1,1.5) {$T$};
\coordinate (top) at (1,2.5);
\draw (T) -- (top);
\draw (T) edge[loop below] (T);
\end{scope}
\end{tikzpicture}
\caption{Tensor networks used in prior work for tensor PCA. {\bf (a)} Tensor unfolding. {\bf (b)} Spectral SoS: $(\{a,b\},\{c,d\})$-flattening. {\bf (c)} Spectral SoS with partial trace. {\bf (d)} Initialization for homotopy method.}
\label{fig:tensor-pca}
\end{minipage}
\end{figure}
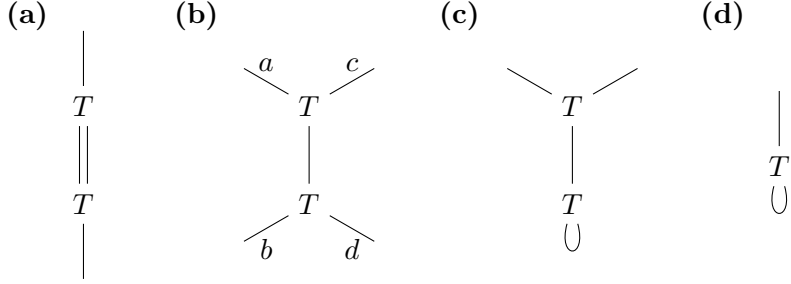

\section{Design of tensor network}
\label{app:design}

In this section we explain some of the considerations that went into the particular design of the tensor network depicted in Figure~\ref{fig:net-9}. In particular, we explain why various simpler tensor networks would not work for our purposes.

Let us first explain why the tensor network in Figure~\ref{fig:net-intro}(c) (which was used for random overcomplete tensor decomposition \cite{HSSS-fast-sos}) fails for continuous MRA. Recall that our input tensor $\hat T_{ijk}$ is supported on entries where $i+j+k=0$. We cannot have two output legs (e.g.\ $a,c$) connected to a single copy of $T$ because e.g.\ if we take $a=c=p/2$ then for all values $i \in \pm [p/2]$ for the third index, $T_{aci} = 0$. For this reason, the tensor network in Figure~\ref{fig:net-intro}(c) produces a matrix for which a constant fraction of entries are zero.

While \cite{HSSS-fast-sos} takes the random guess $u$ to be a vector, we now justify why in our case $u$ needs to be a higher order tensor. By summing the zero-sum constraints over all vertices in Figure~\ref{fig:net-mra}, we obtain $a+b+c+d+j_1+\cdots+j_5=0$. Thus, for the $abcd$ entry to be nonzero, there must exist $j_1,\ldots,j_5 \in \pm [p/2]$ with sum $-(a+b+c+d)$. Note that for $a=b=c=d=p/2$, this would not be possible if there were fewer than four $j$'s (i.e.\ if $u$ had order less than four). For technical convenience, we choose to take $u$ of order 5 rather than 4 so that each entry $abcd$ has many possible settings for the $j$'s.

The above considerations turn out to be sufficient for handling the signal term (treated in Section~\ref{sec:signal}). So long as we take the above precautions to avoid having many zero entries in our matrix, we can show that if we were to guess the correct $u$ then (after a suitable correction) our matrix is close to the rank-1 signal we desire.

There are a few subtler considerations involved in spectrally bounding the noise term (treated in Section~\ref{sec:noise}). The noise term comes from the ``bad'' component of $u$ that is orthogonal to the correct guess $(\theta^k)^{\otimes 5}$. As explained in Section~\ref{sec:noise}, by using the trace moment method, the analysis boils down to a combinatorial question about labelings of an expanded tensor network $\mathcal{G}_q$ made up of $2q = 2 \log p$ copies of the original network (shown in Figure~\ref{fig:mra-trace}). Roughly speaking, the combinatorial question is as follows. Consider edge-labelings of $\mathcal{G}_q$ for which the three labels incident to each vertex must sum to zero. Suppose the number of such labelings is (roughly) $p^{f(0)}$, i.e.\ $f(0)$ is the number of \emph{free labels} that can take any value, and then the remaining labels are uniquely determined by the zero-sum constraints. Now consider a partition of the vertices of $\mathcal{G}_q$ and call each part of the partition a \emph{region} (regions need not be connected in the graph). We require regions to obey the following constraint: the multiset of labels incident to all vertices in a region must have an equal number of copies of $i$ and $-i$ (for any $i$). Since edges internal to the region preserve this balance (one end is $i$ and the other is $-i$), we only need to ensure that the balance is correct for edges on the boundary. Imagine we fix the regions and then count the number of valid labelings; due to the constraints imposed by regions, the number of free labels may now be less than $f(0)$. Let $f(r)$ be the maximum number of free labels over partitions with exactly $r$ (nonempty) regions. With the above setup in mind, the key combinatorial question is captured by the following informal claim.
\begin{claim}[informal]
For heterogeneous continuous MRA, if a tensor network satisfies $f(r) \le f(0) - \delta r$ for all $r$ then the noise term is spectrally bounded provided $K \lesssim p^\delta$.
\end{claim}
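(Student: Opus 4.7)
My plan is to mirror the trace-moment-method analysis in Section~\ref{sec:noise}, but to organize the counting so that the exponent $\delta$ in the hypothesis $f(r)\le f(0)-\delta r$ drives the heterogeneity tolerance. As there, I apply Theorem~\ref{thm:trace-method} with $q=\log p$ to the random matrix $\tilde W$ built from the given tensor network, which reduces everything to bounding $\mathbb{E}[\mathrm{Tr}((\tilde W\tilde W^\top)^q)]$. Expanding as a sum over valid labelings of the expanded network $\mathcal{G}_q$, the question is reduced to a combinatorial count of labelings weighted by their (nonnegative) Gaussian moments and their $S_\mathcal{L}$ factors.

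The key step is to stratify the sum by the number of regions $r=r(\mathcal{L})$ determined by the vertex labels $k_v$. For a fixed partition of the $18q$ vertices of $\mathcal{G}_q$ into $r$ nonempty regions, the hypothesis $f(r)\le f(0)-\delta r$ bounds the number of valid edge-labelings compatible with that partition by at most $p^{f(0)-\delta r}$, up to subpolynomial overhead coming from the initial choice of which labels are free; this is exactly the role that Lemmas~\ref{lem:rc} and~\ref{lem:count-labelings} play in our specific network. The number of partitions of $18q$ vertices into $r$ parts is at most $r^{18q}$, and once a partition is fixed there are at most $K^r$ ways to attach distinct values $k_v\in[K]$ to the regions. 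The Gaussian moment $\mathbb{E}\prod_v \mathcal{L}(v)$ is bounded as in Lemma~\ref{lem:ETr} by $p^{-27q}(c(\mathcal{L})+1)!$, and $S_\mathcal{L}\le O(p^9)^{2q}$ as before.

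Assembling these pieces, the trace moment is at most (schematically)
\[
O(p^{9})^{2q}\cdot p^{-27q}\sum_{r\ge 1} r^{18q}\,K^r\, p^{f(0)-\delta r}\,(c+1)!,
\]
where the $p^{f(0)}$ factor together with $S_{\max}^{2q}p^{-27q}$ yields $O(1)^q p^{2}$ in the regime of Lemma~\ref{lem:ETr}. Writing $K=p^{\delta(1-\eta)}$ for a small $\eta>0$, the tail $\sum_r r^{18q}(K p^{-\delta})^r = \sum_r r^{18q} p^{-\delta\eta r}$ is geometric for $r\gtrsim 1/\eta$ and is dominated by its head for $q=\log p$, producing only an $O(1)^q$ overall factor, exactly as in the tail estimate of Lemma~\ref{lem:ETr}. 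Taking the $(2q)$-th root gives $\|\tilde W\|\le O(1)$ with high probability over $\{\theta^k\}$, and then the random-contraction bound (Theorem~\ref{thm:MSS-rand-contract}) converts this into $\|M(T,\tilde u)\|\le O(\sqrt{\log p})$ with high probability over $\tilde u$.

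The main obstacle is the combinatorial bookkeeping that turns the abstract function $f(r)$ into a usable bound for an arbitrary tensor network. In particular, the definition of $f(r)$ must correctly capture the way regions convert $\mathrm{SO}(2)$ balance into constraints on edge labels (each boundary edge forcing $i\leftrightarrow -i$ pairing), and some of these constraints may be redundant; a Lemma~\ref{lem:free}-style case analysis tailored to the particular network is what rules out degenerate collapses (for instance, the $a=-b$ case that forced us to zero out certain $S_{abcd}$) that would otherwise make $f(r)$ fail to drop linearly in $r$. A secondary technical point is that the $r$-independent overhead -- the $r^{18q}$ factor, the $(c+1)!$ from the Gaussian moments, and the factors coming from the structure of repeated labels -- must remain $O(1)^q$ after truncating the $r$-sum, so that it is swallowed by taking the $(2q)$-th root; this is the same truncation trick used in the proof of Lemma~\ref{lem:ETr} and is the place where $\eta>0$ slack (and hence the strict inequality $K\lesssim p^\delta$) is consumed.
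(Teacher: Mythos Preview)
Your plan is essentially the correct abstraction of the paper's argument; note that the paper does not give a standalone proof of this informal claim, but rather proves the specific instance via Proposition~\ref{prop:noise-u} and Lemmas~\ref{lem:rc}, \ref{lem:count-labelings}, \ref{lem:ETr}, and the claim is just a summary of that mechanism.

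One organizational difference is worth flagging. You stratify the sum over valid labelings by the number of regions $r=r(\mathcal{L})$ and invoke the hypothesis $f(r)\le f(0)-\delta r$ directly. The paper instead stratifies by the number of repeated edge labels $c=c(\mathcal{L})$, bounds the edge-labeling count by Lemma~\ref{lem:count-labelings}, and only then converts to regions via $r\le 2(c+1)$ (Lemma~\ref{lem:rc}) to control the $K^r$ factor. The reason the paper's ordering is cleaner is exactly the $(c+1)!$ factor from the Gaussian moment: this depends on the edge-labeling, not on $r$, so in your schematic display it is not well-defined inside a sum over $r$ alone (and it is certainly not ``$r$-independent overhead'' as you call it). If you stratify by $r$ you must further stratify each $r$-stratum by $c$ to control this factor, at which point you are essentially back to the paper's bookkeeping. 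This is a fixable gap rather than a wrong idea, but as written your displayed bound does not typecheck, and the tail estimate you borrow from Lemma~\ref{lem:ETr} is really a sum over $c$, not $r$.
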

In this paper we only show that this holds for some $\delta > 0$ and do not attempt to optimize the constant $\delta$. In particular, we need to ensure that it is not possible to ``pay'' only $O(1)$ free labels to create $\Omega(q)$ regions. This requirement dictates a few additional properties of our tensor network, as we now describe.

The reason we need to set some $S_{abcd}$ to zero is because otherwise we can pay a single free label in order to set $a^{(1)} = -b^{(1)}$, and this propagates automatically  to $a^{(\ell)} = -b^{(\ell)}$ at every layer, allowing each layer to be its own region.

It is essential that our tensor network has a cycle because otherwise it would need to have a $T$ that is adjacent to only one other $T$. This causes each layer of $\mathcal{G}_q$ to have two vertices with a double-edge between them. These two vertices can be their own region without losing any free labels.

One can check that the particular tensor network used in this paper cannot achieve $\delta$ larger than $2/9$. To see this, create 9 regions (each with two vertices) in each layer by pairing each vertex in the $i$ ring with the corresponding vertex in the $\tilde i$ ring. This configuration costs 2 free labels per layer (e.g.\ set $a^{(1)} = a^{(2)}$ and $i_1 = -\tilde i_1$).

We can also see that no tensor network will be able to achieve $\delta$ larger than $1/2$ (which matches the conjectured optimal heterogeneity achievable by efficient algorithms\footnote{We conjecture this based on both analogy to discrete MRA \cite{mra-het} and analogy to tensor completion \cite{alex-thesis}.}). Imagine splitting $\mathcal{G}_q$ into two connected subgraphs (each consisting of half the layers) and pairing each vertex in the first subgraph with the corresponding vertex in the second subgraph (to create regions of size 2 that are not connected). The number of regions is the number $N$ of vertices in one subgraph, and the number of free labels lost is the number of free labels in one subgraph, which is $\sim N/2$ (since there are $\sim 3N/2$ edges and $\sim N$ zero-sum constraints).

\section*{Acknowledgements}

This work was inspired by ideas of Amelia Perry, who hoped to use tensor networks to find improved sum-of-squares algorithms for the planted clique problem (before this was shown to be impossible \cite{sos-clique}). Tragically, Amelia passed away in January, 2018.

\bibliographystyle{alpha}
\bibliography{main}

\newcommand{\etalchar}[1]{$^{#1}$}
\begin{thebibliography}{ZvdHGG03}

\bibitem[ADGM16]{homotopy}
Anima Anandkumar, Yuan Deng, Rong Ge, and Hossein Mobahi.
\newblock Homotopy analysis for tensor {PCA}.
\newblock {\em arXiv preprint arXiv:1610.09322}, 2016.

\bibitem[ADLM84]{AdrDubLep84}
Marc Adrian, Jacques Dubochet, Jean Lepault, and Alasdair~W. McDowall.
\newblock Cryo-electron microscopy of viruses.
\newblock {\em Nature}, 308(5954):32--36, 1984.

\bibitem[AFH{\etalchar{+}}12]{anandkumartopic}
Anima Anandkumar, Dean~P Foster, Daniel~J Hsu, Sham~M Kakade, and Yi-Kai Liu.
\newblock A spectral algorithm for latent dirichlet allocation.
\newblock In {\em Advances in Neural Information Processing Systems}, pages
  917--925, 2012.

\bibitem[AGHK14]{anandkumarcommunity}
Animashree Anandkumar, Rong Ge, Daniel Hsu, and Sham~M Kakade.
\newblock A tensor approach to learning mixed membership community models.
\newblock {\em The Journal of Machine Learning Research}, 15(1):2239--2312,
  2014.

\bibitem[AGJ14]{tensor-power-1}
Anima Anandkumar, Rong Ge, and Majid Janzamin.
\newblock Analyzing tensor power method dynamics: Applications to learning
  overcomplete latent variable models.
\newblock {\em CoRR abs/1411.1488}, 17, 2014.

\bibitem[AGJ17]{tensor-power-2}
Animashree Anandkumar, Rong Ge, and Majid Janzamin.
\newblock Analyzing tensor power method dynamics in overcomplete regime.
\newblock {\em Journal of Machine Learning Research}, 18(22):1--40, 2017.

\bibitem[APS17]{AbbPerSin17}
Emmanuel Abbe, Jo{\~a}o~M. Pereira, and Amit Singer.
\newblock Sample complexity of the {B}oolean multireference alignment problem.
\newblock {\em arXiv preprint arXiv:1701.07540}, 2017.

\bibitem[APS18]{group-channel}
Emmanuel Abbe, Jo{\~a}o~M Pereira, and Amit Singer.
\newblock Estimation in the group action channel.
\newblock {\em arXiv preprint arXiv:1801.04366}, 2018.

\bibitem[Ban15]{afonso-thesis}
Afonso~S. Bandeira.
\newblock {\em Convex Relaxations for Certain Inverse Problems on Graphs}.
\newblock PhD thesis, Princeton University, June 2015.

\bibitem[BB17]{tensor-net}
Jacob Biamonte and Ville Bergholm.
\newblock Tensor networks in a nutshell.
\newblock {\em arXiv preprint arXiv:1708.00006}, 2017.

\bibitem[BBLS17]{mra-het}
Nicolas Boumal, Tamir Bendory, Roy~R Lederman, and Amit Singer.
\newblock Heterogeneous multireference alignment: a single pass approach.
\newblock {\em arXiv preprint arXiv:1710.02590}, 2017.

\bibitem[BBM{\etalchar{+}}17]{mra-bispectrum}
Tamir Bendory, Nicolas Boumal, Chao Ma, Zhizhen Zhao, and Amit Singer.
\newblock Bispectrum inversion with application to multireference alignment.
\newblock {\em IEEE Transactions on Signal Processing}, 66(4):1037--1050, 2017.

\bibitem[BBSP{\etalchar{+}}17]{orbit-recovery}
Afonso~S Bandeira, Ben Blum-Smith, Amelia Perry, Jonathan Weed, and Alexander~S
  Wein.
\newblock Estimation under group actions: recovering orbits from invariants.
\newblock {\em arXiv preprint arXiv:1712.10163}, 2017.

\bibitem[BCS15]{nug}
Afonso~S. Bandeira, Yutong Chen, and Amit Singer.
\newblock Non-unique games over compact groups and orientation estimation in
  cryo-{EM}.
\newblock {\em arXiv preprint arXiv:1505.03840}, 2015.

\bibitem[BCSZ14]{mra-sdp}
Afonso~S. Bandeira, Moses Charikar, Amit Singer, and Andy Zhu.
\newblock Multireference alignment using semidefinite programming.
\newblock In {\em Proceedings of the 5th Conference on Innovations in
  Theoretical Computer Science}, pages 459--470. ACM, 2014.

\bibitem[BHK{\etalchar{+}}16]{sos-clique}
Boaz Barak, Samuel~B Hopkins, Jonathan Kelner, Pravesh Kothari, Ankur Moitra,
  and Aaron Potechin.
\newblock A nearly tight sum-of-squares lower bound for the planted clique
  problem.
\newblock In {\em Foundations of Computer Science (FOCS), 2016 IEEE 57th Annual
  Symposium on}, pages 428--437. IEEE, 2016.

\bibitem[BRW17]{BanRigWee17}
Afonso Bandeira, Philippe Rigollet, and Jonathan Weed.
\newblock Optimal rates of estimation for multi-reference alignment.
\newblock {\em arXiv preprint arXiv:1702.08546}, 2017.

\bibitem[BZS15]{cryo-orthogonal}
Tejal Bhamre, Teng Zhang, and Amit Singer.
\newblock Orthogonal matrix retrieval in cryo-electron microscopy.
\newblock In {\em Biomedical Imaging (ISBI), 2015 IEEE 12th International
  Symposium on}, pages 1048--1052. IEEE, 2015.

\bibitem[CZZ18]{spectral-mra}
Hua Chen, Mona Zehni, and Zhizhen Zhao.
\newblock A spectral method for stable bispectrum inversion with application to
  multireference alignment.
\newblock {\em IEEE Signal Processing Letters}, 25(7):911--915, 2018.

\bibitem[Dia92]{Dia92}
Robert Diamond.
\newblock On the multiple simultaneous superposition of molecular structures by
  rigid body transformations.
\newblock {\em Protein Science}, 1(10):1279--1287, October 1992.

\bibitem[GM15]{GM-tensor-sos}
Rong Ge and Tengyu Ma.
\newblock Decomposing overcomplete 3rd order tensors using sum-of-squares
  algorithms.
\newblock {\em arXiv preprint arXiv:1504.05287}, 2015.

\bibitem[HK13]{hsukakade}
Daniel Hsu and Sham~M Kakade.
\newblock Learning mixtures of spherical gaussians: moment methods and spectral
  decompositions.
\newblock In {\em Proceedings of the 4th conference on Innovations in
  Theoretical Computer Science}, pages 11--20. ACM, 2013.

\bibitem[HSS15]{tensor-pca-sos}
Samuel~B Hopkins, Jonathan Shi, and David Steurer.
\newblock Tensor principal component analysis via sum-of-square proofs.
\newblock In {\em Conference on Learning Theory}, pages 956--1006, 2015.

\bibitem[HSSS16]{HSSS-fast-sos}
Samuel~B Hopkins, Tselil Schramm, Jonathan Shi, and David Steurer.
\newblock Fast spectral algorithms from sum-of-squares proofs: tensor
  decomposition and planted sparse vectors.
\newblock In {\em Proceedings of the forty-eighth annual ACM symposium on
  Theory of Computing}, pages 178--191. ACM, 2016.

\bibitem[JO14]{jainoh}
Prateek Jain and Sewoong Oh.
\newblock Learning mixtures of discrete product distributions using spectral
  decompositions.
\newblock In {\em Conference on Learning Theory}, pages 824--856, 2014.

\bibitem[Moi18]{moitra-notes}
Ankur Moitra.
\newblock {\em Algorithmic Aspects of Machine Learning}.
\newblock Cambridge University Press, 2018.

\bibitem[MR05]{mosselroch}
Elchanan Mossel and S{\'e}bastien Roch.
\newblock Learning nonsingular phylogenies and hidden markov models.
\newblock In {\em Proceedings of the thirty-seventh annual ACM symposium on
  Theory of computing}, pages 366--375. ACM, 2005.

\bibitem[MSS16]{MSS-tensor-sos}
Tengyu Ma, Jonathan Shi, and David Steurer.
\newblock Polynomial-time tensor decompositions with sum-of-squares.
\newblock In {\em Foundations of Computer Science (FOCS), 2016 IEEE 57th Annual
  Symposium on}, pages 438--446. IEEE, 2016.

\bibitem[Nog16]{nog-cryo}
Eva Nogales.
\newblock The development of cryo-{EM} into a mainstream structural biology
  technique.
\newblock {\em Nature {M}ethods}, 13(1):24--27, 2016.

\bibitem[PS17]{PS-tensor-completion}
Aaron Potechin and David Steurer.
\newblock Exact tensor completion with sum-of-squares.
\newblock {\em arXiv preprint arXiv:1702.06237}, 2017.

\bibitem[PWB{\etalchar{+}}17]{PerWeeBan17}
Amelia Perry, Jonathan Weed, Afonso Bandeira, Philippe Rigollet, and Amit
  Singer.
\newblock The sample complexity of multi-reference alignment.
\newblock {\em arXiv preprint arXiv:1707.00943}, 2017.

\bibitem[PWBM16]{pwbm-amp}
Amelia Perry, Alexander~S. Wein, Afonso~S. Bandeira, and Ankur Moitra.
\newblock Message-passing algorithms for synchronization problems over compact
  groups.
\newblock {\em arXiv preprint arXiv:1610.04583}, 2016.

\bibitem[PZAF05]{PitZurAmo05}
R.~Gil Pita, M.~Rosa Zurera, P.~Jarabo Amores, and F.~L{\'o}pez Ferreras.
\newblock Using multilayer perceptrons to align high range resolution radar
  signals.
\newblock In {\em Artificial Neural Networks: Formal Models and Their
  Applications - ICANN 2005}, pages 911--916. Springer Berlin Heidelberg, 2005.

\bibitem[RH18]{rig-notes}
Philippe Rigollet and Jan-Christian H\"utter.
\newblock High-dimensional statistics.
\newblock {\em Lecture notes}, 2018.

\bibitem[RM14]{RM-tensor-pca}
Emile Richard and Andrea Montanari.
\newblock A statistical model for tensor {PCA}.
\newblock In {\em Advances in Neural Information Processing Systems}, pages
  2897--2905, 2014.

\bibitem[Sin11]{singer-angular}
Amit Singer.
\newblock Angular synchronization by eigenvectors and semidefinite programming.
\newblock {\em Applied and Computational Harmonic Analysis}, 30(1):20--36,
  2011.

\bibitem[SS11]{ss-cryo}
Amit Singer and Yoel Shkolnisky.
\newblock Three-dimensional structure determination from common lines in
  cryo-{EM} by eigenvectors and semidefinite programming.
\newblock {\em SIAM Journal on Imaging Sciences}, 4(2):543--572, 2011.

\bibitem[SS12]{hyp}
Warren Schudy and Maxim Sviridenko.
\newblock Concentration and moment inequalities for polynomials of independent
  random variables.
\newblock In {\em Proceedings of the twenty-third annual ACM-SIAM symposium on
  Discrete Algorithms}, pages 437--446. Society for Industrial and Applied
  Mathematics, 2012.

\bibitem[TS12]{TheSte12}
Douglas~L. Theobald and Phillip~A. Steindel.
\newblock Optimal simultaneous superpositioning of multiple structures with
  missing data.
\newblock {\em Bioinformatics}, 28(15):1972--1979, 2012.

\bibitem[Wei18]{alex-thesis}
Alexander~S. Wein.
\newblock {\em Statistical Estimation in the Presence of Group Actions}.
\newblock PhD thesis, Massachusetts Institute of Technology, June 2018.

\bibitem[ZvdHGG03]{ZwaHeiGel03}
J.~P. Zwart, R.~van~der Heiden, S.~Gelsema, and F.~Groen.
\newblock Fast translation invariant classification of {HRR} range profiles in
  a zero phase representation.
\newblock {\em Radar, Sonar and Navigation, IEE Proceedings}, 150(6):411--418,
  2003.

\end{thebibliography}

\end{document}